\keywords{integer-base systems; automata; recognisable sets; periodic sets: least significant digit first encodings}
\let\c@table\c@figure
\let\ftype@table\ftype@figure
\newlist{itemizeinline}{itemize*}{2}
\setlist[itemizeinline]{before={ },afterlabel={},label=,itemjoin={;{ }{ }},itemjoin*={;{ }and{ }{ }}}
\newlist{subthm}{enumerate}{2}
\setlist[subthm]{itemsep={0.5ex plus 0.25ex minus 0.5ex},parsep=0pt,topsep={0.5ex plus 0.25ex minus 0.5ex},label={\upshape(\roman{*})},ref={\thethm(\roman{*})},left=-1.5ex}
\theoremstyle{plain}
\newcommandx{\newtheoremy}[3][2={}]{
  \ifthenelse{\equal{#2}{}}{
    \ifcsmacro{#1}{}{\newtheorem{#1}{#3}}
  }{
    \ifcsmacro{#1}{}{\newtheorem{#1}[#2]{#3}}
  }
}
\newcommand{\thmBlockFont}[1]{#1}
\newcommand\ifcounter[3]{\@ifundefined{c@#1}{#3}{#2}}
\renewcommand{\theclaim}{\thethm.\arabic{claim}}
\newtheorem*{falsestatementX}{\thmBlockFont{\thestatement}}
\newenvironment{falsestatement}[1]{\def\thestatement{#1}\begin{falsestatementX}}{\end{falsestatementX}\let\thestatement\relax}
\newcommand{\lcorollary}[1]{\label{c.#1}}
\newcommand{\ldefinition}[1]{\label{d.#1}}
\newcommand{\llemma}[1]{\label{l.#1}}
\newcommand{\lproblem}[1]{\label{pb.#1}}
\newcommand{\lproposition}[1]{\label{p.#1}}
\newcommand{\lproperty}[1]{\label{pp.#1}}
\newcommand{\lsection}[1]{\label{s.#1}}
\newcommand{\ltable}[1]{\label{t.#1}}
\newcommand{\lfigure}[1]{\label{f.#1}}
\newcommand{\ltheorem}[1]{\label{t.#1}}
\newcommand{\lequation}[1]{\label{eq.#1}}
\newcommand{\lclaim}[1]{\label{cl.#1}}
\newcommand{\preprocgenref}[2]{}
\newcommand{\generalref}[2]{%
  \preprocgenref{#1}{#2}%
  \ifthenelse{\equal{#1}{eq}}%
  {(\ref{#1.#2})}%
  {\ref{#1.#2}}%
}
\newcommand{\generalpageref}[2]{\pageref{#1.#2}}
\newcommand*{\ralgorithm}{\@ifstar{\generalref{a}}{Algorithm~\ralgorithm*}}
\newcommand*{\palgorithm}{\@ifstar{\generalpageref{a}}{page~\palgorithm*}}
\newcommand*{\rcorollary}{\@ifstar{\generalref{c}}{Corollary~\rcorollary*}}
\newcommand*{\pcorollary}{\@ifstar{\generalpageref{c}}{page~\pcorollary*}}
\newcommand*{\rconjecture}{\@ifstar{\generalref{cj}}{Conjecture~\rconjecture*}}
\newcommand*{\pconjecture}{\@ifstar{\generalpageref{cj}}{page~\pconjecture*}}
\newcommand*{\rdefinition}{\@ifstar{\generalref{d}}{Definition~\rdefinition*}}
\newcommand*{\pdefinition}{\@ifstar{\generalpageref{d}}{page~\pdefinition*}}
\newcommand*{\rexample}{\@ifstar{\generalref{e}}{Example~\rexample*}}
\newcommand*{\pexample}{\@ifstar{\generalpageref{e}}{page~\pexample*}}
\newcommand*{\rlemma}{\@ifstar{\generalref{l}}{Lemma~\rlemma*}}
\newcommand*{\plemma}{\@ifstar{\generalpageref{l}}{page~\plemma*}}
\newcommand*{\rproblem}{\@ifstar{\generalref{pb}}{Problem~\rproblem*}}
\newcommand*{\pproblem}{\@ifstar{\generalpageref{pb}}{page~\pproblem*}}
\newcommand*{\rproposition}{\@ifstar{\generalref{p}}{Proposition~\rproposition*}}
\newcommand*{\pproposition}{\@ifstar{\generalpageref{p}}{page~\pproposition*}}
\newcommand*{\rproperty}{\@ifstar{\generalref{pp}}{Property~\rproperty*}}
\newcommand*{\pproperty}{\@ifstar{\generalpageref{pp}}{page~\pproperty*}}
\newcommand*{\rprocedure}{\@ifstar{\generalref{pc}}{Procedure~\rprocedure*}}
\newcommand*{\pprocedure}{\@ifstar{\generalpageref{pc}}{page~\pprocedure*}}
\newcommand*{\rremark}{\@ifstar{\generalref{r}}{Remark~\rremark*}}
\newcommand*{\premark}{\@ifstar{\generalpageref{r}}{page~\premark*}}
\newcommand*{\rnotation}{\@ifstar{\generalref{n}}{Notation~\rnotation*}}
\newcommand*{\pnotation}{\@ifstar{\generalpageref{n}}{page~\pnotation*}}
\newcommand*{\rsection}{\@ifstar{\generalref{s}}{Section~\rsection*}}
\newcommand*{\psection}{\@ifstar{\generalpageref{s}}{page~\psection*}}
\newcommand*{\rtable@}{\@ifstar{\generalref{t}}{Table~\rtable*}}
\newcommand*{\rtable}{\protect\rtable@}
\newcommand*{\ptable}{\@ifstar{\generalpageref{t}}{page~\ptable*}}
\newcommand*{\rfigure}{\@ifstar{\generalref{f}}{Figure~\rfigure*}}
\newcommand*{\pfigure}{\@ifstar{\generalpageref{f}}{page~\pfigure*}}
\newcommand*{\requation}{\@ifstar{\generalref{eq}}{Equation~\requation*}}
\newcommand*{\pequation}{\@ifstar{\generalpageref{eq}}{page~\pequation*}}
\newcommand*{\rtheorem}{\@ifstar{\generalref{t}}{Theorem~\rtheorem*}}
\newcommand*{\ptheorem}{\@ifstar{\generalpageref{t}}{page~\ptheorem*}}
\newcommand*{\rclaim}{\@ifstar{\generalref{cl}}{Claim~\rclaim*}}
\newcommand*{\pclaim}{\@ifstar{\generalpageref{cl}}{page~\pclaim*}}
 \newdimen\bk@hauteurcourrante%
  \newdimen\bk@hauteursuivante%
  \newdimen\bk@tempdim%
\newenvironment{leftbar}{%
  \def\bk@espace{ }%
  \def\pt@to@bp##1{##1=.99627393548##1}
  \def\bkvz@before@breakbox{\ifhmode\par\fi\bk@hauteurcourrante=1200bp}%
  \def\bkvz@set@linewidth{\advance\linewidth-0.5\parindent}%
  \def\bkvz@left{\hskip 1pt\vrule\@width 0.5pt\hskip0.5\parindent\hskip -1.5pt}
  \let\bkvz@right\relax
  \let\bkvz@top\relax
  \let\bkvz@bottom\relax
  \breakbox}{\endbreakbox}
\newenvironment{proofwithbar}[1][Proof of the claim]{\begin{leftbar}\noindent{\itshape#1}.~}{\end{leftbar}}
\newcommandx{\wlen}[1]{|#1|}
\newcommandx{\cod}[2][2={}]{\ifthenelse{\equal{#2}{}}{\langle#1\rangle}{\langle#1\rangle_{#2}}}
\newcommandx{\floor}[1]{\lfloor#1\rfloor}
\newcommandx{\bfloor}[1]{\left\lfloor#1\right\rfloor}
\newcommandx{\bceil}[1]{\left\lceil#1\right\rceil}
\newcommandx{\ceil}[1]{\lceil#1\rceil}
\newcommandx{\newcommandy}[5][1=i,3=0,4={}]{%
  \ifthenelse{\isundefined{#2}}{\newcommandx{#2}[#3][#4]{#5}}{%
      \ifthenelse{\equal{#1}{i}}{}{}%
      \ifthenelse{\equal{#1}{o}}{\renewcommandx{#2}[#3][#4]{#5}}{}%
    }%
}
\newcommandx{\yrightarrow}[4][1=\empty, 2=\empty, 4=\empty, usedefault=@]{%
  \ifthenelse{\equal{#1}{\empty}}%
  {
    \xrightarrow{~\adjustbox{raise={-#4}{\height-#4}{0pt},trim=0pt 0pt 0pt 1pt}{\ensuremath{\scriptstyle#3}}~}%
  }{
    \adjustbox{trim=0pt 2pt 0pt 0pt}{\ensuremath{\xrightarrow
    [\,~\adjustbox{scale=0.9,raise={#2}{\height}}{\ensuremath{\scriptstyle#1}}~\,]
    {~\adjustbox{raise={-#4}{\height-#4}{0pt},trim=0pt 0pt 0pt 1pt}{\ensuremath{\scriptstyle#3}}~}}}%
  }%
}
\newcommandx{\ylefttarrow}[4][1=\empty, 2=\empty, 4=\empty, usedefault=@]{%
  \ifthenelse{\equal{#1}{\empty}}%
  {
    \xleftarrow{~\adjustbox{raise={-#4}{\height-#4}{0pt},trim=0pt 0pt 0pt 1pt}{\ensuremath{\scriptstyle#3}}~}%
  }{
    \adjustbox{trim=0pt #2 0pt 0pt}{\ensuremath{\xleftarrow
    [~\adjustbox{scale=0.9,raise={#2}{\height}}{\ensuremath{\scriptstyle#1}}~]
    {~\adjustbox{raise={-#4}{\height-#4}{0pt},trim=0pt 0pt 0pt 1pt}{\scriptstyle#3}~}}}%
  }%
}
\newcommand*{\minwidthbox}[2]{%
  \makebox[{\ifdim#2<\width\width\else#2\fi}]{#1}%
}
\newlength{\vm@xmd@d}
\newlength{\vm@xmd@n}
\newlength{\vm@xmd@s}
\newlength{\vm@xmd@ss}
\newcommand{\val}[1]{\widebar{#1}}
\newcommand{\card}[1]{\texttt{Card}(#1)} 
\newcommand{\strong}[1]{\textbf{#1}}
\newcommand{\set}[1]{%
  \left\{\mathchoice%
  {\halfspace#1\halfspace}%
  {\thirdspace#1\thirdspace}%
  {#1}%
  {#1}\right\}%
}
\newcommand{\setq}[2]{\left\{\halfspace#1~\middle|~#2\halfspace\right\}}
\newcommand{\widebar}{\overline}
\newcommand{\nlb}{\nolinebreak}
\renewcommand{\thmBlockFont}[1]{\ssc{#1}}
\newcommandx{\newcommandWithStar}[3][1=i]{%
  \newcommandy[#1]{#2}{\protect\@ifstar{\leavevmode\protect\nlb$\protect#3$}{#3}}
}
\newcommand{\vm@date@separator}{\hspace*{0.15ex}\rule[0.4\vm@date@height]{1ex}{0.07\vm@date@height}\hspace*{0.15ex}}
\newcommand{\vmdatefont}[1]{#1}
\newcommand{\isotoday}{%
  \vmdatefont{
    \newdimen\vm@date@height%
    \setbox0=\hbox{0123456789}%
    \vm@date@height=\ht0 \advance\vm@date@height by -\dp0 
    \the\year\vm@date@separator\two@digits{\month}\vm@date@separator\two@digits{\day}%
  }%
}
\newcommand{\vmfbox}[1]{{%
  \fboxsep=0pt%
  \ifmmode%
    \mathchoice%
      {\fbox{$\displaystyle#1$}}%
      {\fbox{$\textstyle#1$}}%
      {\fbox{$\scriptstyle#1$}}%
      {\fbox{$\scriptscriptstyle#1$}}%
  \else%
    \fbox{#1}%
  \fi%
}}
\newcommand{\halfspace}{\hspace{0.5\fontdimen2\font plus 0.5\fontdimen3\font 
minus 0.5\fontdimen4\font}}
\newcommand{\thirdspace}{\hspace{0.33\fontdimen2\font plus 0.33\fontdimen3\font 
minus 0.33\fontdimen4\font}}
\newcommand{\fourthspace}{\hspace{0.25\fontdimen2\font plus 0.25\fontdimen3\font 
minus 0.25\fontdimen4\font}}
\newcommand{\fifthspace}{\hspace{0.2\fontdimen2\font plus 0.2\fontdimen3\font 
minus 0.2\fontdimen4\font}}
\newcommand{\divides}{\halfspace|\halfspace}
\newcommand\Tstrut{\rule{0pt}{2.6ex}}         
\newcommand\Bstrut{\rule[-0.9ex]{0pt}{0pt}}   
\renewcommand{\thmBlockFont}[1]{#1}
\theoremstyle{plain}
\newtheorem{properties}[thm]{Properties}
\newtheorem*{runex}{{\normalfont\textit{Running example}}}
\renewcommand{\leq}{\leqslant}
\renewcommand{\geq}{\geqslant}
\renewcommand{\phi}{\varphi}
\renewcommand{\epsilon}{\varepsilon}
\renewcommand{\mod}{\text{~mod~}}
\newtheorem*{drawconv}{\thmBlockFont{Drawing Convention}}
\theoremstyle{plain}
\newtheorem*{theorem*}{Theorem}
\newcommand{\nplusun}{\hspace*{1.2pt plus 2pt}{+}\hspace*{0.6pt plus 2pt}1}
\newcommand{\nmoinsun}{\hspace*{1.2pt plus 2pt}{-}\hspace*{0.6pt plus 2pt}1}
\newcommand{\vmiminus}{\scalebox{0.75}[1.0]{\scriptsize$-$}}
\newcommand{\vmiplus}{\scalebox{0.75}[0.75]{\scriptsize$+$}}
\newcommand{\iplusun}{\hspace*{0.4pt}{\vmiplus}1}
\newcommand{\imoinsun}{\hspace*{0.4pt}{\vmiminus}1}
\newcommand{\po}{\mathchoice{+1}{\nplusun}{\iplusun}{+1}}
\newcommand{\mo}{\mathchoice{-1}{\nmoinsun}{\imoinsun}{-1}}
\newcommand{\trianglebullet}{\hspace*{.05ex}\raisebox{.1ex}{$\triangleright$}}
\newcommandx{\condfun}[2][2={\Ac}]{\ifstrempty{#1}{\gamma_{#2}}{\gamma_{#2}(#1)}} 
\newcommand{\cond}[1]{\boldsymbol{\mathit{CG}}(#1)} 
\newcommandWithStar{\Ab}{A_b}
\newcommandWithStar{\Abs}{{\Ab}^{\!\!*}}
\newcommandWithStar{\Gp}{\mathbb{G}_{\per}}
\newcommandx{\pascal}[2][1=R,2=\per]{\Pc_{#2}^{#1}}
\newcommandx{\pascalp}[2][1=R,2=\per]{{\Pc'}_{#2}^{#1}}
\newcommand{\ZZ}[1][\per]{\Z/#1\Z}
\newcommand{\ZZxZZ}[1][\per]{\ZZ\xmd{\times}\xmd\ZZ[\ord]}
\newcommandWithStar{\per}{p}
\newcommandWithStar{\base}{b}
\newcommandWithStar{\ord}{\psi}
\newcommand{\behav}[1]{L(#1)} 
\newcommand{\texteq}[1]{\hspace*{10mm minus 3mm}\text{#1}\hspace*{10mm minus 3mm}}
\renewcommand{\mod}{%
  \mathchoice%
    {\halfspace\scalebox{0.8}[0.7]{$\displaystyle{\normalfont\textsf{\%}}$}\halfspace}%
    {\halfspace\scalebox{0.8}[0.7]{$\textstyle{\normalfont\textsf{\%}}$}\halfspace}%
    {\fourthspace\scalebox{0.8}[0.7]{$\scriptstyle{\textsf{\%}}$}\fourthspace}%
    {\fifthspace\scalebox{0.8}[0.7]{$\scriptscriptstyle{\textsf{\%}}$}\fifthspace}%
  }
\newcommandWithStar{\EdR}{E_{d}^{R}}
\newcommandWithStar{\EpR}{R+p\xmd\N}
\newcommandWithStar{\EpRm}{E_{\per,\,\geq m}^{R}} 
\newcommandWithStar{\EpRmI}{I\cup E_{\per,\,\geq m}^{R}} 
\newcommandWithStar{\EpRI}{I\oplus(\EpR)}
\newcommandWithStar{\AdR}{\Ac_{d}^{R}}
\newcommandWithStar{\ApR}{\Ac_{\per}^{R}}
\newcommandWithStar{\ApRm}{\Bc_{\per,\,\geq m}^{R}} 
\newcommandWithStar{\FpR}{F_{\per}^{R}}
\newcommandWithStar{\Gm}{\mathcal{G}_{\geq m}} 
\newcommandWithStar{\GI}{\mathcal{G}_{I}}
\newcommand{\Nd}{\N^{d}}
\newcommand{\SL}{\mathsf{S}_L}
\newlength{\pplen}\setlength{\pplen}{0.25em plus 0.05em minus 0em}
\newcommand{\vmppspace}{\setlength{\pplen}{0.25em plus 0.1em minus 0.05em}\hspace{\pplen}}
\newcommand{\brec}{$b$-recognisable\xspace}
\newcommand{\UP}{\texorpdfstring{{$\mathbb{UP}$}\xspace}{UP}}
\newcommand{\up}{u.p.\@\xspace}
\newcommand{\upf}{u.p. }
\newcommand{\aupsn}{a \upsn}
\newcommand{\upsn}{u.p{.}\vmppspace subset of~$\N$\xspace} 
\newcommand{\upsns}{u.p{.}\vmppspace subsets of~$\N$\xspace} 
\newcommand{\upaut}{automaton in \UP}
\newcommand{\scc}[1]{s.c.c.\ifthenelse{\equal{#1}{.}}{}{\@\xspace#1}}
\newcommand{\sccs}{s.c.c.'s\xspace}
\newcommand{\ascc}{an \scc}
\renewcommand{\dag}{d.a.g.\@\xspace}
\newcommandx{\repr}[2][2={\base}]{\langle\xmd#1\xmd\rangle}
\renewcommand{\val}[1]{\widebar{\thirdspace\rule{0pt}{1.4ex}#1\thirdspace}}
\newcommand{\tauu}[1][u]{\tau_{#1}}
\newcommand{\lefthk}[1][(h,k)]{\gamma_{#1}}
\newcommand{\lefthksub}[1][(h,k)]{\sigma_{#1}}
\newcommand{\Ahk}[1][(h,k)]{\Ac_{#1}}
\newcommand{\Qhk}[1][(h,k)]{Q_{#1}}
\newcommand{\dhk}[1][(h,k)]{\delta_{#1}}
\newcommand{\Fhk}[1][(h,k)]{F_{#1}}
\newcommand{\gx}{\mathbin{\diamond}}
\newcommand{\ZZxrZZ}{\ZZ\rtimes\ZZ[\ord]}
\newcommand{\bpp}[1]{\big((#1)\big)}
\newcommand{\bp}[1]{\big(#1\big)}
\newcommand{\aut}[1]{\left\langle#1 \right\rangle}
\newcommand{\Ac}{\mathcal{A}}
\newcommand{\Bc}{\mathcal{B}}
\newcommand{\Cc}{\mathcal{C}}
\renewcommand{\Mc}{\mathcal{M}}
\newcommand{\Uc}{\mathcal{U}}
\newcommand{\Pc}{\mathcal{P}}
\newcommand{\ap}{\mathbin{\boldsymbol\cdot}}
\newcommand{\descendant}{descendant\xspace}
\newcommand{\AcS}[1][S]{\Uc_{#1}}
\newcommand{\dX}{\Delta}
\newcommand{\cf}{\textit{cf}.\@\xspace}
\newcommand{\ie}{\textit{i.e}.\@\xspace}
\newcommand{\Cf}{\mathbf{C}}
\newcommand{\Df}{\mathbf{D}}
\newcommand{\Qf}{\mathbf{Q}}
\newcommand{\Ff}{\mathbf{F}}
\newcommand{\Kf}{\mathbf{K}}
\newcommand{\bigo}[1]{\mathsf{O}\hspace*{-1pt}\left(#1\right)}
\newcommand{\powerset}{\boldsymbol{\mathcal{P}}}
\newcommand{\UPatom}{\UP-atomic\xspace}
\newenvironment{claimproof}[1][Proof of Claim~\theclaim{}]%
{\begin{proofwithbar}[#1]}%
{\end{proofwithbar}}
\newcommand{\otop}[1][p]{\{0,\ldots,#1\mo\}}
\newcommand{\quantvrg}{,~~}
\newcommand{\quantsp}{\quad}
\newcommand{\eqpnt}{~.}
\newcommand{\eqpntvrg}{~;}
\newcommand{\eqvrg}{~,}
\newcommand{\AutScale}{0.55}
\begin{document}

\title[Deciding periodicity of \texorpdfstring{\MakeLowercase{$b$}}{b}-recognisable sets using LSDF convention]{An efficient algorithm to decide periodicity of \texorpdfstring{\MakeLowercase{$b$}}{b}-recognisable sets using LSDF convention{\rsuper*}}
\titlecomment{{\lsuper*} An early version of this work was published in the proceedings of the DLT conference~\cite{MarsSaka13a}; most of the results are also part of the  of the Ph.D thesis of the author~\cite{Mars16}.}
\author[V.~Marsault]{Victor Marsault}
\address{%
  LIGM,
  Universit{\'e} Paris-Est Marne-la-Vall{\'e}e,
  ESIEE Paris,
  {\'E}cole des Ponts ParisTech,
  CNRS,
  France
}
\address{%
  Laboratory for Foundations of Computer Science,
  School of Informatics,
  University of Edinburgh,
  United Kingdom
}
\address{%
  Department of Mathematics,
  Universit{\'e} de Li{\`e}ge,
  Belgium{\rsuper\dagger}
 }
 \thanks{%
  {\lsuper\dagger}While the author was affiliated with the University of Li{\`e}ge, he was
  supported by a Marie Sk{\l}odowska-Curie fellowship, partially funded by the
  European Union.
 }
\address{%
  IRIF,
  Universit{\'e} Paris-Diderot,
  France\vspace{1em}
}
\email{victor.marsault@u-pem.fr}

\graphicspath{{vcsg/}{.}}

\input{HS11_00_abstract.tex}
\section{Introduction}

Let~$b$ be a fixed integer strictly greater than~$1$, called the \emph{base}.
Every nonnegative integer~$n$ is \emph{represented} (in base~$b$) by a
\emph{word}~$u$ over the digit
alphabet~${\Ab=\set{0,1,\ldots,b\mo}}$, and representation is unique up
to leading~$0$'s.
Hence, \emph{subsets} of~$\N$ are represented by \emph{languages}
of~$\Abs$.
Depending on the base, a given subset of~$\N$ may be represented by
a simple or complex language:
the set of powers of~$2$ is represented in base~$2$ by the regular
language~$10^*$; whereas it is
represented in base~$3$ by a language that is not context-free.
A subset of~$\N$ is said to be \emph{\brec} if it is
represented by a regular (or rational, or recognisable) language
over~$\Ab$.
On the other hand, a subset of~$\N$ is said \emph{recognisable} if it is,
via the identification of~$\N$ with~$a^{*}$
($n \leftrightarrow a^{n}$),
a regular language of~$a^{*}$.
A subset of~$\N$ is recognisable if and only if it is
\emph{ultimately periodic} (\up) and we use the latter terminology in
the sequel as it is both meaningful and more distinguishable from \brec.
It is common knowledge that every \up set (of nonnegative integers) is
\brec for every~$b$.
However, a~{$b$-recognisable} set for
some~$b$ is not necessarily \up, nor~$c$-recognisable for some other~$c$;
the set of all powers of~$2$, previously discussed, is an example of
these two facts.
It is a simple exercise to show that if~$b$ and~$c$ are \emph{multiplicatively
dependent} (that is, if there exist positive integers~$k$ and~$\ell$ such that
$b^{k}=c^{\ell}$), then every \brec set is a~$c$-recognisable set as
well.
A converse of these two properties is the theorem of Cobham~\cite{Cobh69}:
\emph{a set of numbers that is both~$b$- and~$c$-recognisable, for
multiplicatively independent~$b$ and~$c$, is u.p}.
It is a strong and deep result whose
proof is difficult (see also~\cite{BruyEtAl94,DuraFRigo11}).
After Cobham's theorem, another natural question on~\brec sets is the
decidability of periodicity.
It was positively solved in~1986:

\begin{theorem*}[Honkala~\cite{Honk86}]\ltheorem{hon}
  It is decidable whether an automaton over~$\Ab$ accepts an ultimately periodic
  set.
\end{theorem*}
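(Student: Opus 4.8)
The plan is to decide whether the set~$S$ accepted by~$\Ac$ is \up by reducing the question to finitely many instances of a manifestly decidable sub-problem. Two facts are used throughout: every \up set is \brec in every base, and ``the \brec set~$T$ is finite'' is a decidable property (build an automaton for~$T$ and test whether its trim part is acyclic). For a fixed integer~$p\geq 1$, call~$S$ \emph{eventually $p$-periodic} when $T_{p}:=\set{n\in\N : \text{exactly one of $n$ and $n+p$ belongs to }S}$ is finite. From~$\Ac$ one effectively builds an automaton for~$T_{p}$ — a product of~$\Ac$ with a sequential transducer adding~$p$ in base~$b$ and with a second copy of~$\Ac$ — so for each fixed~$p$ it is decidable whether~$S$ is eventually $p$-periodic. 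Now~$S$ is \up iff it is eventually $p$-periodic for some~$p$; and, by a Fine--Wilf type argument, the set of eventual periods of an eventually periodic sequence is closed under~$\gcd$ and under multiplication by positive integers, hence equals $p_{0}\N_{\geq 1}$ for a least eventual period~$p_{0}$, which is itself an eventual period. So it suffices to exhibit a \emph{computable} bound~$\beta=\beta(n,b)$ with $p_{0}\leq\beta$ whenever~$S$ is \up: then~$S$ is \up iff~$S$ is eventually $p$-periodic for some $p\in\set{1,\dots,\beta}$, which is decidable.

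The whole problem thus reduces to one structural lemma, which I would isolate: \emph{there is a computable~$h(\cdot,\cdot)$, strictly increasing in its first argument, such that if~$S$ is \up with least eventual period~$p_{0}$ then the minimal automaton of~$S$ has at least $h(p_{0},b)$ states.} Granting it, minimise~$\Ac$ — its minimal automaton has at most~$n$ states — and put $\beta(n,b):=\max\set{p : h(p,b)\leq n}$.

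To prove the lemma I would analyse the left quotients of~$S$, equivalently the states of its minimal automaton, exploiting the LSDF convention. Writing~$v(u)$ for the integer with LSDF representation~$u$, the word~$uw$ represents $v(u)+b^{|u|}v(w)$, so $u^{-1}S=\set{w : v(u)+b^{\,|u|}v(w)\in S}$. Write $p_{0}=p'\,s$ with $\gcd(b,p')=1$ and~$s$ dividing a power of~$b$. Reading a long word~$u$ with $v(u)$ beyond the threshold takes the automaton to a state that — up to the finitely many exceptional short continuations — is determined by $v(u)\bmod p'$, together with the residues of~$v(u)$ modulo~$s$ and with~$|u|$ modulo the multiplicative order of~$b$ mod~$p'$; letting~$v(u)$ sweep through the admissible residues produces a number of pairwise-inequivalent states that grows with~$p'$ (here minimality of~$p_{0}$ is what prevents the relevant finite subsets of residues from collapsing under a common shift), while the $b$-power factor~$s$ independently forces a transient chain of states reading the low-order digits, of length governed by~$\log_{b}s$. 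Quantifying both effects yields~$h(p_{0},b)$.

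The main obstacle is precisely this quantitative structural lemma: showing that a \emph{small} automaton cannot recognise an eventually periodic set of \emph{large} minimal period, uniformly in~$b$, while correctly disentangling the recurrent periodic behaviour from the finite transient perturbation and from the arithmetic of the $b$-power factor of~$p_{0}$. Everything else — the explicit bound~$\beta$, the construction of the automata for $T_{1},\dots,T_{\beta}$, and the finiteness tests — is routine. In fact, the remainder of this paper does much better: it replaces this crude size lower bound by a precise combinatorial description of the minimal automata of \up sets, phrased naturally in the LSDF convention, from which both the decidability stated above and a near-linear decision procedure follow.
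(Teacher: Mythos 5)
Your high-level strategy --- for each candidate period~$p$, decide whether the disagreement set~$T_p$ is finite, then reduce the whole problem to a computable bound on the least eventual period~$p_0$ of~$S$ in terms of the size of its minimal automaton --- is essentially Honkala's original bound-and-enumerate approach, which the paper cites but does not re-prove. The paper's own route is different and stronger: it characterises exactly which minimal automata accept ultimately periodic sets (Theorem~\ref{t.UP-stru-char}) and obtains decidability, indeed a linear-time algorithm (Theorem~\ref{t.com-plx}), as a byproduct of that structural description. The outer reduction in your argument is correct as far as it goes: eventual $p$-periodicity is decidable for each fixed~$p$ via the automaton for~$T_p$; the set of eventual periods of a u.p.\ set is~$p_0\N_{\geq 1}$ by the Fine--Wilf argument; so a computable bound on~$p_0$ would close the proof.

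The genuine gap, which you flag yourself, is the structural lemma: that there is a computable~$h$ with $h(p,b)\to\infty$ (for fixed~$b$) as $p\to\infty$ such that any u.p.\ set of least eventual period~$p$ has a minimal LSDF automaton with at least~$h(p,b)$ states. You sketch it but do not prove it, and the sketch is the entire substance of the result. The direction is right: write $p_0=p'\xmd s$ with $\gcd(p',b)=1$ and~$s$ dividing a power of~$b$; the recurrent part of the minimal automaton is a quotient of a Pascal-type group automaton and must contain at least~$p'$ pairwise-inequivalent states (this is precisely what the paper's Lemma~\ref{l.eq-phi-st-ht} delivers once one is in the SCC), and along a run by~$0$'s the period of the left quotient strictly decreases until it becomes coprime with~$b$ (Properties~\ref{pp.peri-decr} and~\ref{pp.peri-not-copr}), yielding a transient chain of distinct states of length governed by~$\log_b s$. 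But making this uniform --- in particular verifying that the finite mismatch set~$I$ cannot create coincidences that collapse states, and that the $p'$ residues remain separated in the presence of~$I$ --- is nontrivial and is not done. One further small correction: ``strictly increasing in its first argument'' is false as stated. In base~$2$ the sets~$2\N$ and~$3\N$ both have $3$-state minimal automata (accepting by value), so $h(2,2)=h(3,2)$; the weaker property that~$h(\cdot,b)$ tends to infinity is what you actually need, and it is enough for $\beta(n,b)=\max\set{p : h(p,b)\leq n}$ to be finite.
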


The complexity of the decision procedure is not
an issue in the original work.
Neither are the properties or the structure of automata accepting
\up sets.
Given an automaton~$\Ac$, Honkala shows that there are bounds on the
parameters of the potential \up set accepted by~$\Ac$.
The property is then decidable as it is possible to enumerate all
automata that accept sets with smaller parameters and check whether
any of them is equivalent to~$\Ac$.
As detailed below, subsequent works on automata and number
representations brought some answers regarding the complexity
of the decision procedure, explicitly or implicitly.
In the present article, we follow the convention that number
representations are written least significant digit first (LSDF
convention) and show the following.
\newcommand{\sttcomplx}{%
  Let~$b>1$ be an integer.
  We assume that number representations are written in base~$b$ and
  with the least significant digit first.
  Given a minimal DFA~$\Ac$ with~$n$ states,
  it is decidable in time~$\bigo{b\xmd n}$ whether~$\Ac$ accepts an
  ultimately periodic set.
}
\begin{thm}\ltheorem{com-plx}%
  \sttcomplx%
\end{thm}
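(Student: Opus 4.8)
The plan is to reduce the question to the verification of a purely local, graph-theoretic property of the given minimal automaton. The first and main step is to establish a \emph{structural characterization}: a minimal DFA $\Ac$ over $\Ab$, in the LSDF convention, accepts an \up set if and only if its transition graph has the following shape. Its strongly connected components split into an \emph{initial part} and an absorbing \emph{periodic part}: the periodic part is closed under transitions (once entered it is never left) and contains every nontrivial component, while the initial part is the transient remainder and consists only of trivial components, which implicitly encode the threshold. The periodic part is entered from the initial part by following $0$-transitions into a single $0$-cycle; inside it every component is a ``multiplication-by-$b$'' cyclic gadget on a finite quotient of $\Z$; and the pattern of final states along the periodic part is \emph{coherent} --- a condition, spelled out locally in terms of the gadgets, that characterises exactly the final-state patterns arising from period-$p$ subsets --- with the initial part in turn \emph{compatible} with it. The LSDF convention is precisely what makes this shape appear: appending a low digit $0$ multiplies the running value by $b$, so reading $0$'s drives every state into such a cycle, whose length --- together with the fan of its non-zero out-transitions --- exposes a candidate minimal period $p$, and one checks along the way that $p=\bigo{b\,n}$. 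The threshold $m$, by contrast, may be exponentially larger than $n$ and is never materialised: it survives only as the shape of the (at most $n$-state) initial part.

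Granting the characterization, the algorithm is a fixed number of linear sweeps of $\Ac$. Compute the strongly connected components and the condensation of $\Ac$ by Tarjan's algorithm in time $\bigo{n+b\,n}=\bigo{b\,n}$, since a complete DFA over $\Ab$ has exactly $b\,n$ transitions. One traversal of the condensation isolates the initial and the periodic part and rejects at once if the periodic part fails to have the coarse required shape (not entered through a single $0$-cycle, or containing a nontrivial component that is not a cyclic gadget); one further traversal reads off the candidate period $p$ and the residue assignment from the $0$-cycle and its non-zero out-transitions. What remains --- that each component of the periodic part really is the prescribed gadget, that the final states form a coherent pattern, and that the initial part is compatible with it --- amounts to a bounded number of additional linear passes, each inspecting $\bigo{1}$ (or at most $\bigo{b}$) data per state or per transition, for a total of $\bigo{b\,n}$. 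If every check succeeds, $\Ac$ accepts an \up set; otherwise it does not. The point that secures linearity is that, although $p$ itself may be as large as $\Theta(n)$, each condition is organised --- using the cyclic-gadget structure --- so as to be discharged by a constant number of sweeps rather than by $\Theta(p)$ of them; this is what gives a linear bound where Leroux's more general procedure is quadratic.

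The hard part is the structural characterization itself; everything after it is bookkeeping. Three points make it delicate. First, the precise shape of the periodic part depends on how $b$ interacts with the factorisation of $p$: the part of $p$ coprime to $b$ fixes the length of the $0$-cycle (the multiplicative order of $b$ modulo that part), the primes dividing both $b$ and $p$ force a transient chain of components before that cycle, and minimisation then merges states according to the symmetries of the residue set; determining exactly which merges occur is what turns the characterization into an equivalence rather than a mere necessary condition. Second, the initial part and the way it glues onto the periodic part must be described finely enough to be tested locally, and one must prove that any minimal automaton matching the local description is forced to satisfy $p=\bigo{b\,n}$ --- otherwise even extracting the candidate period would already be too expensive. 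Third, each condition must be shown to be a local predicate testable in amortised $\bigo{1}$ (or $\bigo{b}$) time per state; in particular the coherence of the final states, which a priori looks like $\Theta(p)$ comparisons, has to be reorganised into a single pass using the gadget structure. Once the characterization and these three points are in place, the theorem follows by assembling the linear sweeps described above.
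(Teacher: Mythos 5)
Your proposal has the right overall shape — derive a structural characterisation of the minimal automata accepting ultimately periodic sets, then verify it in a fixed number of linear sweeps — and you correctly identify where the real work is (which merges minimisation causes, and how to test coherence of finality in $O(1)$ amortised time). But the characterisation as you state it does not match the correct one, and the two genuinely hard points you flag are left unresolved rather than solved.

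Two things are wrong in the structure you describe. First, the transient part is \emph{not} made of trivial components only. Non-trivial $0$-circuits occur outside the group-automaton part; they are not absorbing (a non-zero digit exits them), so they cannot sit inside a transition-closed ``periodic part''. The paper treats these $0$-circuits as a distinct type of s.c.c.\ (``type two'') and imposes a precise condition: each must have a unique descendant in the condensation, that descendant must be of type one (a group automaton recognising a purely periodic set with period coprime to $b$), and the $0$-circuit must embed into it via a pre-morphism commuting with all transitions. Your proposal has no analogue of this embedding condition, which is what captures a nonempty mismatch set. Second, and more importantly, the central difficulty — ``determining exactly which merges occur'' — is only named, not handled. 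The paper resolves it by describing the type-one s.c.c.'s as quotients of an explicit group automaton, the Pascal automaton on $\ZZ\times\ZZ[\ord]$ with $\ord$ the order of $b$ mod $p$, whose transition monoid is the semidirect product $\ZZ\rtimes\ZZ[\ord]$. Because the state set is a group, a quotient is characterised by a single parameter $(h,k)$ (the smallest non-identity state collapsed onto $(0,0)$), read off from the automaton by locating the shortest mixed circuit $g^{s}0^{t}$ through the initial state. Once $(p,R,h,k)$ are extracted, the paper gives a closed-form construction of the putative quotient $\Ahk$ and reduces the whole check to one isomorphism test. Your ``multiplication-by-$b$ cyclic gadget'' and ``coherent final-state pattern'' gesture at this but supply neither the group-theoretic description nor the $(h,k)$ parametrisation that makes the test a single pass rather than $\Theta(p)$ comparisons. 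Without these two ingredients the argument does not close.

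A minor point: your concern about needing to prove $p=\bigo{b\,n}$ from the local description is unnecessary. The period $p$ is exactly the length of a $g$-circuit (i.e.\ the circuit spelled by iterating the action of $1\cdot 0^{-1}$) inside the group-automaton s.c.c., so $p\leq n$ holds trivially; the delicacy lies in the quotient combinatorics, not in bounding $p$.
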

%


%
\newcommand{\sttcomplxnonmin}{%
  Given a DFA~$\Ac$ with~$n$ states, it is decidable in time~$\bigo{(b\xmd n)\xmd \log n}$
  whether~$\Ac$ accepts an ultimately periodic set.
}
\begin{cor}\lcorollary{com-plxnonmin}
  \sttcomplxnonmin%
\end{cor}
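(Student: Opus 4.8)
The plan is to reduce Corollary~\rcorollary{com-plxnonmin} to Theorem~\rtheorem{com-plx} by passing to the minimal DFA first. Starting from an arbitrary DFA~$\Ac$ with~$n$ states over~$\Ab$, I would in a first phase (i)~remove every state not reachable from the initial state, by a single traversal of the transition graph, and (ii)~complete the automaton by adding one sink state that collects all the missing transitions. Both operations run in time~$\bigo{b\xmd n}$, leave the accepted language~$\behav{\Ac}$ — and hence the represented subset of~$\N$ — unchanged, and produce a complete DFA, all of whose states are reachable, with at most~$n+1$ states. This preprocessing is oblivious to the digit-ordering convention: it is plain DFA surgery on the language, so the LSDF set represented by the automaton is preserved.

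In the second phase I would run Hopcroft's minimisation algorithm on that automaton. On a complete DFA with~$m$ states over an alphabet of size~$b$ it terminates in time~$\bigo{b\xmd m\log m}$; since here $m\le n+1$, this is~$\bigo{(b\xmd n)\log n}$. Its output is the canonical minimal complete DFA~$\Ac'$ of the language~$\behav{\Ac}$; in particular~$\Ac'$ represents the same set of nonnegative integers as~$\Ac$, and it has $\bigo{n}$ states. It then suffices to apply Theorem~\rtheorem{com-plx} to~$\Ac'$: being a minimal DFA with $\bigo{n}$ states, it can be tested for acceptance of an ultimately periodic set in time~$\bigo{b\xmd n}$, and the outcome is the answer for~$\Ac$. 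Summing the three phases gives the bound~$\bigo{(b\xmd n)\log n}$ of the statement.

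The only delicate point — and essentially the only place where something has to be checked — is that the notion of ``minimal DFA'' required by Theorem~\rtheorem{com-plx} must match exactly the object produced by minimisation, namely the \emph{complete} canonical automaton, including the dead state that occurs whenever $\behav{\Ac}\neq\Abs$; since the structural characterisation underlying Theorem~\rtheorem{com-plx} is phrased for precisely that normal form, no gap arises. I do not expect any genuine combinatorial obstacle here: all the content lies in Theorem~\rtheorem{com-plx}, and the corollary follows merely by prefixing it with a minimisation step whose $\bigo{(b\xmd n)\log n}$ cost dominates the $\bigo{b\xmd n}$ cost of the test itself.
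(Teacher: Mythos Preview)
Your proposal is correct and matches the paper's intended argument: the paper does not spell out a proof of this corollary, but it explicitly notes in the preliminaries that the minimal automaton ``may be computed in quasi-linear time'' from a given DFA, so the corollary is meant to follow by minimising and then invoking \rtheorem{com-plx}, exactly as you describe.
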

%


\subsection*{On the order of digits}

Honkala's problem gives birth to two different
problems when one writes either the least or the most significant digit
first (LSDF or MSDF, respectively).
These two problems are not polynomially equivalent.
In order to transform an instance~$\Ac$ of one of the problem into an instance
of the other, one must run on~$\Ac$ a transposition and then a determinisation.
This potentially leads to an exponential blow-up of the number of states.
This event occurs for the problem at hand for example with
the language~$L_n=\,1\,{(0+1)}^n\, 1\, {(0+1+\varepsilon)}^n0^*$
and its mirror~$K_n$.
The number of states in the minimal automaton accepting $L_n$ (resp.~$K_n$)
grows linearly (resp.\@ exponentially) with~$n$.
Evaluating~$L_n$ as LSDF encodings or~$K_n$ as MSDF encodings yields
the same finite (thus u.p.) set.
A recent work by Boigelot et al.~\cite{BoigEtAl17}
gives a quasi-linear algorithm to solve Honkala's problem when number
representations are written MSDF\@.
As noted above, this result cannot be used to solve efficiently
the problem using LSDF convention, which is the object
of the present paper.

\subsection*{Related work in the multidimensional setting}

New insights on Honkala's problem were obtained when stating it in
a higher dimensional space.
Let~$\N^{d}$ be the additive monoid of~$d$-tuples of nonnegative
integers.
Every~$d$-tuple in~$\N^{d}$ can be represented in base~$b$ by a
$d$-tuple of words over~$\Ab$ of \emph{the same length}, as shorter
words can be padded by~$0$'s without changing the corresponding value.
Such~$d$-tuples can be read by (finite) automata
over~${({\Ab}^{\! d})}^{*}$ --- automata reading on~$d$ synchronised
tapes ---
and a subset of~$\N^{d}$ is \brec if the set of the
$b$-representations of its elements is accepted by such an automaton.
On the other hand, the recognisable and rational subsets of~$\Nd$ are
defined in the classical way.
A subset of~$\Nd$ is \emph{recognisable} if it is saturated by a
congruence of finite index,
and is \emph{rational} if it may be expressed by a rational
expression.
If~$d=1$, then~$\Nd=\N$ is a free monoid and the family of rational sets is equal
to the family of recognisable sets; in this case, they are typically called
\emph{regular languages} via the identification of~$\N$ with~$a^{*}$.
Otherwise,~$\Nd$ is not a free monoid and the two families do not
coincide (\cf~\cite{Saka09}).
It is also common knowledge that every rational set of~$\Nd$ is \brec for
every~$b$, and the example in dimension~$1$ is enough to show that a \brec set
is not necessarily rational.
Semenov showed a generalisation of Cobham's theorem (\cf~\cite{Seme77,BruyEtAl94,DuraFRigo11}):
\emph{a subset of~$\Nd$ which is both~$b$- and~$c$-recognisable, for
multiplicatively independent~$b$ and~$c$, is rational}.
The generalisation of Honkala's theorem went as smoothly.

\begin{theorem*}[Muchnik~\cite{Much03}] \ltheorem{muc}%
  It is decidable whether a \brec subset of~$\Nd$ is rational.
\end{theorem*}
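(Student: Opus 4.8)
The plan is to reduce the question to a \emph{bound-and-enumerate} procedure, just as Honkala does in the one-dimensional case. Recall first that it is classical (Ginsburg and Spanier) that the rational subsets of $\Nd$ are precisely the semilinear sets, equivalently the subsets of $\Nd$ definable in Presburger arithmetic $\langle\N;+\rangle$, and that from any semilinear presentation one can effectively build a $b$-automaton recognising the corresponding set. It therefore suffices to exhibit a \emph{computable} function $B$ such that, whenever the set $S\subseteq\Nd$ accepted by a given automaton $\Ac$ is rational, $S$ admits a semilinear presentation --- a finite union of linear sets with explicit base points and period vectors --- of total size at most $B(\Ac)$. Granting such a $B$, one enumerates all semilinear presentations of size at most $B(\Ac)$, converts each into a $b$-automaton, and tests it for language equivalence with $\Ac$; after the usual normalisation of leading $0$'s and of component lengths both objects accept regular languages, so this equivalence is decidable, and $S$ is rational if and only if one of these finitely many tests succeeds. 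Everything then rests on producing $B$.

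I would obtain $B$ by induction on the dimension $d$. The base case $d=1$ is the effective form of Honkala's theorem: from $\Ac$ one computes a bound on the threshold and on the period of the \upsn accepted by $\Ac$, should that set be ultimately periodic. For $d\ge 2$, fix a coordinate $i$; for $a\in\N$ let $S_{i,a}\subseteq\N^{d-1}$ be the hyperplane section of $S$ obtained by freezing the $i$-th component to the value $a$, and note that from $\Ac$ one builds effectively an automaton for $S_{i,a}$ --- but one whose size grows with $\log a$, which is the source of the difficulty. The structural heart of the argument is the following criterion: $S$ is semilinear if and only if, for every direction $i$, the sequence $(S_{i,a})_{a\in\N}$ is \emph{ultimately periodic up to translation} --- there exist $p_i\ge 1$, $N_i\ge 0$ and vectors $v_{i,0},\dots,v_{i,p_i-1}\in\Z^{d-1}$ such that $S_{i,\,a+p_i}=S_{i,a}+v_{i,\,a\bmod p_i}$ for all $a\ge N_i$ --- and, in addition, each of the finitely many sections $S_{i,a}$ with $a<N_i+p_i$ is semilinear in $\N^{d-1}$.

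To turn this criterion into a decision procedure and into the bound $B$, two ingredients are required. First, a pumping-type analysis of $b$-recognisable sets must show that, if the section sequence in direction $i$ is ultimately periodic up to translation at all, then it is so with parameters $p_i$, $N_i$ and $\|v_{i,r}\|$ bounded by a computable function of $\Ac$ and $b$; checking the criterion in direction $i$ then reduces to testing, for a computably bounded set of candidate parameters $(p_i,N_i,(v_{i,r})_r)$, equalities between the effectively constructible $b$-recognisable sets that witness those identities --- and failure for some $i$ already shows up at small parameters and witnesses non-rationality of $S$. Second, once $p_i$ and $N_i$ are so bounded, the finitely many ``representative'' sections $S_{i,a}$ with $a<N_i+p_i$ correspond to automata of computably bounded size, so the induction hypothesis applies to each of them; assembling their semilinear presentations along the residue classes modulo $p_i$ together with the translation pattern yields a semilinear presentation of $S$ whose size is bounded by a computable function of $\Ac$ and of the $(d-1)$-dimensional bounds. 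This is the required $B(\Ac)$, and the induction closes.

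The main obstacle is the effectiveness of this section criterion, in two intertwined parts. One is \emph{correctness}, where the delicate implication is that ``all representative sections semilinear and the section sequence translation-periodic in every direction'' forces $S$ to be semilinear: this demands recombining the lower-dimensional pieces across directions and residue classes into a single semilinear set while controlling how the directions interact. The other is the \emph{computable boundedness} of the stabilisation parameters $p_i,N_i,v_{i,r}$: had $S$ been presented by a Presburger formula, quantifier elimination would directly control the periods appearing in its sections, but here one starts from the automaton $\Ac$, so the bound must be wrung out of a pumping argument on $\Ac$. Beyond these, the routine encoding subtleties have to be dispatched --- padding of shorter components, normalisation of leading $0$'s, the passage between $\Nd$ and $\Z^d$ if the recursion is carried out over $\Z$ --- with the base case $d=1$ supplied by Honkala's theorem.
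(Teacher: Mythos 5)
This statement is quoted from Muchnik's paper and is not proved in the present article, so your sketch has to be judged on its own terms. The overall architecture --- a computable bound $B(\Ac)$ on the size of a candidate semilinear presentation, followed by enumeration and regular-language equivalence tests --- is a legitimate strategy in principle (it is the shape of Honkala's argument in dimension one). But the structural criterion you build everything on is false. It is not true that a semilinear subset of $\Nd$ has, in \emph{every} direction, a section sequence that is ultimately periodic up to translation. Take $S=\{(x,y)\in\N^2 : y\leq x\}$, which is linear (it equals $(0,0)+\N(1,0)+\N(1,1)$) and is $b$-recognisable for every $b$. Its sections in the first direction are $S_{1,a}=\{0,1,\ldots,a\}$, and $S_{1,a+p}$ has $p$ more elements than $S_{1,a}$, so no translation vector $v$ can satisfy $S_{1,a+p}=S_{1,a}+v$. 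Your criterion would declare this set non-rational; the forward implication (``$S$ semilinear $\Rightarrow$ sections translation-periodic''), on which both the correctness of the test and the existence of the bound $B$ rest, therefore fails already for $d=2$.

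The criterion that actually works (Muchnik's) is of a different nature: $S$ is Presburger-definable if and only if all its sections in all lower dimensions are definable \emph{and} $S$ is \emph{locally periodic} --- there is a bound $k$ such that every sufficiently large cube contains a point near which $S$ agrees with some pattern of period bounded by $k$ --- a condition on the set itself rather than on how consecutive sections relate by translation. The decisive point, which your proposal misses entirely, is that for a fixed $b$-recognisable $S$ this whole criterion can be expressed as a sentence of $FO[\N,+,V_b]$, whose truth is decidable by the B\"uchi--Bruy\`ere theorem; this is what makes the procedure terminate without any bound-and-enumerate step. If you wish to salvage a section-based induction, ``translate of'' must at least be weakened to something that tolerates the growth visible in the example above, and controlling how the different directions then interact when reassembling $S$ is precisely the difficulty that Muchnik's local-periodicity condition is designed to circumvent.
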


\begin{theorem*}[Leroux~\cite{Lero05}]
\ltheorem{ler}%
  Assuming that number representations are written LSDF, it is decidable in
  polynomial time whether a \brec subset of~$\Nd$ is rational.
\end{theorem*}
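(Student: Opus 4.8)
The plan is to reduce the question to verifying a polynomial‑size structural condition on the minimal deterministic automaton reading LSDF, exploiting the feature that — in this convention — the residual languages of the automaton encode \emph{affine preimages} of the set, not images of it. Recall first that a subset of $\Nd$ is rational if and only if it is semilinear (Ginsburg–Spanier), equivalently Presburger‑definable, and that the rational subsets of $\Nd$ form a Boolean algebra closed under the maps $Y\mapsto\{m\in\Nd:r+b\,m\in Y\}$ for $r\in\Ab^{d}$. So fix the \brec set $X\subseteq\Nd$ given by a DFA $\Ac$ over ${({\Ab}^{\!d})}^{*}$ (a language closed under appending the all‑zero letter, so as to be padding‑invariant), and minimise $\Ac$ in polynomial time. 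Reading LSDF, after a word $w$ of length $k$ encoding the low‑order digit block of value $v\in\{0,\dots,b^{k}\mo\}^{d}$, the language accepted from $\delta(q_{0},w)$ is exactly the residual $X_{v,k}:=\{m\in\Nd: v+b^{k}m\in X\}$; hence the states of the minimal DFA are in bijection with the finitely many distinct residuals $X_{q}$, this family is closed under the affine‑preimage maps above, and $X=X_{0,0}$. Consequently $X$ is rational iff every $X_{q}$ is rational, and it suffices to exhibit a \emph{uniform} criterion on $\Ac$ certifying rationality of all residuals at once.

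The criterion attaches to each state $q$ the rational polyhedral cone $C_{q}\subseteq\Q_{\geq 0}^{d}$ generated by the values of the words labelling loops based at $q$ inside its strongly connected component — equivalently, generated by the digit‑tuples along simple cycles through $q$ — which is computable in polynomial time from the SCC decomposition. I claim: $X$ is rational iff, for every state $q$, the residual $X_{q}$ agrees with a finite union $\bigcup_{i}(w_{i}+C_{q})\cap\Nd$ outside a box $[0,N]^{d}$ whose side $N$ is bounded polynomially in $|Q|$, $b$ and $d$, the finitely many $w_{i}$ being values of short paths leaving the SCC of $q$, and these local data are mutually consistent along the transitions of $\Ac$ (the preimage identity relating $X_{q}$ and $C_{q}$ to the $X_{q'}$, $C_{q'}$ for successors $q'$). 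The ``only if'' direction is a \emph{quantitative} pumping argument: if $X$ is semilinear with a generating set of bounded size (Pottier / Eisenbrand–Shmonin bounds on integer‑cone generators), then iterating a loop at $q$ scales the residual by the single scalar $b^{(\text{loop length})}$ while all period vectors stay inside one fixed lattice; this forces $X_{q}$ to be eventually exactly a finite union of cone‑translates, and — because the scaling is purely multiplicative and the lattice is fixed — ``eventually'' is reached inside a region of polynomial size. The ``if'' direction is constructive: the cones $C_{q}$, the vectors $w_{i}$, and the tabulated contents of $[0,N]^{d}$ are themselves an explicit semilinear, hence Presburger, description of $X$.

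The algorithm then is: minimise $\Ac$; compute the SCCs and the cones $C_{q}$; compute the polynomial correction bound $N$; and verify the polynomially many consistency‑and‑agreement conditions, each an equality between a \brec set restricted to a polynomial‑size region and an explicit polyhedral set — all in time polynomial in the size of $\Ac$, $b$ and $d$. The decisive point is that LSDF lets this whole argument stay on the minimal automaton of $X$ \emph{itself}: passing to MSDF would require transposing and re‑determinising $\Ac$, an exponential blow‑up (witnessed by the $L_n/K_n$ family discussed above), so the MSDF analogue of this reasoning is not polynomial.

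The main obstacle is precisely the ``only if'' direction made quantitative — proving that a \brec semilinear set cannot disguise its semilinearity beyond a polynomially bounded scale. This requires (i) bounds on the size of a minimal semilinear (Hilbert‑basis) description of each $X_{q}$ in terms of $|Q|$, and (ii) control of how the base‑$b$ scaling interacts with the period lattices across nested loops, ruling out periods whose combined structure blows up. Item (ii) is exactly the point that is benign under LSDF, where a loop acts by the single scalar $b^{\ell}$, and that becomes genuinely hard — or fails at this complexity — under MSDF; upgrading Muchnik's merely computable bound to a polynomial one is the core technical content.
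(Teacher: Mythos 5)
First, a point of comparison: the paper does not prove this statement. It is Leroux's theorem, quoted as related work, and the paper explicitly defers its justification to the ``sophisticated geometric constructions'' of~\cite{Lero06}. So your proposal cannot be matched against an in-paper argument; judged on its own merits, it contains genuine gaps beyond the one you flag yourself.

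Two concrete problems. First, your structural criterion does not characterise rationality. A rational (equivalently, semilinear) subset of~$\Nd$ is a finite union of sets $w+P^{*}$, where $P^{*}$ is the \emph{submonoid} generated by a finite set of period vectors; such a submonoid is in general a proper, lattice-constrained subset of $\mathrm{cone}(P)\cap\Nd$. For example, $\setq{(x,y)\in\N^{2}}{x\equiv y \pmod 3}$ is rational, yet outside no box is it a finite union of translates of a polyhedral cone intersected with~$\N^{2}$: any such union has local density~$1$ away from the cone boundaries, whereas this set has density~$1/3$. Your cones $C_{q}$ therefore miss the congruence component of semilinearity, and the claimed equivalence fails (a rational set need not satisfy your criterion). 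Second, the polynomial bound on the correction box is false already for $d=1$: the finite (hence rational) set $\set{b^{n}}$ has a minimal LSDF automaton with $O(n)$ states, but its unique element has magnitude $b^{n}$; more generally, reading one digit only divides the preperiod by roughly~$b$ (\cf the paper's \rproperty{decr-per-preper}), so an $n$-state automaton constrains the mismatch set only up to magnitude $b^{O(n)}$. The region where a residual deviates from its eventual description therefore cannot be confined to a box of polynomially bounded side, and your verification step ``tabulate the contents of $[0,N]^{d}$'' is exponential. Finally, the quantitative ``only if'' direction, which you yourself identify as the core technical content, is asserted but not proved; that is precisely where the substance of Leroux's result lies, and without it the proposal is an outline rather than a proof.
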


Muchnik's algorithm is triply exponential while
Leroux's is quadratic.
This improvement is based on sophisticated geometric constructions that
are detailed in~\cite{Lero06}.
Note that Leroux's result, restricted to dimension~$d=1$, readily yields a
quadratic procedure for Honkala's original problem.
The improvement to quasilinear complexity that we present here
(\rcorollary{com-plxnonmin}) is not due to a natural simplification of Leroux's
construction for the case of dimension~1.
Rational sets of~$\Nd$ have been characterised by Ginsburg and
Spanier~\cite{GinsSpan66} as sets definable in \emph{Presburger arithmetic}
(that is, definable by a formula of the first order logic with addition, denoted by~$FO[\N,+]$).
%
%
On the other hand, the Büchi-Bruyère theorem (\cf~\cite{Buchi1960,Bruy85,BruyEtAl94}) characterises \brec subsets
of~$\Nd$: \emph{A subset of~$\Nd$ is \brec if and only if it is definable by a formula of~$FO[\N,+,V_b]$}.
(The function $V_b:\N\rightarrow\N$ maps each~$n$ to the greatest power of~$b$ that divides~$n$.)

Using these two results, one may see that Muchnik's problem can (and was indeed)
stated in terms of logic: decide whether a formula of~$FO[\N,+,V_b]$ has an
equivalent formula in~$FO[\N,+]$.
However, the two statements are not equivalent
for complexity issues.
Using the Büchi-Bruyère Theorem to build an automaton from a formula may
give rise to a multi-exponential blow-up of the size.

\subsection*{Related work in non-standard numeration systems}

Generalisation of base~$p$ by nonstandard numeration systems gives an
extension of Honkala's problem, best expressed in terms of abstract numeration
systems.
Given a totally ordered alphabet~$A$, any language~$L\subseteq A^{*}$ defines
an \emph{abstract numeration system} (a.n.s.)~$\SL$ in which the
integer~$n\in\N$ is represented by the~$(n\po)$-th word of~$L$ in the radix
order (\cf~\cite{LecoRigo10-ib}).
The a.n.s.\@ is said to be regular if~$L$ is.
A subset of~$\N$ is called~$\SL$-recognisable if its representation
in the a.n.s.\vmppspace$\SL$ is a regular language.
It is known that
every \upsn is~$\SL$-recognisable for every regular a.n.s.\vmppspace$\SL$.
The extended Honkala's problem takes as input an~$\SL$-recognisable
set~$X$ and consists in deciding whether~$X$ is \upf%
It was observed in~\cite{AlloEtAl09,CharEtAl12} that, for a subset of~$\N$
the \emph{property of being \up}{ }is definable by a formula of the
Presburger arithmetic.
Hence, if~$\SL$ is a regular a.n.s.\@ in which addition is realised by
a finite automaton, then the extended Honkala's problem is decidable.
In particular, this approach solves the case where the numeration system is
a Pisot U-system (\cf~\cite{FrouSaka10-ib}).
On the other hand, with a proof similar to the one from the original
Honkala's paper, the problem was also shown to be decidable for a large
class of U-systems~\cite{BellEtAl09,Charlier2009}.
This class is incomparable with the
class of Pisot U-systems.
Finally, it is shown in~\cite{RigoMaes02,LecoRigo10-ib} that the
extended Honkala's problem is equivalent to deciding whether an HD0L
sequence is periodic (\cf~\cite{AlloShal03}).
Since then, this latter problem has been shown to be
decidable~\cite{Dura13,Mitr11}.
Hence, the extended Honkala's problem is also decidable in general.

\bigskip

These extensions were mentioned for the sake of completeness.
The present article is focused on solving the original problem
of Honkala when using LSDF convention.

\subsection*{Outline}

As it is often the case, the linear complexity of our algorithm
is obtained as the consequence of a structural characterisation.
After preliminaries, \rsection{minim} defines and study the class \UP of
the minimal automata that accept \up sets.
Then, we describe in \rsection{UP} a set of structural properties about the shapes and
positions of the strongly connected components (\sccs) and show that these
properties characterise the class \UP (\rtheorem{UP-stru-char}).
Finally, \rsection{line-comp} gives the linear algorithm underlying \rtheorem{com-plx},
which decides whether a given minimal automaton accepts a \up set.
The delicate part is to obtain a linear complexity in the special case where
the input automaton is strongly connected.
%




\section{Preliminaries}
\lsection{prelim}

\subsection{On automata}
\lsection{auto-defi}
An \emph{alphabet}~$A$ is a finite set of symbols, or \emph{letters};
in our case, letters will always be digits and the term \emph{digit}
will be used as a synonym of \emph{letter}.
We call \emph{word} over~$A$ a finite sequence of letters taken in~$A$; the
empty word is denoted by~$\epsilon$ and the length of a word~$u=a_0a_1\cdots
a_{k\mo}$ by~$\wlen{u}=\wlen{a_0a_1\cdots a_{k\mo}}=k$.
The set of words over~$A$ is denoted by~$A^*$, and a subset of~$A^*$
is called a \emph{language over~$A$}.
In this article, we consider only automata that are deterministic and finite.
Thus, an automaton is denoted by~$\Ac=\aut{A,Q,i,\delta,F}$,
where~$A$ is the \emph{alphabet},%
~$Q$ is the finite set of \emph{states},%
~$i\in Q$ is the \emph{initial state},%
~$F\subseteq Q$ is the set of \emph{final states},
and~${\delta:Q\times A \rightarrow Q}$ is the \emph{transition function}.
As usual,~$\delta$ is extended to a function~$Q\times A^*\rightarrow Q$
by~$\delta(q,\epsilon)=q$
and~$\delta(q,ua)\nlb=\nlb\delta(\delta(q,u),a)$.
When the context is clear,~$\delta(s,u)$ will also be denoted by~$s\ap u$.
A \emph{transition in~$\Ac$} is an element~$(s,a,t)$ in~$Q\times\Ab\times Q$ such
that~$\delta(s,a)=t$; it is usually denoted by~$s\pathx{a}[\Ac] t$ or
simply~$s\pathx{a} t$ when~$\Ac$ is clear from context.
A \emph{path in~$\Ac$} is a sequence of
transitions~$s_0\pathx{a_0}[\Ac]s_1\cdots \pathx{a_k}[\Ac]s_{k\po}$ which is
also denoted by~$s_0\pathx{u}[\Ac] s_{k\po}$ where~$u=a_0\xmd \cdots a_k$, and
we call~$s_0$ the \emph{origin},~$u$ the \emph{label} and~$s_{k+1}$ the
\emph{destination} of this path.
Note that this path exists if~$\delta(s_0,u)=s_{k+1}$.
We call \emph{run} any path originating from the initial state, and
\emph{the run of a word~$u$} refers to the run  labelled by~$u$ if it
exists; this path is well defined since our automata are
deterministic.
A word~$u$ in~$A^*$ is \emph{accepted} by~$\Ac$ if its run ends in a final
state, that is, if~$(i \cdot u)$ exists and belongs to~$F$.
The language accepted by~$\Ac$ is denoted by~$\behav{\Ac}$.
If every word has a run,~$\mathcal A$ is said to be \emph{complete}.
A state~$r$ is said \emph{reachable from} another state~$s$ if there
exists a path
from~$r$ to~$s$, and simply \emph{reachable} if it is reachable
from the initial state.
An automaton is said \emph{reachable} if all its states are \emph{reachable}.
%
%

\begin{drawconv}
  In figures, most automata will be over two-letter alphabets ($\set{0,1}$ or~$\set{0,g}$).
  For the sake of clarity, we omit labels in such cases:
  transitions labelled by~1 will be drawn with a thick line,
  those labelled by~0 with a thin line,
  and those by~$g$ with a double line.
\end{drawconv}

\begin{defi}
  Let~$\Ac$ and~$\Mc$ be two automata over the same alphabet~$A$
  \begin{subthm}
    \item \ldefinition{auto-morp}
    An \emph{(automaton) morphism} is a surjective
    function~$\phi:Q_{\Ac}\rightarrow Q_{\Mc}$
    that meets the following three conditions.
    \begin{subequations}
    \begin{gather}\allowdisplaybreaks
      \phi(i_{\Ac}) = \phi(i_{\Mc}) \lequation{auto-morp-init}\\
      \phi^{\mo}(F_{\Mc}) = F_{\Ac} \lequation{auto-morp-final}\\
      \lequation{auto-morp-trans}
      \forall a\in A \quantvrg \forall s\in Q_{\Ac}
        \quantsp \phi(s\cdot a) = \phi(s)\cdot a
    \end{gather}
    \end{subequations}

    \item
    If~$\phi$ denotes a morphism, we say that two states~$s$
    and~$s'$ are~\emph{$\phi$-equivalent} if they have the same
    image by~$\phi$.

    \item
    If there exists a morphism~$\Ac\rightarrow\Mc$, we say
    that~$\Mc$ is a \emph{quotient} of~$\Ac$.

  \end{subthm}
\end{defi}
\noindent
Given a regular language~$L$, it is classical (\cf~\cite{Saka09}, for instance)
that there exists a \emph{minimal automaton}~$\Mc$ that accepts~$L$:
it is the complete automaton accepting~$L$ with the minimal amount of states.
Moreover, given an automaton~$\Ac$ that accept~$L$, $\Mc$ may be computed in
quasi-linear time from~$\Ac$ and~$\Mc$ is a quotient of~$\Ac$.
\begin{defi}
  The transition monoid~$T$ of an automaton~$\Ac$ is the set of the
  functions induced by all words in~$A^*$  on the states of~$\Ac$:
  \begin{equation*}
    T = \setq{ \begin{array}{lcll} f_u:&Q_\Ac & \longrightarrow & Q_\Ac \\
                                  &q     & \longmapsto & q\cdot u
                \end{array}}{ u\in A^* } \eqpnt
  \end{equation*}
\end{defi}
Note that in the previous definition, since~$Q_\Ac$ is finite, there is
a finite number of
functions~$Q_\Ac \rightarrow Q_\Ac$ hence~$T$ is always finite.
\begin{defi} \ldefinition{grou-auto}
  An automaton~$\Ac$ over an alphabet~$A$ is called a \emph{group
  automaton} if every state of~$\Ac$ has a unique incoming and a unique
  outgoing transition labelled by each letter of~$A$.
\end{defi}

It follows from \rdefinition{grou-auto} that an automaton is a group automaton
if and only if its transition monoid is a group.
Moreover, that property is stable by quotient:

\begin{pty} \lproperty{quot-grou-auto}
  Every quotient of a group automaton is a group automaton.
\end{pty}

\subsection{On strongly connected components}

Two states~$s,s'$ of an automaton~$\Ac$ are \emph{strongly connected} if~$\Ac$
contains a path from~$s$ to~$s'$ and a path from~$s'$ to~$s$.
This defines an equivalence relation whose classes are called the \emph{strongly
connected components} (\sccs) of~$\Ac$.
Every state~$s$ of~$\Ac$ then belongs to a unique \scc.
Note that \ascc does not necessarily contains a circuit.
Indeed the \scc of an \emph{isolated state}~$s$ (that is, a state that do not belong to any
circuit), is the singleton~$\set{s}$ and is said \emph{trivial}.
Figures \rfigure*{cond-befo} and \rfigure*{cond-sccs} show an
automaton and its \sccs.
\begin{figure}[ht]
  \centering
  \begin{subfigure}[b]{0.4\textwidth}
    \centering
    \includegraphics[scale=\AutScale]{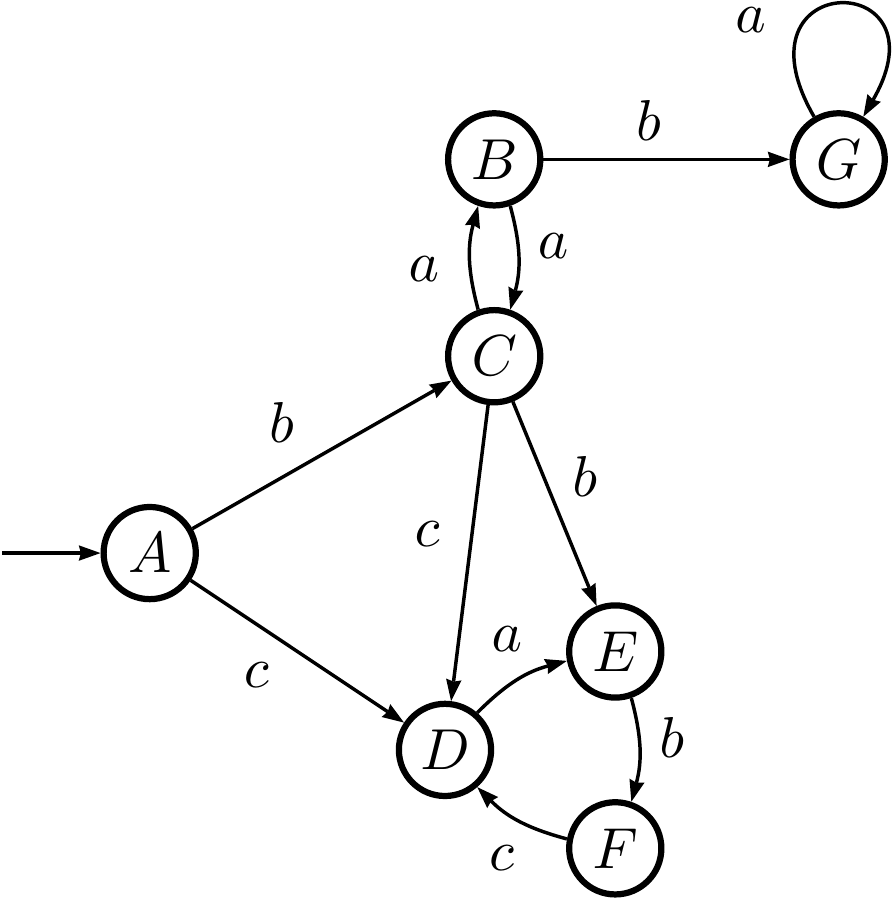}
    \caption{An automaton~$\Ac_1$}\lfigure{cond-befo}%
  \end{subfigure}
  \begin{subfigure}[b]{0.59\textwidth}
    \centering
    \includegraphics[scale=\AutScale]{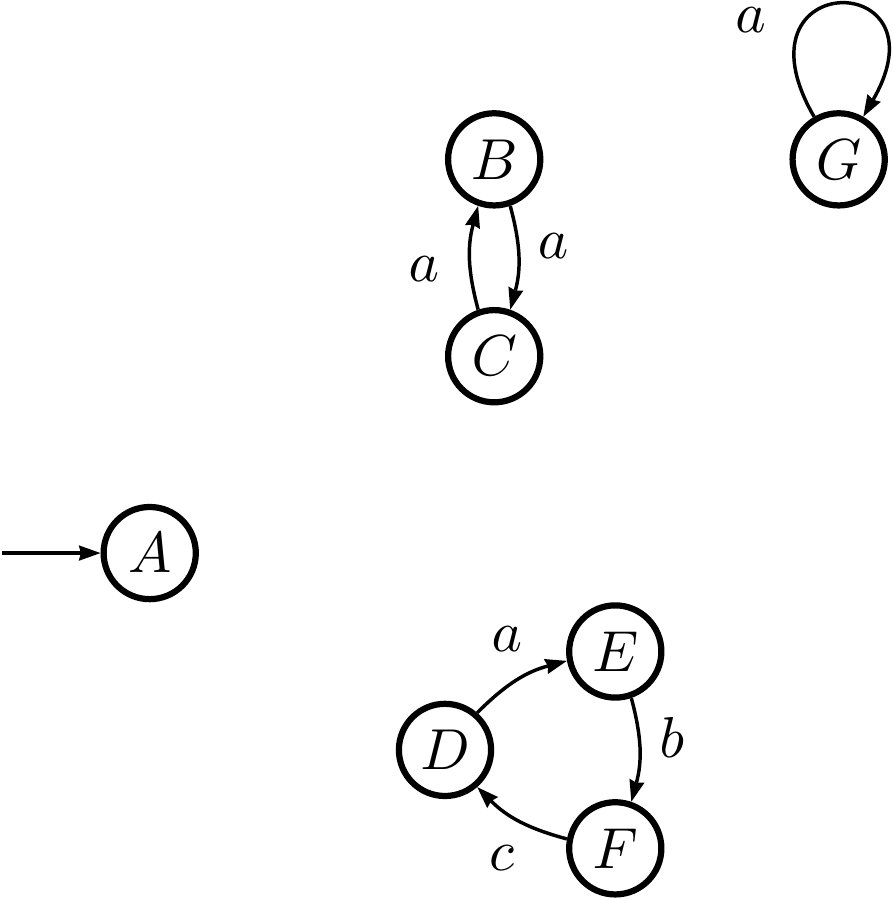}%
    \caption{The \sccs of~$\Ac_1$, and their internal transitions}\lfigure{cond-sccs}%
  \end{subfigure}

  \vspace{1.5em}
  \begin{subfigure}[b]{\textwidth}
    \centering
    \includegraphics[scale=\AutScale]{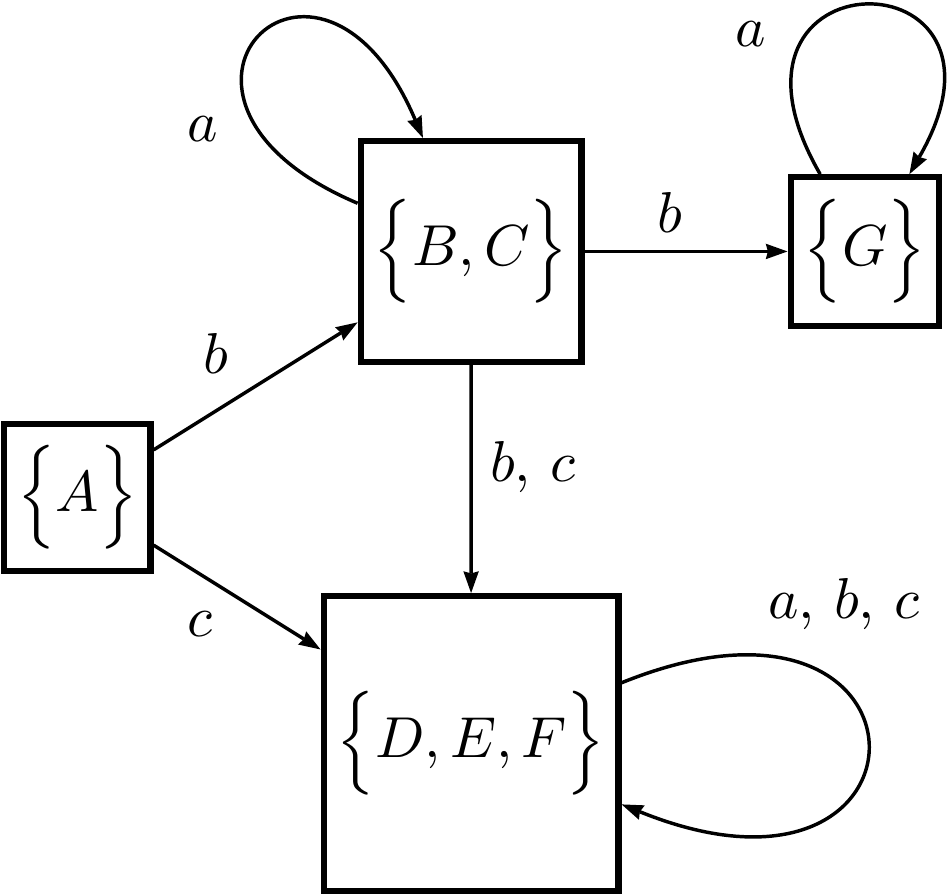}
    \caption{$\cond{\Ac_1}$, the component graph of~$\Ac_1$}\lfigure{cond-afte}%
  \end{subfigure}
  \caption{An automaton, its \sccs and its component graph.}
\end{figure}
The \emph{component graph}~$\cond{\Ac}$ of a an automaton~$\Ac$ is the labelled \dag
(directed  acyclic graph) that results from contracting each \scc into
a single vertex.
For instance, \rfigure{cond-afte} shows the component graph of~$\Ac_2$.
We say that \ascc~$X$ is a \emph{\descendant} of another \scc~$Y$ if
$X$ is a successor of~$Y$ in the component graph that is, if there is~$x\in X$ and~$y\in Y$
such that~$x\pathx{a} y$, for some letter~$a$.
It is classical that the component graph can be computed efficiently (\cf~\cite{CormEtAl09}), as stated below.
\begin{thm}
  \ltheorem{tarj}
  The component graph of an $m$-transitions automaton can be computed in
  time~$O(m)$.
\end{thm}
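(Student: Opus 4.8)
This is a classical fact; I would only sketch the argument and refer to~\cite{CormEtAl09} for the details. The plan is to view~$\Ac$ as a directed graph~$G$ whose vertices are the states of~$\Ac$ and whose edges are its transitions, with labels recorded but parallel edges merged (an~$O(m)$ preprocessing). I would then proceed in two steps: first compute the partition of the states into \sccs, and then contract each class to obtain~$\cond{\Ac}$.

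For the first step I would run Tarjan's algorithm: a single depth-first traversal of~$G$ that keeps, for each visited vertex~$v$, a discovery index~$d(v)$ and a low-link value~$\ell(v)$, namely the least discovery index reachable from~$v$ by tree edges followed by at most one edge whose head is still on an auxiliary stack~$S$. Every vertex is pushed onto~$S$ the first time it is visited; when the recursive call on~$v$ returns with~$\ell(v)=d(v)$, the vertices popped from~$S$ down to and including~$v$ form exactly one \scc, and each is labelled with the identity~$c(\cdot)$ of that component. Since the traversal touches each vertex and each edge a constant number of times, it runs in time~$O(n+m)$, which is~$O(m)$ for the automata we consider (in particular complete ones, where~$m=n\,\card{A}$, and more generally accessible ones, where~$n\leq m\po$).

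For the second step, once every state~$q$ carries the label~$c(q)$ of its component, I would make one more pass over the transitions: a transition~$s\pathx{a}t$ of~$\Ac$ with~$c(s)\neq c(t)$ contributes a candidate edge~$c(s)\pathx{a}c(t)$ of~$\cond{\Ac}$, while one with~$c(s)=c(t)$ is discarded. Duplicate candidate edges are removed on the fly, e.g.\@ by bucketing the candidates by their source component and then scanning each bucket once, for an amortised cost of~$O(1)$ per transition. This step is again~$O(m)$, so the whole computation is~$O(m)$.

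The only delicate point — which I would not reprove, as it is the textbook invariant of Tarjan's algorithm — is the correctness of the low-link recurrence: that~$\ell(v)=d(v)$ occurs precisely when~$v$ is the first-discovered vertex of its \scc{}, and that at that instant the portion of~$S$ above and including~$v$ is exactly that component. Everything else is routine bookkeeping, and the labelling pass of the second step is correct directly by the definition of the component graph.
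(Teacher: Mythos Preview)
Your sketch is correct, but note that the paper does not prove this theorem at all: it is stated as a classical fact with a reference to~\cite{CormEtAl09} and no argument is given. Your proposal therefore goes beyond what the paper does, supplying the standard Tarjan-based justification that the paper simply defers to the literature.
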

%

\subsection{On integer base numeration system}
\lsection{inte-base-defi}

Let~$\base$ be an integer strictly greater than~$1$ called the
\emph{base}.
It will be fixed throughout the article.
We briefly recall below the definition and elementary properties of
base-$b$ numeration systems.
Note that we represent numbers with the Least Significant Digit First (LSDF) a
convention used with some success in the past, for instance by Leroux~\cite{Lero05,Lero06}.

\medskip

Given two \emph{positive integers}~$n$ and~$m$, we denote by~$n\div m$ and~$n
\mod m$ respectively the quotient and the remainder of the Euclidean division
of~$n$ by~$m$, \ie~${n=(n\div m)\xmd m + (n\mod{m})}$ and~$0\leq
(n\mod{m}) < m$.
We index the letters of a word~$u$ from left to
right:~$u=a_0\xmd a_1 \xmd \cdots \xmd a_n$.

%

\medskip

Given a word~$u=a_0\xmd a_1 \xmd \cdots \xmd a_n$ over the
alphabet~$\Ab=\set{0,1,\ldots, b\mo}$,
its \emph{value} (in base~$\base$), denoted by~$\val{u}$, is given by the following expression.
\begin{equation}
  \lequation{inte-eval}
  \val{u} = \val{a_0\xmd a_1 \xmd \cdots \xmd a_n}
          = \sum_{i=0}^{n} a_i \xmd \base^i
\end{equation}
Words whose values are equal to some integer~$k$ are called
\emph{\base*-expansions} of~$k$.
Exactly one among them does not
\strong{end} with the digit 0; it is called the
\emph{\base*-representation} of~$k$, and is denoted by~$\repr{k}$.
We recall below formulas for evaluating concatenations of words; they
follow from \requation*{inte-eval}.
\begin{align}
  \lequation{eval-left}
  \forall a\in \Ab\quantvrg v\in\Abs \quantsp
  \val{a\xmd v} ={}& a + \val{v} \xmd \base
  \displaybreak[0]\\
  \lequation{eval-righ}
  \forall a\in \Ab\quantvrg u\in\Abs \quantsp
  \val{u\xmd a} ={}& \val{u} + a \xmd \base^{\wlen{u}}
  \displaybreak[0]\\
  \lequation{eval-conc}
    \forall u,v\in\Abs \quantsp
  \val{u\xmd v} ={}& \val{u} + \val{v} \xmd \base^{\wlen{u}}
\end{align}

\medskip

In this article, we are interested in the set of the values of the words
accepted by automata over~$\Ab$.
For the sake of consistency, we will only consider automata~$\Ac$ that
\emph{accept by value} that is, such that~$\Ac$ accepts either all words of
value $k$, or none of them.
This acceptance convention is generally more practical than the other
one (accepting by representation): considered automata are usually
smaller, proofs are more elegant, and in the multidimensional settings
(which we do not consider here) it makes operations like projection
much more efficient.
In practice, it means that all automata we consider are such that the successor
by~$0$ of a final state exists and is final while the successor by~$0$ of a
non-final state is non-final, if it exists.
Thus, we may say without ambiguity that an automaton
\emph{accepts}~$S$, where~$S$ is a subset of~$\N$; indeed, it means
that~$\Ac$ accepts the language~$\cod{S}\xmd 0^*$.
%


\section{Automaton accepting an arbitrary periodic set}
\lsection{minim}

The purpose of this section is to define and study the minimal automaton
that accepts an arbitrary ultimately periodic set of nonnegative integers.
Similar results and constructions were used in the literature in other contexts,
for instance when considering automata for linear constraints
(\cf~\cite{ComoBoude1996}).
First, let us introduce some terminology and notation.

\clearpage
\begin{defi}\hfill
  \begin{subthm}
    \item A set of integers~$S\subseteq N$ is said to be \emph{purely periodic} if it may be written
          as~$S=\EpR$, for some positive integer~$p$ and~$R\subseteq\otop$.

          Moreover, we say that~$S$ is \emph{canonically written as}~$\EpR$
          if there is no integer~$p'$, ${0<p'<p}$, and~$R'\subseteq\otop[p']$ such that~$S=(R'+p'\xmd\N)$.

    \item Given two sets~$S$ and~$S'$, we denote by~$S\oplus S'$ their symmetric difference:
    an element belongs to~$S\oplus S'$ if it is in~$S$ or in~$S'$ but not in both.

    \item A set~$S\subseteq\N$ is said to be \emph{ultimately periodic} (u.p.)
          if it may be written as~$S=I\oplus S'$, where~$I$ is a finite subset
          of~$\N$ and~$S'$ is purely periodic.

          Moreover we say that~$S$ is \emph{canonically written as}~$\EpRI$
          if~$S'$ is canonically written as~$\EpR$.



    \item If~$S$ denotes \aupsn canonically written as~$S= \EpRI$, then we call
          \begin{itemizeinline}
            \item $p$ \emph{the period} of~$S$
            \item $R$ \emph{the remainder set} of~$S$
            \item $I$ \emph{the mismatch set} of~$S$
            \item $m$ \emph{the preperiod} of~$S$, where~$m=\max(I)+1$ if~$I\neq\emptyset$, and~$m=0$ otherwise.
          \end{itemizeinline}
  \end{subthm}
\end{defi}

\noindent
For instance, let us consider the set $S=\set{0,6}\cup\big(\set{4,5}+4\xmd\N\big)$.
It is canonically written as~${S=\set{1,6}\oplus\big(\set{0,1}+4\N\big)}$.
Hence, the period of~$S$ is~$4$; its remainder set is~$\{0,1\}$; its mismatch set
is~$\set{1,6}$; and its preperiod is~$7$.
Similarly, the period of the empty set (resp.~of~$\N$)
is~$1$ and its remainder set is~$\emptyset$ (resp.~$\set{0}$).

\subsection{The function \texorpdfstring{$\dX$}{Delta}}

In Section~\thesubsection{}, we take interest in the function~$\dX$ that later on will
be used as the common transition function of all minimal automata that accept \upsns.
We denote by~$\powerset(X)$ the set of the subsets of~$X$.

\begin{defi}\ldefinition{dX}
  Let~$\dX$ be the function~$\dX:(\powerset(\N)\times \Ab)\rightarrow \powerset(\N)$
  defined by:
  \begin{equation}\lequation{dX}
    \forall S\mathbin{\subseteq}\N\quantvrg\forall a\in\Ab \quantsp \dX(S,a) = \setq{n\in\N}{(nb+a)\in S}
    \eqpnt
  \end{equation}
  As usual,~$\dX$ is extended as a function~$(\powerset(\N)\times \Abs)\rightarrow \powerset(\N)$.
\end{defi}

\begin{table}[t]
  \centering
  \begin{tabular}{llll}
    $S$ & $a$ & & $\dX(S,a)$ \\
    \midrule
    $\set{0,3,4}$ & $0$ & $\longmapsto$& $\set{0,2}$ \\
    $\set{0}+2\xmd \N$ & $0$ &$\longmapsto$& $\set{0}+\N$ \\
    $\set{0}+3\xmd \N$ & $0$ &$\longmapsto$& $\emptyset+\N$\\
    $\set{0}\oplus(\set{0,1,2,4}+5\xmd \N)$ & $0$ & $\longmapsto$& $\set{0}\oplus(\set{0,1,2,3}+5\xmd \N)$  \\
    $\set{0}\oplus(\set{0,1,2,4}+5\xmd \N)$ & $1$ & $\longmapsto$& $\set{0,2,3,4}+5\xmd \N$ \\
  \end{tabular}
  \caption{A few values of function~$\dX$ in base~$2$}
  \ltable{dx}
\end{table}

\rtable{dx} gives a few instances of the function~$\dX$ in base~$2$.
Given a letter~$a$ in~$\Ab$, the function~$S\mapsto\dX(S,a)$
corresponds to reading the letter~$a$, as highlighted by the next
equation (which follows from \requation*{eval-left} and
\requation*{dX}).
\begin{equation}\lequation{succ-by-a}
  \forall S\mathbin{\subseteq}\N \quantvrg
  \forall u\in \Ab^* \quantvrg
  \forall a\in\Ab \quantsp
  \val{a\xmd u}\in S \iff \val{u}\in \dX(S,a)\eqpnt
\end{equation}

\medskip

First, let us prove that the function~$\Delta$ is stable over \upsns.

\begin{lem}\llemma{succ-of-set}
  If~$S$ denotes a set of nonnegative integers, then the following are equivalent.
  \begin{subthm}
    \item $S$ is \upf
    \item For every~$a$ in $\Ab$,~$\dX(S,a)$ is \upf
  \end{subthm}
\end{lem}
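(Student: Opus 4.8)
The plan is to prove both implications of Lemma~\ref{l.succ-of-set} using the characterisation of ultimate periodicity as a finiteness condition on arithmetic progressions.

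\textbf{Proof sketch.} First, for the implication (i)$\Rightarrow$(ii), suppose $S$ is \up, canonically written as $S = I\oplus(R+p\xmd\N)$. Fix a letter $a\in\Ab$. I want to compute $\dX(S,a) = \set{n\in\N \mid nb+a\in S}$ explicitly. The key observation is that the map $n\mapsto nb+a$ is an affine function, so it interacts nicely with arithmetic progressions: $nb+a\in R+p\xmd\N$ if and only if $nb \equiv (r-a) \pmod p$ for some $r\in R$, which (since we may pass to a common period) is again a finite union of residue classes modulo $p$ in the variable $n$ --- more precisely, the set $\set{n \mid nb+a\in R+p\xmd\N}$ is periodic with period dividing $p$ (one can take period exactly $p$, using that $nb+a$ ranges over a fixed residue class mod $p$ as $n$ ranges over one mod $p$). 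Similarly $\set{n \mid nb+a\in I}$ is finite, being contained in $\set{0,1,\dots,\lfloor(\max I)/b\rfloor}$. Since $\dX$ commutes with the Boolean operation $\oplus$ (immediate from Definition~\ref{d.dX}, as $nb+a\in S\oplus S'$ iff exactly one of $nb+a\in S$, $nb+a\in S'$ holds), we get $\dX(S,a) = \dX(I,a)\oplus\dX(R+p\xmd\N,a)$, a symmetric difference of a finite set and a purely periodic set, hence \up.

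\textbf{The converse.} For (ii)$\Rightarrow$(i), suppose $\dX(S,a)$ is \up for every $a\in\Ab$. The idea is to reconstruct $S$ from its ``children''. By Euclidean division every integer $k$ can be written uniquely as $k = nb+a$ with $0\leq a < b$, and $k\in S$ iff $n\in\dX(S,a)$ by Definition~\ref{d.dX}. In terms of the partition of $\N$ into residue classes mod $b$, this says $S\cap(a+b\xmd\N)$ is, via the bijection $n\leftrightarrow nb+a$, the set $\dX(S,a)$. Now if each $\dX(S,a)$ is \up, then each $S\cap(a+b\xmd\N)$ is the image of a \up set under the affine map $n\mapsto nb+a$; such an image is again \up (an arithmetic progression $r+p\xmd\N$ maps to $(rb+a)+pb\xmd\N$, and finite sets map to finite sets). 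Finally $S = \bigcup_{a=0}^{b\mo} \bigl(S\cap(a+b\xmd\N)\bigr)$ is a finite union of \up sets, and the class of \up subsets of $\N$ is closed under finite unions (take the common period to be the product, or lcm, of the individual periods, and the common preperiod to be the max). Hence $S$ is \up.

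\textbf{Main obstacle.} The conceptual content is light; the only place requiring care is bookkeeping with periods and preperiods. Specifically, in (i)$\Rightarrow$(ii) one must check that $\set{n \mid nb+a\in R+p\xmd\N}$ really is eventually periodic (indeed purely periodic, with period $p$) rather than merely ``looks periodic'', and in (ii)$\Rightarrow$(i) one must verify that the union over all $b$ residue classes can be arranged with a single common period --- these are elementary modular-arithmetic facts but are the substance of the argument. It is worth stating once and for all that the \up subsets of $\N$ form a Boolean algebra closed under the $\dX$-type affine operations, after which both directions follow formally; I would prove that closure lemma (or cite it as folklore, \cf~\cite{Honk86}) and then dispatch both implications in a few lines each.
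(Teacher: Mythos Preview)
Your proof is correct and, for the direction (ii)$\Rightarrow$(i), essentially identical to the paper's: both reconstruct~$S$ as $\bigcup_{a\in\Ab}\bigl(b\cdot\dX(S,a)+a\bigr)$ and invoke closure of \up sets under affine images and finite union.

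For (i)$\Rightarrow$(ii) your route differs slightly. You split via the distributivity $\dX(I\oplus P,a)=\dX(I,a)\oplus\dX(P,a)$ (which the paper proves separately, later, as \rproperty{dx-prod}) and then handle the finite and purely periodic pieces individually. The paper instead argues directly on~$S$: setting $p'=p/\gcd(p,b)$ and $m'=\lceil(m-a)/b\rceil$, it checks in one chain of equivalences that $n\in\dX(S,a)\iff(n+p')\in\dX(S,a)$ for all $n\geq m'$. The payoff of the paper's version is that these explicit values of~$p'$ and~$m'$ immediately yield Properties~\rproperty*{decr-per-preper} (the period and preperiod of $\dX(S,a)$ are bounded by, and sometimes strictly smaller than, those of~$S$), which are used repeatedly in the structural analysis that follows. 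Your argument gives only the coarser bound~$p$ on the period and would need a small extra step to recover the sharper $p/\gcd(p,b)$.
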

\begin{proof}
  (i) $\implies$ (ii).
  We canonically write~$S$ as~$S=\EpRI$ and we denote by~$m$ the preperiod
  of~$S$. Moreover, we write
  \begin{equation}\lequation{decr-per-preper}
    m'= \bceil{\frac{m-a}{b}} \quad\quad\text{and}\quad\quad p' = \frac{p}{\gcd(p,b)}
  \end{equation}
  Let~$n\geq m'$ be an integer.
  From \requation*{decr-per-preper}, we have~$(n\xmd b +a) \geq m$ and~$p\divides (p'b)$.
  %
  %
  The proof of the forward direction is concluded by the following equivalences:
  \begin{align*}
    n\in\dX(S,a) &{}\iff (n\xmd b + a) \in S
                 \iff (n\xmd b + a + bp') \in S
                 \\&{}\iff (n+p')\xmd b + a \in S
                 \iff (n+p')\in \dX(S,a)
  \end{align*}

  \noindent
  (ii) $\implies$ (i).
  From \requation*{dX} and the properties of Euclidean divisions, it holds that
  \begin{equation*}
    S= \bigcup_{a\in\Ab} (b\times \dX(S,a)+a) \eqpnt
  \end{equation*}
  Since multiplication, addition and finite union preserves ultimate periodicity,~$S$ is \up
\end{proof}

While showing the forward direction of \rlemma{succ-of-set}, we also showed
the following properties.

\pagebreak[2]

\begin{properties}\lproperty{decr-per-preper}
  Let~$S$  be \aupsn  and~$a$ be a letter.
  Let~$p$ be the period and~$m$ the preperiod of~$S$.
  Let~$p'$ and~$m'$ be the period and preperiod of~$\dX(S,a)$.
  \begin{subthm}
    \item \lproperty{peri-decr} $p\geq p'$
    \item \lproperty{pre-peri-decr} $m\geq m'$
    \item \lproperty{peri-not-copr}
      If $p$ is not coprime with~$b$, then $p>p'$
    \item \lproperty{pre-peri-ge-1}
      If~$m>1$, then~$m>m'$
    \item \lproperty{pre-peri-a-neq-0}
      If~$m>0$ and $a\neq0$, then~$m>m'$
  \end{subthm}
\end{properties}
%
%


\noindent
We conclude our preliminary study of~$\dX$ by two technical statements that will
be useful later on.

\begin{pty}\lproperty{dx-prod}
  Let~$I$ and~$P$ be subsets of~$\N$ such that~$I$ is finite and~$P$ is purely periodic.
  %
  For every letter~$a$ in~$\Ab$,~$\dX(I\oplus P,a)=\dX(I,a)\oplus\dX(P,a)$.
\end{pty}
\begin{proof} Let~$u$ be a word in~$\Abs$.
  \begin{align*}
     \val{u} \in \dX(I\oplus P,a) & \iff \val{au}\in I\oplus P \\
    &\iff \val{au} \in I \quad\text{or}\quad \val{au} \in P \quad\text{but not both}\\
    &\iff \val{u} \in \dX(I,a) \quad\text{or}\quad \val{u} \in \dX(P,a) \quad\text{but not both} \\
    &\iff \val{u}\in \big(\dX(I,a)\oplus\dX(P,a)\big)
    \qedhere
  \end{align*}
\end{proof}

\begin{lem}\llemma{Xc-corr}
  If~$S$ denotes \aupsn, then the following hold.
  \begin{gather}
    \lequation{Xc-corr-i} \forall u\in \Abs\quantsp \val{u}\in S \iff \dX(S,u)\mathbin{\ni} 0 \\
    \lequation{Xc-corr-ii} S = \setq{u\in\Abs}{\dX(S,u)\mathbin{\ni} 0}
  \end{gather}
\end{lem}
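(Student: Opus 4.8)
The plan is to derive both statements from the single ``one-letter'' equation~\requation*{succ-by-a} by an induction on word length. Concretely, I would first prove the general identity
\begin{equation*}
  \forall S\subseteq\N\quantvrg\forall u,w\in\Abs\quantsp \val{u\xmd w}\in S \iff \val{w}\in\dX(S,u)\eqvrg
\end{equation*}
by induction on~$\wlen{u}$. The base case~$u=\epsilon$ is immediate since~$\dX(S,\epsilon)=S$. For the inductive step, write~$u=a\xmd v$ with~$a\in\Ab$ and~$v\in\Abs$; then~$\val{u\xmd w}=\val{a\xmd(v\xmd w)}$, so~\requation*{succ-by-a} applied to the set~$S$ gives~$\val{u\xmd w}\in S\iff\val{v\xmd w}\in\dX(S,a)$, and the induction hypothesis applied to the set~$\dX(S,a)$ and the shorter word~$v$ turns this into~$\val{w}\in\dX(\dX(S,a),v)$, which is~$\val{w}\in\dX(S,u)$ by definition of the extension of~$\dX$ to~$\Abs$.

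Taking~$w=\epsilon$ in the displayed identity yields~$\val{u}=\val{u\xmd\epsilon}\in S\iff\val{\epsilon}=0\in\dX(S,u)$, that is,~\requation*{Xc-corr-i}. For~\requation*{Xc-corr-ii}, recall that under the accept-by-value convention the set~$S$ is identified with the language~$\cod{S}\xmd0^*=\setq{u\in\Abs}{\val{u}\in S}$ of all its base-$\base$ expansions; by~\requation*{Xc-corr-i} this language is exactly~$\setq{u\in\Abs}{\dX(S,u)\ni0}$, which is the claim.

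I do not expect any real obstacle: the argument is a routine induction whose only ingredient is~\requation*{succ-by-a} (itself a consequence of~\requation*{eval-left} and~\requation*{dX}). The two small points to be careful about are the compatibility of peeling the first letter of~$u$ with the extension of~$\dX$ to words, and the interpretation of the equality in~\requation*{Xc-corr-ii} through the accept-by-value identification. It may also be worth noting that the ultimate periodicity of~$S$ is not used in the identity itself; it enters only through~\rlemma{succ-of-set}, which guarantees that each set~$\dX(S,u)$ is again u.p.\ --- a fact needed in later sections rather than here.
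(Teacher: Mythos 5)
Your proof is correct and matches the paper's approach: the paper's own (sketched) proof is exactly ``\requation{Xc-corr-i} is shown with an induction based on \requation{succ-by-a} while \requation*{Xc-corr-ii} is a reformulation of \requation*{Xc-corr-i}'', which is what you carry out. Your observations about the accept-by-value identification in~\requation*{Xc-corr-ii} and about ultimate periodicity not being needed for the identity itself are both apt.
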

\begin{proof}[Sketch]
  \requation{Xc-corr-i} is shown with an induction based on \requation{succ-by-a}
  while \requation*{Xc-corr-ii} is a reformulation of \requation*{Xc-corr-i}.
\end{proof}

\subsection{The class \UP}

In the following, we manipulate sets of sets of nonnegative integers (\ie,
subsets of~$\powerset(\N)$).
For the sake of clarity, we denote such objects with a bold font.
If~$S$ and~$T$ denote two subsets of~$\N$, we say that~$T$ is
\emph{$\dX$-reachable} from~$S$ if there is a word~$u$ in~$\Abs$ such
that~$\dX(S,u)=T$.
If~$S$ is \up, \rlemma{succ-of-set} yields that~$T$ is also \up\@
Moreover, Properties~\rproperty*{peri-decr} and~\rproperty*{pre-peri-decr}
ensure that the period and preperiod of~$T$ are smaller than the ones of~$S$.
Hence, finitely many sets are $\dX$-reachable from~$S$.

\begin{defi}\ldefinition{AcS}
  Let~$S$ be \aupsn and let~$\Qf\subseteq\powerset(\N)$ be
  the set of all sets $\dX$-reachable from~$S$.
  We denote by~$\AcS$ the automaton defined by:
  \begin{equation*}
    \AcS=\aut{\Ab,\,\Qf,\,S,\,\dX_{|\Qf},\,\Ff} \eqvrg
  \end{equation*}
  where~$\dX_{|\Qf}$ is  the restriction of~$\dX$ to~$\Qf\times \Ab$
  and~$\Ff=\setq{T\in\Qf}{T\mathbin{\ni}0}$.
\end{defi}

Then, the next proposition follows directly from \rlemma{Xc-corr}.

\begin{prop}
  For every \up set~$S\subseteq\N$, the automaton~$\AcS$ is the minimal automaton
  that accepts~$S$.
\end{prop}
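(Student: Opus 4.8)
The plan is to show that $\AcS$ accepts $S$ and that it has no more states than any other automaton accepting $S$, which — together with completeness — characterises the minimal automaton up to isomorphism.

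First I would check that $\AcS$ accepts exactly $S$. By \rdefinition{AcS}, a word $u\in\Abs$ is accepted by $\AcS$ iff the run of $u$ exists and ends in a final state, i.e.\ iff $\dX(S,u)$ is defined (which it always is, since $\Qf$ is closed under $\dX$ by construction) and $\dX(S,u)\in\Ff$, that is, $\dX(S,u)\ni 0$. By \requation*{Xc-corr-ii} in \rlemma{Xc-corr}, this set of words is precisely $S$. Hence $\behav{\AcS}=S$ (recall our convention that "an automaton accepts $S$" means it accepts $\cod{S}\,0^*$; since $\dX(S,u0)=\dX(S,u)$ whenever $\val{u0}=\val{u}$, acceptance is indeed by value). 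Moreover $\AcS$ is reachable by definition of $\Qf$ and complete because $\dX(S,a)$ is defined for every $S\subseteq\N$ and every $a\in\Ab$, and stays in $\Qf$.

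Next I would argue minimality via the Myhill--Nerode / quotient picture. Let $\Bc=\aut{\Ab,Q,i,\delta,F}$ be any reachable complete automaton accepting $S$. Define $\psi:Q\to\Qf$ on the reachable part by $\psi(i\cdot u)=\dX(S,u)$; I must check this is well defined, i.e.\ that $i\cdot u = i\cdot v$ implies $\dX(S,u)=\dX(S,v)$. This holds because for every $w\in\Abs$ we have $\val{uw}\in S\iff i\cdot uw\in F\iff i\cdot vw\in F\iff\val{vw}\in S$, and by \requation*{Xc-corr-i} (applied with the fact that $\val{uw}\in S\iff\val{w}\in\dX(S,u)$, which is \requation*{succ-by-a} iterated) this forces $\dX(S,u)=\dX(S,v)$. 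The map $\psi$ is surjective onto $\Qf$ since every element of $\Qf$ is $\dX(S,u)$ for some $u$ and $\Bc$ is reachable; it respects transitions by \requation*{dX} unwound; it sends $i$ to $S=\dX(S,\epsilon)$; and $\psi^{-1}(\Ff)$ is exactly $F$ because $i\cdot u\in F\iff\val{u}\in S\iff\dX(S,u)\ni 0\iff\psi(i\cdot u)\in\Ff$. So $\psi$ is an automaton morphism $\Bc\to\AcS$ in the sense of \rdefinition{auto-morp}, whence $|\Qf|\le|Q|$, i.e.\ $\AcS$ has no more states than any reachable complete automaton accepting $S$, and in particular no more than the minimal one.

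Finally, combining the two halves: $\AcS$ is a complete automaton accepting $S$, and its number of states is minimal among such automata, so it is \emph{the} minimal automaton accepting $S$ (uniqueness up to isomorphism being the standard fact recalled before \rdefinition{grou-auto}). The main obstacle is the well-definedness check for $\psi$, which is the concrete content of "states of $\Bc$ correspond to residual languages"; everything else is routine bookkeeping with \requation*{dX} and \rlemma{Xc-corr}. (One should also note that $\Qf$ is finite by the remark preceding \rdefinition{AcS}, so $\AcS$ is a genuine finite automaton and the comparison $|\Qf|\le|Q|$ is meaningful.)
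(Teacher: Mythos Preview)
Your argument is correct and is exactly the unpacking the paper leaves implicit: the paper only says the proposition ``follows directly from \rlemma{Xc-corr}'', and your quotient/Myhill--Nerode argument is the standard way to read that sentence (states of $\AcS$ are the residuals of $S$, hence $\AcS$ is the Nerode automaton).

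One small slip that you should remove: the parenthetical claim ``$\dX(S,u0)=\dX(S,u)$ whenever $\val{u0}=\val{u}$'' is false in general (take $b=2$, $S=\{1\}$: then $\dX(S,\epsilon)=\{1\}$ but $\dX(S,0)=\emptyset$). Fortunately it is also unnecessary: you have just shown via \rlemma{Xc-corr} that $u$ is accepted iff $\val{u}\in S$, which is already acceptance by value, so the parenthetical can simply be dropped without harming the proof.
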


\begin{defi}
  We denote by \UP the class of all minimal automata that accept \upsns:
    $\text{\UP} = \setq{\AcS}{S\text{ is \aupsn}}$~.
\end{defi}

Now, let us translate \rlemma{succ-of-set} in terms
of automata for future reference.

\begin{lem}\llemma{up-sep-succ}
  Let $\Ac=\aut{\Ab,\,Q,\,i,\,\delta,F}$ be an automaton. The following are equivalent.
  \begin{subthm}
    \item $\Ac$ belongs to \UP,
    \item For every letter~$a\in\Ab$, the automaton $\Bc_a$ belongs to \UP,
          where~$\Bc_a$ is the reachable part of~$\aut{\Ab,\,Q,\,\delta(i,a),\,\delta,F}$.
  \end{subthm}
\end{lem}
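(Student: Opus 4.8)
The plan is to reduce this automaton-level statement to the set-level statement \rlemma{succ-of-set} by identifying, for each letter $a$, the language accepted by $\Bc_a$ with a set of the form $\dX(S,a)$. First I would observe that since $\Ac$ is deterministic and each $\Bc_a$ is obtained by moving the initial state to $\delta(i,a)$ and taking the reachable part, the languages involved are related in a transparent way: a word $u$ is accepted by $\Bc_a$ if and only if $\delta(i,a\xmd u)\in F$, i.e. if and only if $au$ is accepted by $\Ac$. Thus, whatever subset $S\subseteq\N$ the automaton $\Ac$ accepts (in the accept-by-value convention fixed in \rsection{inte-base-defi}), the automaton $\Bc_a$ accepts exactly $\setq{n\in\N}{(nb+a)\in S}=\dX(S,a)$ by \requation{dX} — using \requation{eval-left} to translate "$au$ is accepted" into "$\val{u}b+a\in S$" as in \requation{succ-by-a}.

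Now I would run the equivalence in both directions. For (i)$\implies$(ii): if $\Ac\in\text{\UP}$ then $\Ac$ is, by \rdefinition{AcS} and the proposition just above, of the form $\AcS$ for some \upsn $S$; in particular $\Ac$ accepts the \up set $S$. By \rlemma{succ-of-set}, each $\dX(S,a)$ is \up, and the automaton $\Bc_a$ accepts $\dX(S,a)$. It remains to argue $\Bc_a$ is itself minimal, hence lies in \UP: this follows because $\Bc_a$ is a subautomaton (restricted to reachable states) of the minimal automaton $\Ac=\AcS$, and by \rlemma{Xc-corr} a state $T$ of $\AcS$ is characterised by the property that $\val{u}\in T \iff \dX(T,u)\ni 0$, so distinct reachable states accept distinct languages; thus $\Bc_a$ is reachable and has all states pairwise inequivalent, i.e. it is minimal. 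Equivalently, one invokes \rdefinition{AcS} directly: the set of states $\dX$-reachable from $\dX(S,a)$ is a subset of those $\dX$-reachable from $S$, so $\Bc_a\cong\AcS[\dX(S,a)]$, which is minimal by construction. For (ii)$\implies$(i): let $S$ be the set accepted by $\Ac$ (well defined by the accept-by-value convention). Each $\Bc_a\in\text{\UP}$ accepts some \up set, and that set is $\dX(S,a)$ by the computation above; so $\dX(S,a)$ is \up for every $a$, whence by \rlemma{succ-of-set} $S$ is \up. Then $\Ac$ is an automaton accepting a \up set; to conclude $\Ac\in\text{\UP}$ I would note that $\Ac$ must be minimal, since if it were not, two $\phi$-equivalent states would force two $\Bc_a$'s to coincide in a way incompatible with minimality — more directly, minimality of $\Ac$ is part of the standing hypothesis framing of the lemma, or else one replaces $\Ac$ by its minimisation, which does not change the accepted set and is $\AcS$.

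The one subtle point — and the step I expect to require the most care — is the bookkeeping around minimality and the accept-by-value convention: \UP consists of \emph{minimal} automata, so in the direction (ii)$\implies$(i) I must make sure $\Ac$ is already minimal (or explicitly state the lemma is read for minimal $\Ac$), and in the direction (i)$\implies$(ii) I must verify that passing to the reachable part of $\aut{\Ab,Q,\delta(i,a),\delta,F}$ preserves minimality. Both are handled cleanly by the characterisation \requation{Xc-corr-ii}: every reachable state $T$ of a member of \UP accepts the language $\setq{u}{\dX(T,u)\ni 0}$, and these languages are pairwise distinct for distinct sets $T$, so "reachable part of a member of \UP" is automatically minimal. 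With this observation in hand the whole argument is a short translation of \rlemma{succ-of-set}.
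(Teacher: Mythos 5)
The paper gives no proof of this lemma at all — it is stated immediately after the sentence ``Now, let us translate \rlemma{succ-of-set} in terms of automata for future reference'' and left as an evident restatement. Your proposal spells out exactly that translation — identifying the set accepted by $\Bc_a$ with $\dX(S,a)$ via \requation{eval-left} and then invoking \rlemma{succ-of-set} — and, usefully, you also flag and resolve the one genuinely delicate point, namely that membership in \UP requires minimality, so that in the direction (ii)$\implies$(i) the lemma must be read for minimal $\Ac$ (or one passes to the minimisation), and in (i)$\implies$(ii) one must check that the reachable restriction $\Bc_a=\AcS[\dX(S,a)]$ is again minimal, which follows from \requation{Xc-corr-ii}. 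This is the same approach the paper intends; your write-up is correct and, if anything, more careful than the paper's implicit argument.
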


\subsection{Atomic automata}

In this subsection, we take interest in some automata from \UP, that we call
\emph{atomic}.
%
%

\begin{defi}
  An automaton~$\AcS$ in~\UP is said \emph{atomic} if~$S$ is a purely periodic
  and its period is coprime with~$b$.
  For short, we say that an automaton is \emph{\UPatom} if it belongs
  to \UP and is atomic.
\end{defi}

\begin{figure}
  \begin{minipage}[b]{0.4\linewidth}\centering
    \begin{tabular}{cccc}
    $e$ &  $h_5(e,0)$ & $h_5(e,1)$ \\
    \midrule
    $0$ & $0$ & $2$ \\
    $1$ & $3$ & $0$ \\
    $2$ & $1$ & $3$ \\
    $3$ & $4$ & $1$ \\
    $4$ & $2$ & $4$
    \end{tabular}
    \renewcommand{\figurename}{Table}%
    \captionof{figure}{The function~$h_5$}%
    \ltable{h5}%
  \end{minipage}%
  \hfill%
  \begin{minipage}[b]{0.55\linewidth}\centering
    \includegraphics[scale=\AutScale]{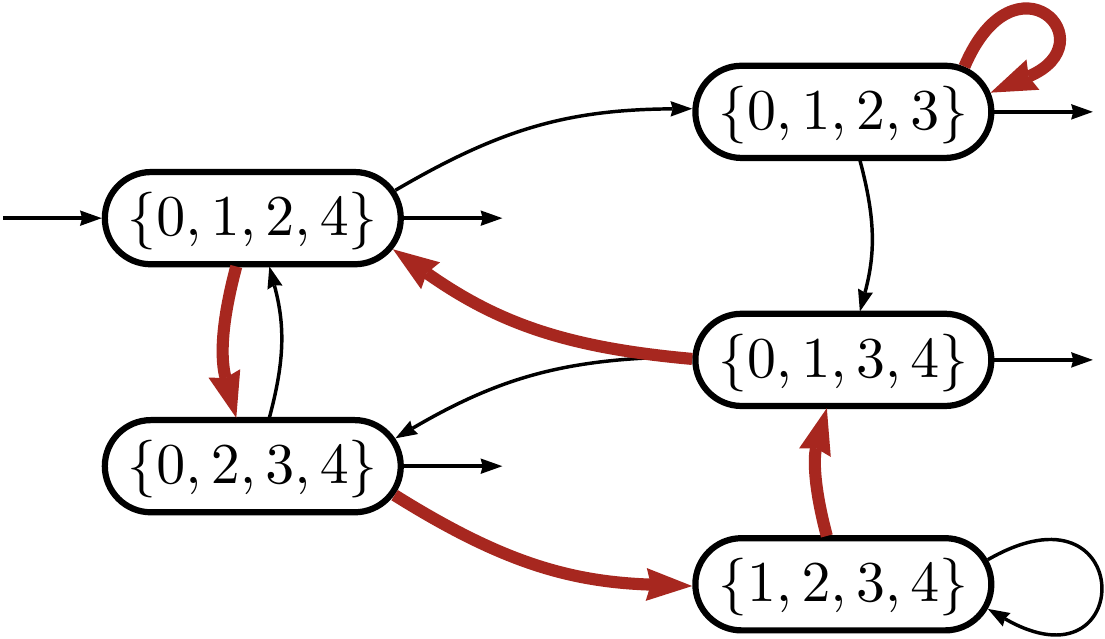}%
    \captionof{figure}{The automaton~$\AcS[(\set{0,1,2,4}+5\xmd\N)]$}%
    \lfigure{aut_2_U}%
  \end{minipage}
\end{figure}

Next, we work towards an explicit definition of \UPatom automata.
For every positive integer~$p$ coprime with~$b$, we denote by~$h_p$ the
function $(\ZZ\times\Ab)\rightarrow\ZZ$ defined by
\begin{equation}
  \forall e\in\ZZ\quantvrg\forall a\in\Ab\quantsp h_p(e,a) = (e-a)\xmd b^{\mo}\eqvrg
\end{equation}
where~$b^{\mo}$ denotes the inverse of~$b$ in~$\ZZ$.
For instance, \rtable{h5} gives the explicit definition of~$h_5$.
Let us denote by~$\powerset_{k}(\ZZ)$ the set of the subsets of~$\ZZ$ that have
cardinal~$k$.
Note that for each letter~$a$, the function~$e\mapsto h_p(e,a)$ is a
permutation of~$\ZZ$.
Hence, when~$k$ is fixed, we may lift~$h_p$ to a
function~$(\powerset_{k}(\ZZ)\times\Ab)\rightarrow\powerset_{k}(\ZZ)$
as usual:~$h_p(E,a) = \setq{h_p(e,a)}{e\in E}$.

\begin{prop}\lproposition{expl-atom}
  Let~$\EpR$ be a purely periodic subset of~$\N$.
  The automaton~$\AcS[(\EpR)]$ is isomorphic to the reachable part of
  \begin{equation*}
    \aut{\Ab,\, \powerset_{k}(\ZZ),\, R,\, h_p,\, \Ff} \quantvrg
  \end{equation*}
  where~$k=\card{R}$,~$\Ff=\setq{E \mathbin{\in} \powerset_{k}(\ZZ)}{E\ni 0}$,
  and~$R$ is naturally lifted to an element of~$\powerset_{k}(\ZZ)$.
\end{prop}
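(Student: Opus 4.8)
The plan is to exhibit an explicit isomorphism between $\AcS[(\EpR)]$ and the reachable part of the claimed automaton, using $\dX$-reachability on one side and $h_p$-iteration on the other. First I would set $k=\card{R}$ and observe, by \rlemma{succ-of-set} and Properties~\rproperty*{peri-decr} and~\rproperty*{peri-not-copr}, that every set $\dX$-reachable from $\EpR$ is purely periodic with period dividing $p$; since $p$ is coprime with $b$, the period cannot strictly decrease (dividing $p$ and being of the form $p/\gcd(p,b^j)=p$), so in fact \emph{every} $\dX$-reachable set is again of the form $R'+p\xmd\N$ for some $R'\subseteq\otop$. Moreover the cardinality of the remainder set is preserved: from the identity $S=\bigcup_{a\in\Ab}(b\times\dX(S,a)+a)$ used in the proof of \rlemma{succ-of-set}, one reads off that $\card{R}=\sum_{a}\card{R'_a}$ where $R'_a$ is the remainder set of $\dX(S,a)$, but a direct check is cleaner: reading one letter induces a bijection on $\ZZ$, so $\card{}$ is invariant. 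Hence the state set of $\AcS[(\EpR)]$ consists of purely periodic sets with remainder set a $k$-element subset of $\ZZ$, which is exactly the state set $\powerset_k(\ZZ)$ of the target automaton (restricted to its reachable part).

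Next I would define the candidate isomorphism $\phi$ sending the state $R'+p\xmd\N$ of $\AcS[(\EpR)]$ to the element $R'\in\powerset_k(\ZZ)$ (with $R'$ regarded modulo $p$, i.e.\ as a subset of $\ZZ$). This is well defined because the representation $R'+p\xmd\N$ with $R'\subseteq\otop$ is unique, and it is a bijection onto its image by the previous paragraph. It visibly sends the initial state $\EpR$ to $R$ and, since $0\in R'+p\xmd\N$ iff $0\in R'$, it matches the final-state sets. The heart of the argument is the transition compatibility: I must show $\dX(R'+p\xmd\N,\,a)=h_p(R',a)+p\xmd\N$ for every letter $a$. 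Unwinding \requation*{dX}, $n\in\dX(R'+p\xmd\N,a)$ iff $nb+a\equiv r'\pmod p$ for some $r'\in R'$, iff $n\equiv (r'-a)b^{\mo}\pmod p$ for some $r'\in R'$ (here $b^{\mo}$ exists mod $p$ precisely because $\gcd(p,b)=1$), iff $n\bmod p\in\setq{(r'-a)b^{\mo}}{r'\in R'}=h_p(R',a)$. That is the defining formula for $h_p$ lifted to $\powerset_k(\ZZ)$, so $\phi(\dX(s,a))=h_p(\phi(s),a)$ for every state $s$.

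Finally, compatibility of transitions together with the fact that $\phi$ maps the initial state of one automaton to the initial state of the other forces $\phi$ to restrict to a bijection between the reachable parts: a set is $\dX$-reachable from $\EpR$ in finitely many steps iff its $\phi$-image is $h_p$-reachable from $R$ in the same steps, by a trivial induction on word length. So $\phi$ is an isomorphism between $\AcS[(\EpR)]$ and the reachable part of $\aut{\Ab,\,\powerset_k(\ZZ),\,R,\,h_p,\,\Ff}$, as claimed. The only step requiring genuine care is the cardinality-invariance claim — that no $\dX$-reachable set from $\EpR$ has a remainder set of size other than $k$ — and the clean way to secure it is exactly the observation that each $e\mapsto h_p(e,a)$ is a permutation of $\ZZ$, which the paper has already recorded just before the statement; everything else is bookkeeping with \requation*{dX} and the coprimality hypothesis.
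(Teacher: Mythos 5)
Your proof is correct and follows essentially the same route as the paper's: the paper names the map $f(E)=E+p\xmd\N$ and declares the verification routine, while you construct the inverse map and carry out that routine, in particular the key computation $\dX(R'+p\xmd\N,a)=h_p(R',a)+p\xmd\N$ via reduction mod~$p$ (using $\gcd(p,b)=1$). One remark on your first paragraph: the assertion that ``the period cannot strictly decrease'' is not what Properties~\rproperty*{peri-decr} and~\rproperty*{peri-not-copr} give you --- the \emph{canonical} period of $\dX(S,a)$ can well be a proper divisor of $p$ (e.g.\ $S=\set{0,1,2}+3\xmd\N=\N$ in base $2$, whose canonical period is $1$). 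What you actually need, and what the proof of \rlemma{succ-of-set} does give when $\gcd(p,b)=1$ and the preperiod is $0$, is the weaker fact that $p$ remains \emph{a} period and the preperiod stays $0$, so that every $\dX$-reachable set can still be written in the form $R'+p\xmd\N$ with $R'\subseteq\otop$. That is all the rest of your argument uses, so the proof stands once that sentence is reworded.
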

\begin{proof}
  We denote by~$\Bc$ the reachable part of the automaton described in the
  statement.
  It is a routine to check that the function
  \begin{equation*}
    \forall E\in\powerset_{k}(\ZZ)\quantsp f(E) = E+p\xmd \N
  \end{equation*}
  is an isormorphism~$\Bc\rightarrow \AcS[(\EpR)]$.
\end{proof}

%
%

%
For instance, Figures~\rfigure*{aut_2_U} and~\rfigure*{aut_2_U2} show
respectively the automata~$\AcS[(\set{0,1,2,4}+5\xmd\N)]$
and $\AcS[(\set{0,1}+5\xmd\N)]$, as defined
in~\rproposition{expl-atom} (function~$h_5$ is given in \rtable{h5}).

\begin{figure}\centering
  \includegraphics[scale=\AutScale]{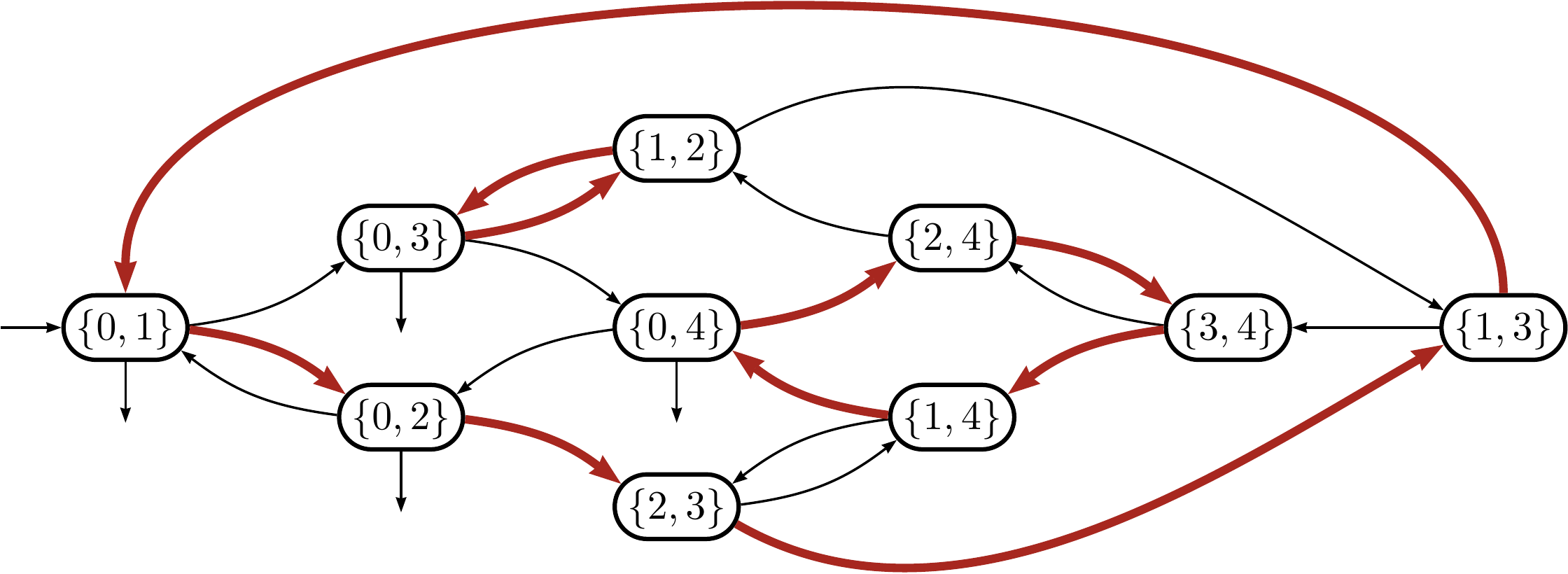}
  \captionof{figure}{The automaton~$\AcS[(\set{0,1}+5\xmd\N)]$.}
  \lfigure{aut_2_U2}
\end{figure}

\begin{lem}\llemma{atom-eq-scc}
  Let~$\Ac$ be an \upaut.
  The following are equivalent.
  \begin{subthm}
    \item $\Ac$ is atomic.
    \item $\Ac$ is a group automaton.
    \item $\Ac$ is strongly connected.
  \end{subthm}
\end{lem}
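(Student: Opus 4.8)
I would prove this by the cycle of implications (i)$\Rightarrow$(ii)$\Rightarrow$(iii)$\Rightarrow$(i), using the explicit description from \rproposition{expl-atom} for the first step, \rproperty{quot-grou-auto} (or directly \rdefinition{grou-auto}) for the second, and \rlemma{succ-of-set} together with Properties~\rproperty*{peri-decr}--\rproperty*{pre-peri-decr} for the last.

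\medskip

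\noindent
\emph{(i)$\Rightarrow$(ii).}
Suppose $\Ac=\AcS$ with $S=\EpR$ purely periodic and $\gcd(p,b)=1$. By \rproposition{expl-atom}, $\Ac$ is isomorphic to the reachable part of $\aut{\Ab,\powerset_k(\ZZ),R,h_p,\Ff}$. Since $\gcd(p,b)=1$, $b$ is invertible in $\ZZ$, so for each letter $a$ the map $e\mapsto h_p(e,a)=(e-a)b^{\mo}$ is a bijection of $\ZZ$, hence the lifted map $E\mapsto h_p(E,a)$ is a bijection of $\powerset_k(\ZZ)$. Thus in the full automaton every state has a unique incoming and a unique outgoing $a$-transition for each $a$; it remains to check this survives on the \emph{reachable part}. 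For the outgoing transitions this is immediate. For the incoming ones, observe that the reachable part is stable under $h_p(\cdot,a)$ in \emph{both} directions: its inverse is also realised by a word, namely $h_p(\cdot,a)^{-1}$ is a power of $h_p(\cdot,a)$ (each letter induces a permutation of finite order on the reachable part), so preimages of reachable states are reachable. Hence $\Ac$ is a group automaton.

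\medskip

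\noindent
\emph{(ii)$\Rightarrow$(iii).}
A group automaton is strongly connected \emph{provided it is reachable}, which every automaton $\AcS$ is by construction. Indeed, if $\Ac$ is a group automaton, its transition monoid $T$ is a group (the remark after \rdefinition{grou-auto}). Given any state $q$, reachability yields a word $u$ with $i\ap u=q$; the function $f_u\in T$ is invertible, so $f_u^{\mo}=f_v$ for some word $v$, whence $q\ap v=i$, and therefore $i\ap(uv)=q\ap v\ap u\cdots$ — more directly, $q\ap v=i$ and $i\ap u=q$ give paths in both directions between $i$ and $q$. As this holds for every reachable state $q$ and the automaton is reachable, all states are mutually strongly connected.

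\medskip

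\noindent
\emph{(iii)$\Rightarrow$(i).}
Let $\Ac=\AcS$ be strongly connected, $S$ a \upsn with period $p$ and preperiod $m$. Since $\Ac$ is strongly connected and $\AcS[(\dX(S,a))]$ is (isomorphic to) the subautomaton of $\AcS$ reachable from the state $\dX(S,a)$, strong connectedness forces $S$ itself to be $\dX$-reachable from $\dX(S,a)$; combined with $\dX(S,a)$ being $\dX$-reachable from $S$, the period and preperiod are preserved along this loop. By Properties~\rproperty*{peri-decr} and~\rproperty*{pre-peri-decr}, the period and preperiod of $\dX(S,a)$ are $\leq$ those of $S$; since the loop forces equality in both, Property~\rproperty*{peri-not-copr} gives $\gcd(p,b)=1$, and Property~\rproperty*{pre-peri-ge-1} forces $m\leq 1$, while Property~\rproperty*{pre-peri-a-neq-0} (applied with some $a\neq 0$, which exists since $b>1$) forces $m=0$. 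Hence $S$ is purely periodic with period coprime to $b$, i.e.\ $\Ac$ is atomic.

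\medskip

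\noindent
\emph{Main obstacle.} The delicate point is the reachability bookkeeping in (i)$\Rightarrow$(ii): the permutation property of $h_p(\cdot,a)$ is on the whole of $\powerset_k(\ZZ)$, and one must argue that restricting to the reachable part does not destroy the uniqueness of \emph{incoming} $a$-transitions. The clean way is the finite-order argument above (or equivalently to invoke \rproperty{quot-grou-auto}: exhibit one group automaton accepting $S$ — e.g.\ the full $h_p$-automaton, after checking it is reachable, or a cyclic $\ZZ$-style automaton — and note $\AcS$ is its minimal quotient). Everything else is routine application of the earlier lemmas and properties.
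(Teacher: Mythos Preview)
Your proof is correct and follows essentially the same cycle of implications as the paper, using the same key ingredients (\rproposition{expl-atom} for (i)$\Rightarrow$(ii), the standard ``reachable group automaton is strongly connected'' fact for (ii)$\Rightarrow$(iii), and Properties~\rproperty*{peri-decr}--\rproperty*{pre-peri-a-neq-0} for (iii)$\Rightarrow$(i)). Your extra care about the reachable part in (i)$\Rightarrow$(ii) is a point the paper glosses over; the finite-order argument you give is exactly the right justification. One minor redundancy: in (iii)$\Rightarrow$(i) you invoke \rproperty{pre-peri-ge-1} to get $m\leq 1$ before using \rproperty{pre-peri-a-neq-0}, but the latter alone (with any $a\neq 0$, available since $b>1$ and $\Ac$ is complete) already forces $m=0$ directly.
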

\begin{proof}
  $(\text{i})\implies(\text{ii})$.
  Since~$\Ac$ is atomic, it may be defined as per \rproposition{expl-atom}.
  As noted before, for each letter~$a$, the function~$e\mapsto h_p(e,a)$ is a permutation of~$\ZZ$, hence the function~$E\mapsto h_p(E,a)$ is a permutation of~$\powerset_k(\ZZ)$.

  \smallskip
  $(\text{ii})\implies(\text{iii})$.
  Reachable group automata are always strongly connected.

  \smallskip

  $(\text{iii})\implies(\text{i})$.
  From \rdefinition{AcS} and \rlemma{succ-of-set}, each state of~$A$ is
  a subset of~$\N$ that is \upf
  Since by hypothesis,~$\Ac$ is strongly connected,
  properties~\rproperty*{peri-decr} and~\rproperty*{pre-peri-decr}
  yield that all the states of~$\Ac$ have the same period~$p$ and
  preperiod~$m$.
  Then \rproperty{peri-not-copr} yields that~$p$ is coprime with~$b$
  and, since moreover~$\Ac$ is complete, \rproperty{pre-peri-a-neq-0}
  yields that~$m=0$.
\end{proof}

From the \rdefinition{AcS} of automata in \UP and the characterisation given in
\rlemma{atom-eq-scc}, it follows that the class of \UPatom automata is stable by
\emph{modification of the initial state}, as stated below.

\begin{pty}\lproperty{atom-chan-init}
  Let~$\Ac=\aut{\Ab,Q,i,\delta,F}$ be a \UPatom automaton.
  Then, for any~$q$ in~$Q$, the automaton~$\Bc_q=\aut{\Ab,Q,q,\delta,F}$ is \UPatom.

\end{pty}

\begin{rem}
  \rproperty{atom-chan-init} allows to say, by abuse of
  language, that some \scc is \emph{\UPatom} although it has no initial state.
\end{rem}


\section{Structural characterisation of the class \UP}
\lsection{UP}

%
%
%

The purpose of this section is to show a structural characterisation of the
class \UP (\rtheorem{UP-stru-char}).
Stating the characterisation first requires a few definitions.
%



\begin{defi}\ldefinition{embe-func}
  We say that \ascc~$C$ of an automaton~$\Ac$ is \emph{embedded} in another
  \scc~$D$ if there exists an \emph{embedding function}~$f:C\cup D\rightarrow
  D$, that is a function meeting the following.
  \begin{subthm}
    \item \ldefinition{embe-func-D-iden} For every~$s$ in~$D$, $f(s)=s$.
    \item \ldefinition{embe-func-exis-tran} For every~$s$ in~$C$ and letter~$a$, $(s\cdot a)$ exists if and only if~$(f(s)\cdot a)$ does.

    \item \ldefinition{embe-func-comm-tran} For every~$s$ in~$C\cup D$ and letter~$a$ such
    that~$(s\cdot a)\in C\cup D$, it holds that~$f(s\cdot a)=f(s)\cdot a$.

  \end{subthm}
\end{defi}

\noindent
An embedding function~$C\cup D\rightarrow D$ might be considered an
automaton ``pre-morphism'', in the sense that it satisfies
\requation*{auto-morp-trans} but not necessarily
\requation*{auto-morp-init} or \requation*{auto-morp-final}.

\begin{defi}
  We partition non-trivial \sccs in two types.
  The \emph{type two} contains the simple circuits labelled only by
  the digit~$0$, or \emph{$0$-circuits}.
  The \emph{type one} contains the other \sccs, that is each \scc with
  an internal transition labelled by a positive digit.
\end{defi}

\newcommand{\uprefi}[1]{{\rm(}{\UP}\,{\rm#1)}} 
\makeatletter
\newcommand*{\upconds}{\protect\@ifstar{\uprefi{*}\xspace}{Conditions~\protect\upconds*}}
\makeatother
\begin{thm}\ltheorem{UP-stru-char}
  An automaton~$\Ac$ belongs to~\UP if and only if the following holds,
  with~$\cond{\Ac}$ denoting the component graph of~$\Ac$.
  \begin{enumerate}[%
  leftmargin=1.3cm,%
  label=\uprefi{\arabic{*}},%
  ref=\mbox{\uprefi{\arabic{*}}},%
  start=0]
    \item\label{up.succ-0}\label{up.first}
        Each state and its successor by the digit~$0$ are both final or both non-final.

    \item\label{up.min} $\Ac$ is minimal and complete.

    \item\label{up.type1}
          Every type-one \scc is \UPatom.


    \item\label{up.type2}
          Every type-two \scc has in~$\cond{\Ac}$ exactly one \descendant,
          and that is a \scc of type one.

    \item\label{up.type2-bis}\label{up.last}
          Every type-two \scc is embedded in its \descendant in~$\cond{\Ac}$.

  \end{enumerate}
\end{thm}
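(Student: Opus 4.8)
The plan is to prove both directions by structural induction on the number of states of~$\Ac$, using \rlemma{up-sep-succ} as the inductive engine: an automaton is in \UP if and only if each of its letter-successor subautomata~$\Bc_a$ is. The base case is an automaton whose only \scc is a single final (or non-final) state looping on~$0$, which trivially accepts~$\N$ or~$\emptyset$; here \upconds{0}--\upconds{4} hold vacuously or trivially.

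\emph{Necessity} ($\Ac\in\UP \Rightarrow$ \upconds{0}--\upconds{4}). Write~$\Ac=\AcS$ for a \upsn~$S$. Condition~\upconds{0} is the accepting-by-value convention, which holds by \rlemma{Xc-corr} (final states are exactly those subsets containing~$0$, and $\dX(T,0)\ni 0 \iff T\ni 0$). Condition~\upconds{1} holds because~$\AcS$ is by construction the minimal (complete) automaton of~$S$. For~\upconds{2}, if~$C$ is a type-one \scc then (restricting to the subautomaton rooted in~$C$, which is still in \UP by \rproperty{atom-chan-init}-style reasoning applied via \rlemma{up-sep-succ}) \rlemma{atom-eq-scc} applies: a strongly connected automaton in \UP is \UPatom. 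The delicate conditions are~\upconds{3} and~\upconds{4}, concerning type-two ($0$-circuit) \sccs. Let~$C$ be such a \scc and pick~$s\in C$; since~$\Ac$ is complete, every state of~$C$ has a successor by~$1$, and I would track the period/preperiod of the subsets labelling states of~$C$ using \rproperty{decr-per-preper}: within a~$0$-circuit the period is constant (\rproperty{peri-decr} gives $\geq$, and going around the circuit forces equality) but the preperiod strictly decreases along each $0$-step unless it is~$0$ or~$1$ (\rproperty{pre-peri-ge-1}), so it must be constantly~$0$ or~$1$ on~$C$; in fact it is~$1$ precisely when~$C$ is non-trivially a $0$-circuit, and the period must be coprime with~$b$ (else \rproperty{peri-not-copr} would force a strict decrease around the circuit). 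Reading~$1$ from any state of~$C$ then lands (by \rproperty{pre-peri-a-neq-0}) in a state of preperiod~$0$, i.e. a purely periodic set with the same period coprime with~$b$, hence in a \UPatom \scc; and since all of~$C$ has the same period and the $1$-successors all have preperiod~$0$ with that period, they all land in the \emph{same} \scc~$D$, giving the ``exactly one \descendant, of type one'' statement~\upconds{3}. For~\upconds{4}, I would define $f:C\cup D\to D$ by sending the state $T\in C$ (a set of the form $I\oplus P$ with $\max I = 0$, i.e. $\{0\}\oplus P$ or $P$) to its ``purely periodic part'' $P\in D$ (using \rproperty{dx-prod} to see $\dX$ commutes with $\oplus$), and check the three embedding axioms of \rdefinition{embe-func}: it is the identity on~$D$, it preserves existence of transitions (completeness), and it commutes with~$\dX$ precisely because $\dX(\{0\}\oplus P,a)=\dX(\{0\},a)\oplus\dX(P,a)$ and the $\dX(\{0\},a)$ part vanishes or stays of the same shape.

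\emph{Sufficiency} (\upconds{0}--\upconds{4} $\Rightarrow \Ac\in\UP$). Here I would argue by induction using \rlemma{up-sep-succ}: it suffices to show each~$\Bc_a$ (the reachable part after reading~$a$) satisfies the same conditions and has no more states, then conclude by induction, with the base case being a trivial single-state automaton. The point is that \upconds{0}--\upconds{4} are inherited by~$\Bc_a$: minimality and completeness of~$\Bc_a$ follow from minimality of~$\Ac$ together with~\upconds{0} (the accepting-by-value property is preserved, and~$\Bc_a$ is the minimal automaton of $\dX(S,a)$ where $S=\behav\Ac$); the \scc conditions~\upconds{2}--\upconds{4} are about $\cond{\Bc_a}$, which is an induced subgraph of $\cond\Ac$, so type-one \sccs stay \UPatom and the type-two structural constraints are preserved. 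The only subtlety is showing $\Bc_a$ genuinely has fewer states (or terminating the induction): reading a positive digit strictly shrinks preperiod by \rproperty{pre-peri-a-neq-0}, and reading~$0$ either shrinks preperiod or, once preperiod is~$0$, we are inside a union of \UPatom \sccs where the induction bottoms out because such an automaton is directly in \UP by \rlemma{atom-eq-scc}. So the real work is the \emph{base of the recursion}: an automaton satisfying \upconds{0}--\upconds{4} in which every \scc is either \UPatom (type one) or a $0$-circuit embedded in a \UPatom \scc — I claim such an automaton is exactly some~$\AcS$. For this I would reconstruct~$S$ from the structure: the \UPatom sink component determines a purely periodic set $P$ with period coprime to~$b$ (via \rproposition{expl-atom}), the embedded $0$-circuit contributes the single element~$0$ (shifted appropriately as we climb back toward the initial state), and each positive-digit transition en route to the initial state, read backwards via \requation*{eval-left}, multiplies the period by~$b$ and offsets the mismatch set, exactly realising the canonical decomposition $\EpRI$. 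Verifying that this reconstructed~$S$ has $\AcS$ isomorphic to~$\Ac$ is then a routine check that the two automata have the same transition structure and final states, using \rlemma{Xc-corr}.

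\emph{Main obstacle.} The hard part will be the two ``bridge'' conditions \upconds{3}--\upconds{4} for $0$-circuits: proving in the necessity direction that a $0$-circuit in a minimal automaton of a \upsn must have a single type-one \descendant and embed into it requires carefully pinning down that the sets labelling a $0$-circuit are all of the form $\{0\}\oplus P$ for a \emph{common} purely periodic $P$ lying in the descendant component — equivalently that the preperiod is uniformly~$1$ and the period uniformly coprime with~$b$ on the circuit. This is where \rproperty{decr-per-preper} must be squeezed hardest, using the fact that going once around the $0$-circuit returns to the same state, so the non-strict inequalities \rproperty{peri-decr}, \rproperty{pre-peri-decr} collapse to equalities and the strict-decrease clauses \rproperty{peri-not-copr}, \rproperty{pre-peri-ge-1} become forbidden. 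Dually, in the sufficiency direction, reconstructing the canonical $\EpRI$ from the structural data and checking $\AcS\cong\Ac$ is the other place where care is needed, but it is more bookkeeping than insight.
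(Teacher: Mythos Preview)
Your overall architecture is the paper's: reduce via \rlemma{up-sep-succ} (and its structural counterpart, which the paper records as \rlemma{up-cond-sep-succ}) to the case where the initial state lies in a non-trivial \scc, then handle that case directly using \rproperty{decr-per-preper}, \rproperty{dx-prod} and \rlemma{atom-eq-scc}. Two places need repair, however.

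\textbf{Sufficiency: the termination argument is circular.} You justify that the induction terminates by saying ``reading a positive digit strictly shrinks preperiod by \rproperty{pre-peri-a-neq-0}''. But in this direction you do not yet know that $\Ac$ accepts a \up set, so there is no preperiod to invoke; \rproperty{decr-per-preper} is a statement about \up sets, not about automata satisfying \upconds. The correct termination is purely combinatorial: while $i$ lies in a \emph{trivial} \scc, each $\Bc_a$ omits $i$ and hence has strictly fewer states; stop when $i$ lies in a non-trivial \scc. At that point \upconds force $\Ac$ to have either one \scc (then $\Ac$ is \UPatom by~\ref{up.type1}) or exactly two --- the $0$-circuit $C\ni i$ and its unique \UPatom descendant $D$, which is a group automaton and therefore absorbs all transitions, so nothing else is reachable. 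The accepted set is then $\{0\}\oplus(R+p\N)$: the embedding $f$ gives $i\cdot u=f(i)\cdot u$ for every $u\notin 0^*$, and minimality forces $i$ and $f(i)$ to differ in final status. Your ``climb back toward the initial state, multiplying the period by $b$'' picture refers to the pre-reduction automaton and is neither needed nor correct here.

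\textbf{Necessity, conditions \ref{up.type2}--\ref{up.type2-bis}: ``same \scc'' and ``$P\in D$'' are not yet justified.} You argue that every $1$-successor of a state in a $0$-circuit $C$ has preperiod $0$ and period $p$, and conclude they all lie in \emph{the same} \scc $D$. Same period and preperiod does not by itself imply same \scc. Likewise, for the embedding you send $\{0\}\oplus P\mapsto P$ and assert $P\in D$, but you have not shown that $P$ is even a state of $\Ac$. Both gaps close with one extra step that the paper makes explicit: since $\AcS[P]$ is atomic (hence strongly connected and complete), there is a word $w\notin 0^*$ with $\dX(P,w)=P$; then $\dX(\{0\}\oplus P,\,w)=\dX(\{0\},w)\oplus\dX(P,w)=P$ by \rproperty{dx-prod}, so $P$ is reachable and hence a state. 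Now your commutation $f(\dX(X,u))=\dX(f(X),u)$ shows that the set of purely-periodic states reachable from any point of $C$ is exactly the state set of $\AcS[T]$ where $T=f(s)$ for any $s\in C$; since $\AcS[T]$ is atomic, this is a single type-one \scc, giving both \ref{up.type2} and the well-definedness of the embedding for \ref{up.type2-bis}. You already have $f$ and \rproperty{dx-prod}; you just need to use them for \ref{up.type2} as well, not only for \ref{up.type2-bis}.
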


\noindent
Note that Condition~\ref{up.succ-0} is not specific, it is more of a
precondition (hence its number), which ensures that the automaton accepts
by value.
Proof of \rtheorem{UP-stru-char} takes the remainder of
Section~\thesection{}.
Backward direction is shown in \rsection{up-corr} and forward direction is shown
in \rsection{up-compl}
In the following, we use \upconds* to refer to the
conditions~\ref{up.first} to~\ref{up.last}.
\begin{exa}
  \begin{figure}[t]
    \includegraphics[scale=\AutScale]{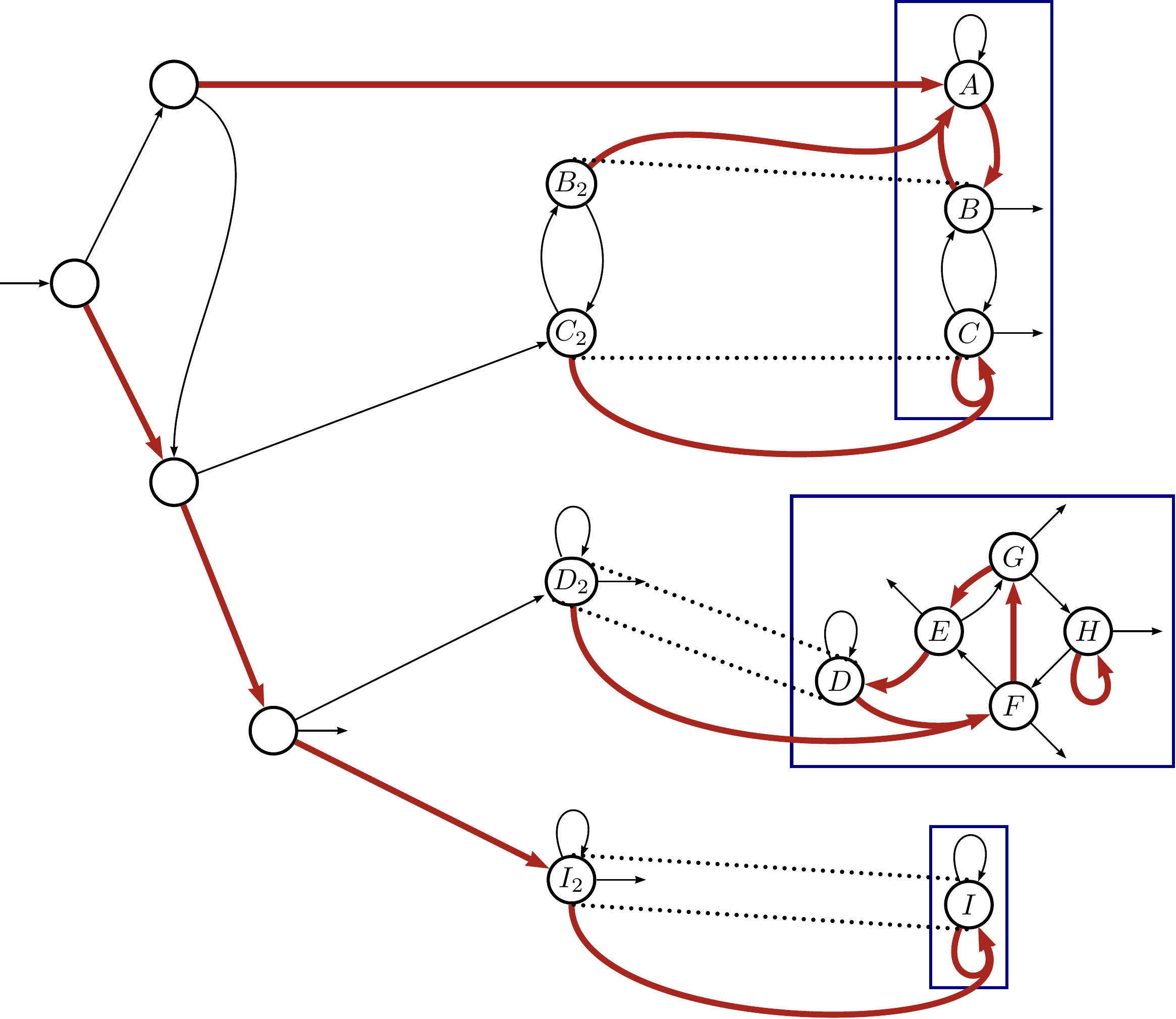}
    \caption{$\Ac_2$, an automaton that satisfies \upconds}
    \lfigure{up-example}
  \end{figure}

  \rfigure{up-example} shows an automaton~$\Ac_2$ that satisfies \upconds.
  The framed \sccs are, from top to bottom, $\AcS[S_1]$,~$\AcS[S_2]$ and~$\AcS[S_3]$
  with $S_1= (\set{1,2}+3\xmd\N)$, ${S_2=\set{0,1,2,4}+5\N}$  (\cf~\rfigure{aut_2_U})
   and~$S_3=\emptyset$.
  The three other non-trivial \sccs ($\set{B_2,C_2}$, $\set{D_2}$
  and~$\set{I_2}$) are simple 0-circuits.
  Embedding functions map each (relevant) node~$X_2$ to~$X$, with~$X$ in~$\set{B,C,D,I}$.
\end{exa}

In order to simplify the proof of both directions of \rtheorem{UP-stru-char},
we will use \rlemma{up-cond-sep-succ}, below.
It follows directly from the definition of \upconds
and states that the class of automata satisfying \upconds
possess a property much like the class \UP (\cf~\rlemma{up-sep-succ})

\begin{lem}\llemma{up-cond-sep-succ}
  Let $\Ac=\aut{\Ab,\,Q,\,i,\,\delta,F}$ be an automaton. The following are equivalent.
  \begin{subthm}
    \item $\Ac$ satisfies \upconds.
    \item For every letter~$a\in\Ab$, the automaton $\Bc_a$ satisfies \upconds,
          where~$\Bc_a$ is the reachable part of~$\aut{\Ab,\,Q,\,\delta(i,a),\,\delta,F}$.
  \end{subthm}
\end{lem}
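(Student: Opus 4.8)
The plan is to prove the two implications together, treating \upconds* one condition at a time and showing that each is ``compatible with splitting off the first letter'': $\Ac$ satisfies a given condition if and only if every $\Bc_a$ does. The preliminary fact I would set up first is how the reachable states of $\Ac$ decompose. A state is reachable in $\Ac$ exactly when it is $i$ or it is reachable from $\delta(i,a)$ for some $a\in\Ab$; hence the reachable part of $\Ac$ has state set $\{i\}\cup\bigcup_{a\in\Ab}Q_{\Bc_a}$, where $Q_{\Bc_a}$ denotes the state set of $\Bc_a$. Each $Q_{\Bc_a}$ is closed under $\delta$, is a union of whole \sccs of $\Ac$, and is downward closed for the \descendant relation in $\cond{\Ac}$ --- from $\delta(i,a)$ one reaches every state of any \scc{} it meets, hence also every \descendant of those components. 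Consequently, every \emph{non-trivial} \scc{} of $\Ac$ lies inside one single $Q_{\Bc_a}$ (it contains a state reached from $i$ by a non-empty word), and is then literally a non-trivial \scc{} of that $\Bc_a$, with the same internal transitions --- hence the same type --- and, by downward closedness, the same \descendants in the component graph; conversely every non-trivial \scc{} of any $\Bc_a$ is a non-trivial \scc{} of $\Ac$.

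With this in hand each condition is short. Condition~\ref{up.succ-0} is stated state-by-state and the completeness part of Condition~\ref{up.min} transition-by-transition, so both transfer in each direction at once (and \ref{up.succ-0} is anyway part of the standing convention that our automata accept by value). For the minimality part of~\ref{up.min}, recall that a complete reachable DFA is minimal precisely when its states are pairwise distinguishable, and that whether two states are distinguishable depends only on $\delta$ and $F$, not on the initial state; since the reachable state sets of $\Ac$ and of the $\Bc_a$ overlap and differ only by states that are unreachable in the other automaton, pairwise distinguishability of one carries over to the others. Conditions~\ref{up.type1}, \ref{up.type2} and~\ref{up.type2-bis} are universally quantified over non-trivial \sccs, and mention only the type of an \scc{}, its \descendants in the component graph, and the existence of an embedding function between a type-two \scc{} and its type-one \descendant; by the preliminary fact all of this information is the same whether read off $\Ac$ or off the $\Bc_a$ that contains the \scc{} in question together with its \descendants, and the same embedding function serves in both. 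Finally, being \UPatom does not depend on the choice of initial state (\rproperty{atom-chan-init}), so Condition~\ref{up.type1} is unambiguous for an \scc{} and transfers as well. Conjoining the five equivalences gives the lemma.

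The argument is thus essentially bookkeeping. The one spot that needs genuine care is checking that the component-graph--level conditions are insensitive to exactly the changes caused by replacing $i$ with $\delta(i,a)$ and passing to the reachable part: some \sccs disappear, and the \scc{} that used to contain or lead to the initial state may end up strictly above all the surviving ones. This is harmless precisely because $Q_{\Bc_a}$ is downward closed in $\cond{\Ac}$, so that a type-two \scc{} is never cut off from its type-one \descendant and the witnesses required by Conditions~\ref{up.type2} and~\ref{up.type2-bis} remain available inside whichever $\Bc_a$ contains that \scc{} (the minimality clause is similarly only a matter of tracking distinguishable states, using as always that $\Ac$ is reachable). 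Once this is observed, the rest is a direct rewriting of the definitions of \upconds* and of $\Bc_a$.
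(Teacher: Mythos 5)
Your bookkeeping for Conditions \ref{up.type1}--\ref{up.type2-bis} is sound: each $Q_{\Bc_a}$ is $\delta$-closed, hence a union of whole \sccs of $\Ac$ and downward closed in $\cond{\Ac}$; every non-trivial \scc lies in some $Q_{\Bc_a}$ where it keeps the same internal transitions, type, \descendants, and embedding functions; and the \UPatom property is independent of the choice of initial state by \rproperty{atom-chan-init}. The paper itself offers no argument (it says the lemma ``follows directly from the definition''), so for those three conditions your write-up simply spells out what the paper leaves implicit.

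However, the implication (ii)~$\Rightarrow$~(i) fails for the minimality part of \ref{up.min}, and the sentence ``pairwise distinguishability of one carries over to the others'' is correct only in the direction from $\Ac$ to each $\Bc_a$. The reachable part of $\Ac$ is $\set{i}\cup\bigcup_a Q_{\Bc_a}$; the state $i$ need not lie in any $Q_{\Bc_a}$ (precisely when $i$ has no incoming transition), and two states can sit in disjoint $Q_{\Bc_a}$'s, so in either case the minimality of the $\Bc_a$'s says nothing about the relevant Nerode classes. Concretely, over $\Ab=\set{0,1}$ take $Q=\set{i,p,q}$, all three final, with $\delta(i,0)=p$, $\delta(i,1)=q$, and $p$, $q$ absorbing. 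Each $\Bc_a$ is the one-state automaton for $\N$ and satisfies \upconds, while $\Ac$ satisfies every condition except minimality (three mutually equivalent states accepting $\N$). So (ii) holds and (i) fails. The paper avoids trouble because it only invokes the backward direction under the ambient hypothesis that $\Ac$ belongs to \UP, which already gives minimality; as a bare ``iff'' the lemma is slightly overstated, and your proof inherits that overreach. A clean repair is to observe that if $i$ does have an incoming transition, then some $Q_{\Bc_a}$ contains $i$ and, being $\delta$-closed, equals the whole reachable state set, after which the ``carries over'' argument genuinely closes; the remaining case of an initial state with no incoming transition must then either be excluded or handled by assuming $\Ac$ minimal from the outset.
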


\subsection{Backward direction of Theorem~\ref{t.UP-stru-char}}
\lsection{up-corr}

\begin{prop}\lproposition{UP-corr}
  An automaton that satisfies \upconds belongs to~\UP. 
\end{prop}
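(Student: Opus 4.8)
The plan is to prove \rproposition{UP-corr} by induction on the number of states of~$\Ac$, using \rlemma{up-cond-sep-succ} and \rlemma{up-sep-succ} to reduce to the case where the initial \scc is the whole automaton in the relevant sense. More precisely, given $\Ac$ satisfying \upconds, I would look at the \scc~$C_0$ of the initial state~$i$. If $C_0$ is trivial (a single isolated state) or is a $0$-circuit, then reading any letter~$a$ leads into a strictly smaller subautomaton: by \rlemma{up-cond-sep-succ} each $\Bc_a$ still satisfies \upconds, so by induction each $\Bc_a$ is in~\UP, and then \rlemma{up-sep-succ} gives $\Ac\in\UP$. (One must check the base case and that $\Bc_a$ genuinely has fewer states, which holds because $i$ is not reachable from $\delta(i,a)$ in these cases — $C_0$ being trivial means $i$ lies on no circuit, and $C_0$ being a $0$-circuit is excluded here since that would make $C_0$ a type-two \scc which, reading the letter~$0$ around the circuit, returns to a state of~$C_0$; so actually one has to handle the $0$-circuit case too, see below.)

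So the two genuinely new cases are: (a) $C_0$ is a type-one \scc, and (b) $C_0$ is a type-two \scc ($0$-circuit). In case~(a), $C_0$ is \UPatom by~\ref{up.type1}, and I claim $\Ac$ has no other \scc at all: since $\Ac$ is complete and $C_0$ is \UPatom hence a group automaton (\rlemma{atom-eq-scc}), every state of $C_0$ already has an outgoing transition by every letter staying inside $C_0$, so no transition leaves $C_0$; by reachability $\Ac=C_0$, which is in \UP by definition. In case~(b), $C_0$ is a $0$-circuit; by~\ref{up.type2} it has exactly one \descendant~$D$, which is type one, hence \UPatom, and by~\ref{up.type2-bis} there is an embedding function $f:C_0\cup D\to D$. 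Reading the letter~$0$ from $i$ moves around the $0$-circuit; reading any nonzero letter~$a$ from any state of $C_0$ must (by~\ref{embe-func-exis-tran} applied to $f$, since $f(i)\in D$ and $D$ is complete) land in~$D$, and thereafter we are inside the \UPatom \scc~$D$ together with whatever hangs off it — but $D$ \UPatom and complete means, as in case~(a), nothing hangs off $D$, so the reachable part from $\delta(i,a)$ is just $D$ (with a new initial state), which is \UPatom by \rproperty{atom-chan-init}, hence in \UP.

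Now the point is to assemble these pieces. Let $S$ be the \up set that "ought to be" accepted: concretely, if the \descendant $D$ realises the atomic set $P=\EpR$ (up to change of initial state), the embedding $f$ lets one read off that $\Ac$ accepts $I\oplus P$ for a suitable finite $I$ determined by the non-final/final pattern along the $0$-circuit $C_0$ together with $f$. Rather than computing $I$ explicitly, I would argue structurally: build the candidate minimal automaton $\AcS[S]$ and exhibit an isomorphism $\Ac\to\AcS[S]$, or — cleaner — simply verify directly that $\Ac$ is minimal (given by~\ref{up.min}) and accepts a \up set. For the latter, use \rlemma{Xc-corr}-style reasoning: the language accepted is $\{u : i\cdot u \in F\}$, and by induction via \rlemma{up-cond-sep-succ}/\rlemma{up-sep-succ} each $L(\Bc_a)$ is \up; then $L(\Ac)=\bigcup_{a\in\Ab}(b\,L(\Bc_a)+a)$ (the identity used in the proof of \rlemma{succ-of-set}, valid because of~\ref{up.succ-0}), and since finite union, multiplication by $b$ and translation preserve ultimate periodicity, $L(\Ac)$ is \up; finally $\Ac$ is minimal and complete by~\ref{up.min}, so $\Ac = \AcS[L(\Ac)] \in \UP$.

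The main obstacle I anticipate is the bookkeeping in case~(b): one must be careful that the $0$-circuit $C_0$ does not coincide with or overlap $D$, that every nonzero transition out of $C_0$ indeed enters $D$ (this is exactly what the embedding axiom~\ref{embe-func-exis-tran} buys, but it needs $f(i)\in D$ complete to conclude the transition exists and \emph{stays} in $D$ — one also needs axiom~\ref{embe-func-comm-tran} to track where it goes), and that the induction is set up on the right parameter so that passing to $\Bc_a$ strictly decreases it even when $C_0$ is a nontrivial $0$-circuit (here the relevant measure is, e.g., the number of states minus the size of $C_0$, or simply: the $0$-circuit has positive length so $\delta(i,0)\neq i$ only after going all the way around, but $\delta(i,a)$ for $a\neq 0$ lands outside $C_0$, and $\delta(i,0)$ lands in $C_0$ which is itself a $0$-circuit handled by a sub-induction on circuit length). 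Getting this well-foundedness clean — probably by inducting on $|Q|$ and treating the $0$-circuit's internal $0$-transitions by a separate finite descent — is the delicate point; everything else follows from the already-established structure of \UPatom \sccs and the separation lemmas.
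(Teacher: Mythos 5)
Your reduction to the case where the initial state lies in a non-trivial \scc is exactly the paper's first move, and your case~(a) (type-one initial \scc forces $\Ac=C_0$) is correct and matches the paper. The analysis of case~(b) up to the observation that the only non-trivial descendant of $C_0$ is $D$, which is \UPatom, is also sound. The problem is how you propose to finish case~(b).

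The route you lean on --- the identity $L(\Ac)=\bigcup_{a\in\Ab}(b\,L(\Bc_a)+a)$ plus induction --- does not close, and the well-foundedness issue you flag at the end is not just ``delicate'': it is a genuine obstruction. When the initial state sits on a non-trivial $0$-circuit, $\delta(i,0)$ lies on the same \scc, so $\Bc_0$ has exactly the same state set as $\Ac$ (only the initial state has moved one step around the $0$-circuit). None of the measures you suggest ($|Q|$, $|Q|-|C_0|$, the circuit length) decreases when passing to $\Bc_0$, so the induction hypothesis is simply not available for the term $b\,L(\Bc_0)$. You could try to unroll the $0$-circuit and solve a fixed-point system, but that is a different (and noticeably heavier) argument than what you wrote.

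The paper avoids induction entirely in case~(b) and uses the embedding function $f:C_0\cup D\to D$ directly. Write $j=f(i)$. For any word $u$ containing at least one non-zero digit, $i\cdot u$ lands in $D$ and, by iterating condition~\ref{embe-func-comm-tran} along the run, $i\cdot u=j\cdot u$. Since $D$ (started from $j$) is \UPatom, it accepts some $\EpR$, so $\Ac$ accepts $u$ iff $\val{u}\in\EpR$ for all $u$ with $\val u\neq 0$. By~\ref{up.succ-0}, the runs of $i$ and of $j$ on $0^*$ each have constant final/non-final status; by minimality (\ref{up.min}), $i$ and $j$ cannot have the \emph{same} status on $0^*$, or they would be merged. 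Hence $\Ac$ accepts exactly $\set{0}\oplus(\EpR)$, which is \up, and so $\Ac\in\UP$. This is precisely the explicit computation of the ``suitable finite $I$'' that you chose not to carry out; doing it is what makes the proof go through without any circular induction.
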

\begin{proof}
  Let~$\Ac=\aut{\Ab,\,Q,\,i,\,\delta,F}$ be an automaton that satisfies \upconds.
  Applying Lemmas~\rlemma*{up-sep-succ} and~\rlemma*{up-cond-sep-succ} allows to
  reduce the general case to the case where the initial state of~$\Ac$
  is part of a non-trivial \scc.
  If~$\Ac$ is strongly connected, it is a \UPatom automaton
  and the statement obviously holds.
  Otherwise, \upconds imply that~$\Ac$ has exactly two \sccs such that:
  \begin{itemize}
    \item the \scc containing the initial state, denoted by~$C$, is a 0-circuit;
    \item the other \scc, denoted by~$D$, is a \UPatom automaton;
    \item $C$ is embedded in~$D$, and we denote by~$f:(C\cup D)\rightarrow D$
          the embedding function.
  \end{itemize}

  \noindent
  We write~$j=f(i)$ and the automaton~$\aut{A_b,\,C,\,j,\,\delta_{|C},\,F\cap C}$
  is thus~$\AcS[\EpR]$, for some~$p\in\N$ coprime with~$b$, and~$R\subseteq\otop$.
  Let~$u$ be a word of~$\Abs$ that contains at least one non-$0$ digit.
  Since the initial \scc is a 0-circuit,~$\delta(i\cdot u)$ is a state in~$D$.
  Since~$f$ is an embedding function, it holds that~$\delta(i\cdot u)=\delta(j\cdot u)$.
  In other words, for every~$u$ such that~$\val{u}\neq0$,~$\Ac$
  accepts~$u$ if and only if~$\val{u}\in R+p\N$.
  On the other hand, since~$\Ac$ is minimal (from~\ref{up.min}),~$i$ and~$j$ must have a different
  final/non-final status.
  It follows that the set of numbers accepted by~$\Ac$ is ${\set{0}\oplus(\EpR)}$,
  hence that~$\Ac$~belongs to \UP. 
\end{proof}

\subsection{Forward direction of Theorem~\ref{t.UP-stru-char}}
\lsection{up-compl}

\begin{prop}\lproposition{up-comp}
  Every \upaut satisfies \upconds.
\end{prop}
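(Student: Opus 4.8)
Throughout I would exploit the concrete description of automata in \UP provided by \rdefinition{AcS}: the input automaton is $\AcS$ for some \up set $S$, its states are the \upsns that are $\dX$-reachable from $S$, a state $T$ is final iff $0\in T$, and, by Properties~\rproperty*{peri-decr} and~\rproperty*{pre-peri-decr}, the period and preperiod of a state can only decrease along transitions --- hence are constant on every non-trivial \scc. Two of the five conditions then come for free. Condition~\ref{up.succ-0} holds because $0\in\dX(T,0)$ means $0\cdot b\in T$, i.e.\ $0\in T$. Condition~\ref{up.min} holds because $\AcS$ is minimal by construction, and complete since $\dX$ is total and the state set of $\AcS$ is closed under $\dX$.

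For~\ref{up.type1}, let $C$ be a type-one \scc{}, with common period $p$ and preperiod $m$ on its states, and pick one of its internal transitions carrying a positive digit. Then \rproperty{pre-peri-a-neq-0} applied to that transition gives $m=0$, while \rproperty{peri-not-copr} applied to any internal transition of $C$ gives $\gcd(p,b)=1$. So for any $T\in C$ the automaton $\AcS[T]$ is \UPatom, hence strongly connected by \rlemma{atom-eq-scc}; its state set, being the set of states reachable from $T$ in $\AcS$, is then a strongly connected set containing $T$, hence contained in, and therefore equal to, the \scc{} of $T$, which is $C$. Thus $C$ carries the \UPatom automaton $\AcS[T]$ (the choice of base point being immaterial by \rproperty{atom-chan-init}), as required.

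The substance of the proof is~\ref{up.type2} and~\ref{up.type2-bis}. Let $C$ be a type-two \scc{}, i.e.\ a $0$-circuit with states $T_0,\dots,T_{\ell-1}$ and $\dX(T_i,0)=T_{i+1}$ (indices modulo $\ell$). As before $\gcd(p,b)=1$, and \rproperty{pre-peri-ge-1} forces the common preperiod to satisfy $m\leq 1$. The value $m=0$ is excluded: then $C$ would carry a \UPatom automaton exactly as in the previous paragraph, hence (by \rlemma{atom-eq-scc}) a complete group automaton, so each of its states would have an outgoing transition labelled $1$ lying inside $C$ --- impossible, since every internal transition of a type-two \scc{} carries the digit $0$. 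Hence $m=1$, so each $T_i$ is canonically $\set{0}\oplus P_i$ with $P_i$ purely periodic of period $p$ coprime with $b$. Iterating \rproperty{dx-prod} and using $\dX(\set{0},0)=\set{0}$ and $\dX(\set{0},u)=\emptyset$ whenever $u$ contains a positive digit, one gets $\dX(P_i,0)=P_{i+1}$ and $\dX(T_0,u)=\dX(P_0,u)$ for every $u$ containing a positive digit; in particular $\dX(T_i,a)=\dX(P_i,a)$ for every positive digit $a$. Since $\AcS[P_0]$ is \UPatom, hence strongly connected by \rlemma{atom-eq-scc}, one checks that its state set equals $\setq{\dX(P_0,u)}{u\notin 0^*}$, so it is a subset of the states of $\AcS$ and forms a single \scc{} of $\AcS$, call it $D$; moreover $D$ is of type one, a complete group automaton having a positive-labelled internal transition at each state. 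Now the transitions leaving $C$ are the $0$-labelled ones (internal to $C$) and, for positive digits $a$, the transitions $T_i\mapsto\dX(T_i,a)=\dX(P_i,a)\in D$; so $C$ has $D$ as its unique \descendant, which proves~\ref{up.type2}. Finally, the map $f\colon C\cup D\to D$ that is the identity on $D$ and sends each $T_i$ to $P_i$ (a state of $\AcS[P_0]$, hence in $D$) is an embedding function: condition~\rdefinition*{embe-func-exis-tran} is automatic since $\AcS$ is complete, condition~\rdefinition*{embe-func-D-iden} holds by construction, and condition~\rdefinition*{embe-func-comm-tran} is exactly what the identities $\dX(T_i,0)=T_{i+1}$, $\dX(P_i,0)=P_{i+1}$ and $\dX(T_i,a)=\dX(P_i,a)\in D$ say; this yields~\ref{up.type2-bis}.

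I expect the genuinely delicate part to be~\ref{up.type2} and~\ref{up.type2-bis}: the essential insight is that a type-two component is, via $f$, a faithful copy of a $0$-circuit $(P_0,\dots,P_{\ell-1})$ sitting inside a single atomic component, the decomposition $T_i=\set{0}\oplus P_i$ together with \rproperty{dx-prod} being exactly the device that makes this precise. Once that is in hand, checking that $D$ is a single genuine \scc{} of $\AcS$ (equal to the state set of $\AcS[P_0]$), that $D$ is of type one, and that $f$ is an embedding is routine. The one point that must be settled before all this is the exclusion of $m=0$ in the type-two case, for which \rproperty{pre-peri-ge-1} together with the group-automaton characterisation in \rlemma{atom-eq-scc} is precisely the right tool.
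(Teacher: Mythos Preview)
Your argument is correct. The core devices are exactly the ones the paper uses --- constancy of period and preperiod on a non-trivial \scc{} via Properties~\rproperty*{peri-decr}--\rproperty*{pre-peri-a-neq-0}, the decomposition $T_i=\set{0}\oplus P_i$ on a type-two component, and \rproperty{dx-prod} to transport $\dX$ through that decomposition --- but the organisation is different. The paper first invokes Lemmas~\rlemma*{up-sep-succ} and~\rlemma*{up-cond-sep-succ} to reduce to the case where the initial state of $\AcS$ lies in a non-trivial \scc{}; after that reduction the automaton has at most two \sccs (the initial one, and possibly an atomic descendant), and the claims about $p$, $I$, $\Cf$, $\Df$ and $f$ are stated globally for $S$ itself. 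You instead bypass the reduction and treat each \scc{} of the full $\AcS$ in place, picking an arbitrary state $T$ in it and working with $\AcS[T]$. What this buys you is that you never touch the reduction lemmas; the price is that you must argue explicitly that the state set of $\AcS[P_0]$ is a genuine (maximal) \scc{} of $\AcS$ and coincides with the descendant $D$ --- a step the paper gets for free once it knows the reduced automaton has only $\Cf$ and $\Df$. Both routes rely on \rlemma{atom-eq-scc} at the same pivot (to force $m=0$ on type-one components and to exclude $m=0$ on type-two ones), and your embedding map $T_i\mapsto P_i$ is exactly the restriction of the paper's $f\colon X\mapsto X\oplus\set{0}$.
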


\begin{proof}
  Let~$\AcS$ be an \upaut which we write~$\AcS=\aut{\Ab,\,\Qf,\,S,\,\dX_{|\Qf},\,\Ff}$
  (\cf~Definitions~\rdefinition*{dX} and \rdefinition*{AcS}).
  By definition,~$\AcS$ satisfies~\ref{up.succ-0} and~\ref{up.min}.

  We write~$S$ canonically as~$S=I\oplus(R+p\N)$.
  Lemmas~\rlemma*{up-sep-succ} and~\rlemma*{up-cond-sep-succ} allow to
  reduce the general case to the case where the initial state is part of a
  non-trivial \scc.
  In other words, there exists a non empty-word~$u$ such that~$\dX(S,u)=S$.

  \begin{claim}\lclaim{up-comp.copr}
    $p$ is coprime with~$b$.
  \end{claim}
  \begin{claimproof}
    For the sake of contradiction let us assume that~$p$ is not coprime with~$b$.
    We factorise~$u$ as~$u=a\xmd v$ with~$a$ in~$\Ab$.
    From \rproperty{peri-not-copr}, the smallest period of~$\dX(S,a)$ is strictly
    smaller than~$p$.
    Hence, from \rproperty{peri-decr} the smallest preperiod of~$\dX(\dX(S,a),v)=S$
    is strictly smaller than~$p$, a contradiction.
  \end{claimproof}

  \begin{claim}
    Either~$I=\set{0}$ or~$I=\emptyset$.
  \end{claim}
  \begin{claimproof}[Sketch]
    Claim \theclaim{} is proved just as \rclaim{up-comp.copr}, but using
    Properties~\rproperty*{pre-peri-ge-1} and~\rproperty*{pre-peri-decr}.
  \end{claimproof}

  The case~$I=\emptyset$ implies that~$\AcS$ is atomic; hence~$\Ac$ obviously
  satisfies \upconds.
  It remains to treat the case where~$I=\set{0}$.
  In the following, we denote by~$\Kf$ the
  set of all purely periodic subsets of~$\N$.
  We partition~$\Qf$ as~$\Cf\uplus \Df$ where:
  \begin{itemize}
    \item $\Cf$ contains every~$X$ (in~$\Qf$) that is not purely periodic;
    \item $\Df$ contains every~$X$ (in~$\Qf$) that is purely periodic.
  \end{itemize}
  Note that since~$I=\set{0}$, every~$X$ in~$\Cf$ is of the
  form~$\set{0}\oplus Y$, with~$Y\in\Kf$.

  \begin{claim} $\Cf$ is a type-two \scc.
  \end{claim}
  \begin{claimproof}
    From \rproperty{pre-peri-a-neq-0}, reading any non-0 digit from
    any state in~$\Cf$ would reach a state in~$\Df$.
    On the other hand, from \rproperty{pre-peri-decr}, no state in~$\Cf$ is
    reachable from any state in~$\Df$.
    Since~$\Ac$ is reachable, all states in~$\Cf$ are reachable from~$S$,
    hence it is necessary that~$\Cf$ is a~$0$-circuit.
  \end{claimproof}

  Let~$f$ be the function~$(\Cf\cup\Kf)\rightarrow\Kf$ defined as follows.
  For each~$X$ in~$\Cf$, there is a~$Y\in\Kf$ such that~$X=\set{0}\oplus Y$
  and we set~$f(X)=Y$.
  For each~$X$ in~$\Kf$, we set~$f(X)=X$.

  \begin{claim}\lclaim{up-comp.f-C-D}
    $f(\Cf)\subseteq \Df$
  \end{claim}
  \begin{claimproof}
    Let~$X$ be an element in~$\Cf$ and we write~$Y=f(X)$, hence~$X=\set{0}\oplus Y$.
    Note that~$\AcS[Y]$ is atomic, hence complete and strongly-connected.
    Thus, there exists a word~$w$ such that:
    \begin{enumerate*}[before={},after={.},label={\ (\roman{*})}, afterlabel={~~}, itemjoin={{; }}, itemjoin*={{; and }}]
      \item $\dX(Y,w)=Y$
      \item $w$ does not belong to~$0^*$
    \end{enumerate*}
    From (ii),~$\dX(\set{0},w) = \emptyset$, hence \rproperty{dx-prod} yields
    that~$\dX(X,w)=Y$.
    In other words, $Y$ is reachable from~$X$, hence also from~$S$ and by
    definition of~$\AcS$,~$Y\in\Qf$.
  \end{claimproof}

  \begin{claim}\lclaim{up-comp.f-embed}
    For every~$X$ in~$\Cf$ and every word~$u$ in $\Abs$,~$f(\dX(X,u))=\dX(f(X),u)$.
  \end{claim}
  \begin{claimproof}
    The whole statement reduces easily to the case where~$u$ is a letter~$a$.
    The state~$X$ may be  written has~$\set{0}\oplus Y$
    for some set~$Y$ in~$\Kf$.
    Hence, the following concludes the proof of the claim.
    \begin{align*}
       f(\dX(X,0)) ={} &f\big(\dX(\set{0}\oplus Y,\,a)\big) \\
                                        ={} & f\big(\dX(\set{0},a)\oplus \dX(Y,a)\big) & \text{(From \rproperty{dx-prod})}\\
                                        ={} & \dX(Y,a) \\
                                        ={} & \dX\big(f(\set{0}\oplus Y),\,a\big) \\
                                        ={} & \dX(f(X),\,a)
    \end{align*}
  \end{claimproof}

  We denote by~$T$ the purely periodic set~$T=f(S)$ (hence such that~$S=\set{0}\oplus T$).  Note that, from
  \rclaim{up-comp.f-C-D},~$T$ belongs to~$\Df$.

  \begin{claim}\lclaim{up-comp.aut-acS}
    The automaton~$\smash{\aut{\Ab,\,\Df,\,T,\,\dX_{|\Df},\, F\cap D}}$ is exactly~$\AcS[T]$.
  \end{claim}
  \begin{claimproof}
    Note that the states of~$\AcS$ that are reachable from~$T$ are exactly the states of~$\AcS[T]$;
    thus, it is enough to show that all states in~$\Df$ are reachable from~$T$.
    Let~$X$ be a state in~$\Df$. 
    Since~$\AcS$ is reachable, there is a word~$w$ in~$\Abs$ such that~$\dX(S,w)=X$.
    \rclaim{up-comp.f-embed} yields that~$f(\dX(S,w))=\dX(f(S),w)$.
    Since~$f(S)=T$ and~$f(\dX(S,w))=f(X)=X$, it follows that~$\dX(T,w)=X$.
  \end{claimproof}

  \rclaim{up-comp.aut-acS} implies that~$\Df$ is \ascc of
  type one and, since it is the only one, that~$\AcS$ satisfies~\ref{up.type1}.
  The only other \scc of~$\AcS$ is~$\Cf$ and it is indeed of type two and has
  exactly one descendant ($\Df$), hence~$\AcS$ satisfies~\ref{up.type2}.
  Moreover, \rclaim{up-comp.f-C-D} ensures that we may restrict~$f$ to a
  function~$(\Cf\cup \Df)\rightarrow \Df$.
  Finally, \rclaim{up-comp.f-embed} yields that f, thus restricted, indeed embeds~$\Cf$ in~$\Df$, hence
  that~$\AcS$ satisfies~\ref{up.type2-bis}.
\end{proof}


\section{Deciding membership in \UP}
\lsection{line-comp}

The goal of \rsection{line-comp} is to describe an algorithm that decides
\rproblem{line-comp} and that runs in time~$\bigo{b \xmd n}$, where~$n$ is the
number of states of the input automaton.

\begin{prob}\lproblem{line-comp}
  Given a minimal automaton~$\Ac$, does~$\Ac$ satisfy \upconds?
\end{prob}

The hard part is to obtain a linear time-complexity in the special case where the
input automaton is strongly connected.
This algorithm is developed in details in \rsection{line-pasc}.
Then, the algorithm for the general case poses no particular difficulties
and is given afterwards in \rsection{line-comp-gene}.
%

\subsection{The strongly connected case}
\lsection{line-pasc}


%
%
%
%
%

From the definition of \upconds, \ptheorem{UP-stru-char}, \rproblem{line-comp}
is the same as \rproblem{eq-can}, below, if the input automaton is strongly
connected.

\begin{prob}[\UPatom]\lproblem{eq-can}
  %
  Given as input a minimal automaton~$\Ac$,
  is~$\Ac$ \UPatom?
\end{prob}

We will see later on (\requation{gg} and \rproposition{pasc-quot-read-para})
that one can compute in linear time the only possible
period~$p$ and remainder set~$R$ that could satisfy~$\Ac=\AcS[(\EpR)]$.
In that light, \rproblem{eq-can} reduces to \rproblem{atom-build}, below.

\begin{prob}[Atomic Construction]\lproblem{atom-build}
  Given a purely periodic set~$S$, the period of which is coprime with~$b$,
  build the automaton~$\AcS$.
\end{prob}
\rproposition{expl-atom} gives an explicit construction of~$\AcS[(\EpR)]$.
However, the time complexity of this construction
is in~$\bigo{b\xmd n \times \card{R}}$, where~$n$ is the number of state
in~$\AcS$.
Since~$\card{R}$ may be up to linear in~$p$, hence in~$n$, this does not achieve
the~$\bigo{b\xmd n}$ time-complexity we require.
%


%
In the following, we use a different route to solve \rproblem{eq-can}
in linear time.
In particular, we make great use of the fact that we are provided with the
solution (the input automaton~$\Ac$).
Hence, the algorithm developped in the remainder of Section~\thesubsection{}
does \strong{not} solve \rproblem{atom-build} in linear time.

\medskip

The outline of Section~\thesubsection{} is as follows.
Let~$\EpR$ be a purely periodic set such that~$p$ is coprime with~$b$.
In \rsection{pasc-defi}, we define a special automaton~$\pascal$
(called \emph{Pascal automaton}) that accepts the purely periodic
set~$\EpR$.
\rsection{pasc-synt-mono} gives the precise structure of the transition
monoid of~$\pascal$ (which, indeed is a group).
In \rsection{pasc-quot-prop}, we consider any strict quotient%
~$\Bc$ of~$\pascal$ and study the morphism~$\phi$ that realises this
quotient.
In particular, we show that one can deduce from the structure of~$\Ac$
the values of~$p$, of~$R$ and of a
parameter~$(h,k)$ that characterises~$\phi$.
Then, \rsection{pasc-deci-Ahk} gives a way to reconstruct~$\Bc$ in
linear time when knowing~$p$, $R$, and~$(h,k)$.
Finally, we describe in \rsection{pasc-deci-algo} an algorithm to decide
whether a given automaton~$\Cc$ is the quotient of any Pascal automaton:
\begin{enumerate*}[label={(\arabic{*})},before={},after={.},afterlabel={~}, itemjoin={;\ \ }, itemjoin*={;\ and\ \ }] 
  \item compute~$p$, $R$, and~$(h,k)$ from the structure of~$\Cc$
  \item use these data to reconstruct the corresponding~$\Bc$
  \item check whether~$\Cc$ and~$\Bc$ are isomorphic
\end{enumerate*}
This algorithm also solves \rproblem{eq-can}: the input automaton~$\Ac$
of \rproblem{eq-can} is assumed to be minimal, hence~$\Ac$ is
isomorphic to~$\AcS[(\EpR)]$ if and only if~$\Ac$ is a quotient
of~$\pascal$.
%

\subsubsection{Pascal automaton: definition and elementary properties}
\lsection{pasc-defi}

Let us consider a purely periodic set, canonically written as~$\EpR$.
We moreover assume that~$p$ is coprime with~$b$.
In the following, we define a special automaton that accepts~${R+p\N}$,
called the \emph{Pascal automaton of parameter~$(\per,R)$} and denoted by~$\pascal$.
Its principle indeed goes back to the work of the philosopher
and mathematician Blaise Pascal (\cf~\cite[preface]{Saka09}).
Since \per* and \base* are coprime, \base* is an invertible element of~$\ZZ$
and there exists some (smallest) positive integer \ord* such that
\begin{equation}\lequation{psi}
  \base^{\ord} ~\equiv~ 1~~~[\per]
  \texteq{hence that}
  \forall{k\in\N}
  \quantsp
  \base^k ~\equiv~ \base^{(k\mod{\psi})}~~~[\per] \eqpnt
\end{equation}
(In other words, \ord* is the order of~$\base$ in the multiplicative group
    of the invertible elements of~$\ZZ$.)
It follows from \requation*{eval-righ} and \requation*{psi} that the
value modulo~$\per$ of a word~$u\xmd a$ can be computed using the length modulo~$\ord$
and the value modulo~$\per$ of the word~$u$:
\begin{equation}\lequation{one-more-lett}
    \forall u\in\Abs \quantvrg
    \forall a\in\Ab \quantsp
    \val{u\xmd a}\mod \per ~\equiv~ (\val{u}\mod \per) + a \xmd \base^{\wlen{u} \mod \ord}
    ~~[\per]
    \eqpnt
\end{equation}

\medskip

In the following, integers will often be used in the place of elements
that should  belong to~$\ZZ[n]$, for some~$n$; in such a case, it is understood
that the integer is lifted to its equivalence class modulo~$n$.
This typically occurs when the results of arithmetic operations are
components of states, like in \requation{pasc-tran} for instance.

\begin{defi}\ldefinition{pasc}
  The Pascal automaton of parameter~$(\per,R)$, denoted
  by~$\pascal$, is the automaton:
  \begin{equation*}
    \pascal = \aut{\Ab,~\ZZxZZ,~ (0,0) ,~ \delta,~ R\times\ZZ[\ord]}
  \end{equation*}
  where~$R$ is lifted as a subset of~$\Z/p\Z$, and the transition
  function~$\delta$ is defined by:
  \begin{equation}\lequation{pasc-tran}
    \forall (s,t) \in \ZZxZZ \quantvrg
     \forall a\in\Ab \quantsp
     \delta((s,t),a) =(s,t)\cdot a = (s+a \base^t,\, t+1)
     ~.
  \end{equation}
\end{defi}

\begin{figure}[ht]
    \centering
    \includegraphics[scale=\AutScale]{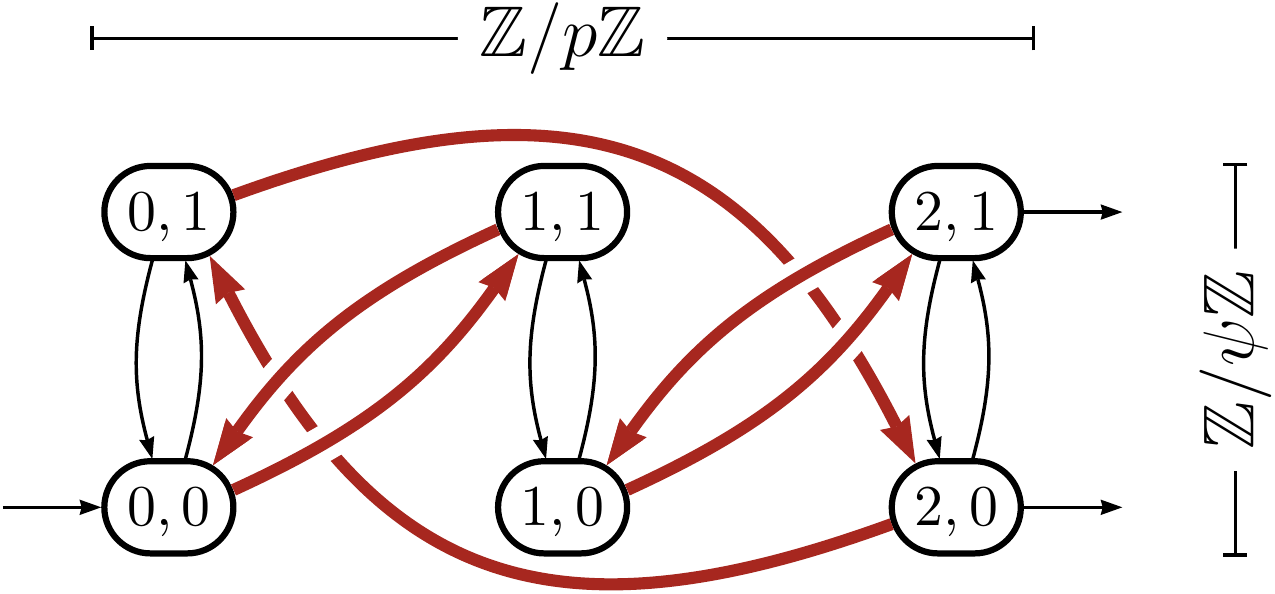}
    \caption{The Pascal automaton~$\pascal[\set{2}][3]$ in base~$2$}
    \lfigure{aut_2__2_3}
\end{figure}

\begin{exa}
  Let~$\base=2$,~$\per=3$,~$R=\set{2}$, hence~$\ord=2$.
  \rfigure{aut_2__2_3} shows the Pascal automaton~$\pascal[\set{2}][3]$.
  Recall that transition labels are omitted in figures:
  transitions labelled by~1 are drawn with a thick line
  and transitions labelled by~$0$ with a thin line.
  \rfigure{aut_2__0_7} shows~$\pascal[\set{6}][7]$; most transitions
  are dimmed for the sake of clarity.
\end{exa}

\begin{figure}[ht]
    \centering
    \includegraphics[scale=\AutScale]{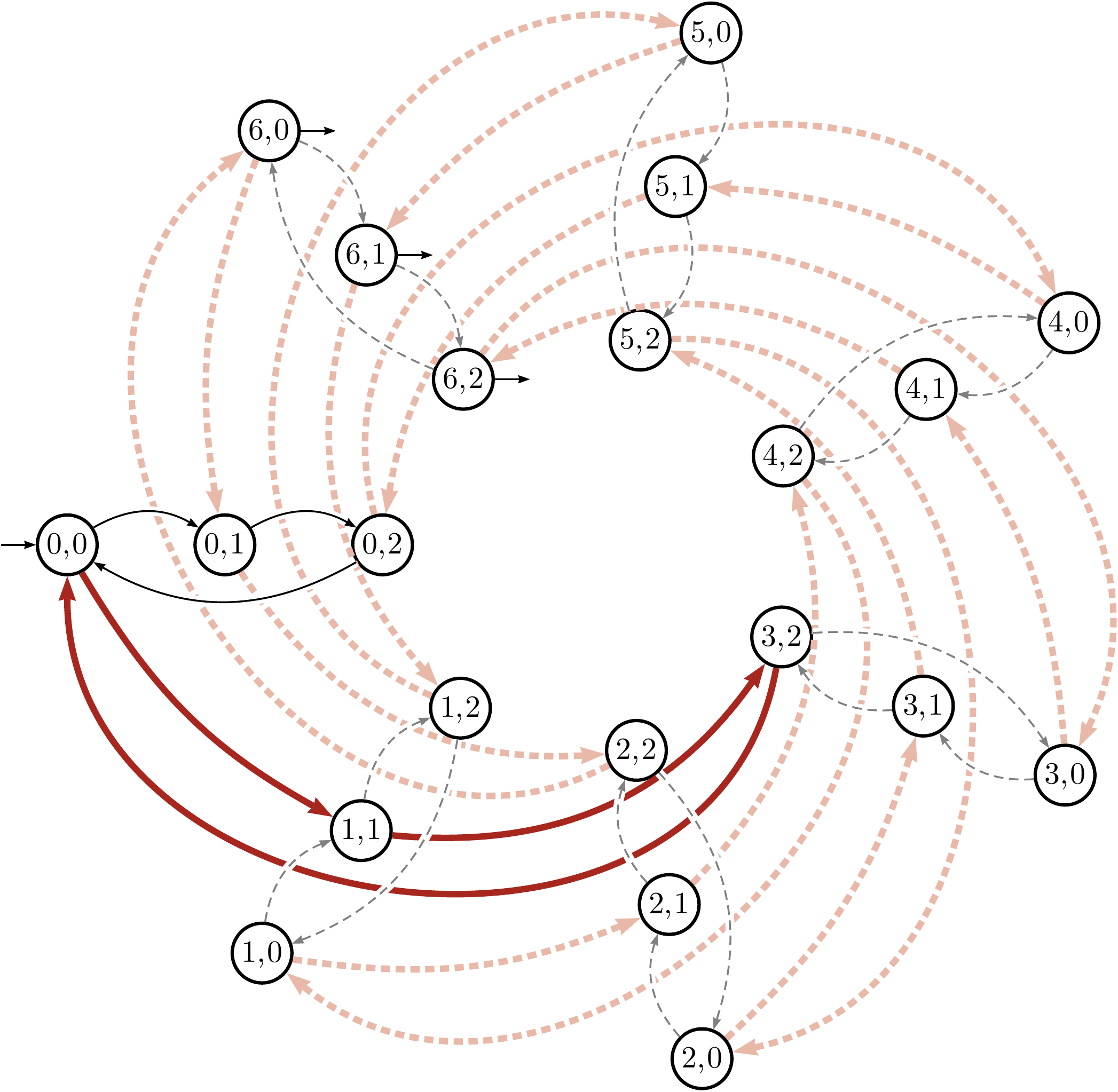}
    \caption{The Pascal automaton~$\pascal[\set{6}][7]$ in base~$2$}
    \lfigure{aut_2__0_7}
\end{figure}

Pascal automata have the expected behaviour,
as stated below.
\begin{prop}\lproposition{pasc-corr}
  The Pascal automaton~$\pascal$ accepts \EpR*.
\end{prop}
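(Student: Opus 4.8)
The plan is to show that the Pascal automaton $\pascal$ correctly computes the residue class modulo $\per$ of the value of the word read so far, together with the length of that word modulo $\ord$, and then to read off acceptance from the final-state set $R\times\ZZ[\ord]$. Concretely, I would prove by induction on $\wlen{u}$ that for every word $u\in\Abs$, the run of $u$ in $\pascal$ ends in the state $\big(\val{u}\bmod\per,\ \wlen{u}\bmod\ord\big)$. The base case $u=\epsilon$ is immediate since the run ends in $(0,0)=\big(\val{\epsilon}\bmod\per,\ 0\big)$. For the inductive step, write $u=v\xmd a$ with $a\in\Ab$; by the induction hypothesis the run of $v$ ends in $(s,t)$ with $s\equiv\val{v}\ [\per]$ and $t\equiv\wlen{v}\ [\ord]$, and then by the transition rule \requation{pasc-tran} the run of $u$ ends in $(s+a\base^t,\ t+1)$. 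That the first component equals $\val{u}\bmod\per$ follows from \requation{one-more-lett}: $\val{u}\equiv\val{v}+a\base^{\wlen{v}\bmod\ord}\equiv s+a\base^{t}\ [\per]$, using $t\equiv\wlen{v}\ [\ord]$ together with \requation{psi} to replace $\base^{t}$ by $\base^{\wlen{v}\bmod\ord}$ modulo $\per$. The second component is $t+1\equiv\wlen{v}+1=\wlen{u}\ [\ord]$.

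Once this invariant is established, the rest is bookkeeping: $u$ is accepted by $\pascal$ if and only if its run ends in $R\times\ZZ[\ord]$, i.e.\ if and only if $\big(\val{u}\bmod\per\big)\in R$ (the second component is unconstrained since the final-state set is all of $\ZZ[\ord]$ in that coordinate). By definition of $R\subseteq\otop$ lifted into $\ZZ[\per]$, this says $\val{u}\equiv r\ [\per]$ for some $r\in R$, which is exactly $\val{u}\in\EpR$. Hence the set of values accepted by $\pascal$ is precisely $\EpR$, which is the claim (recalling the convention of \rsection{inte-base-defi} that an automaton ``accepts'' the set of values of the words it accepts, and noting that $\pascal$ respects \ref{up.succ-0}-style behaviour since appending $0$ adds $0\cdot\base^t=0$ to the first component, leaving membership in $R$ unchanged).

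I do not anticipate a genuine obstacle here; the one point that needs care is the interaction between the two modular reductions — one must be sure that $\base^{t}$ and $\base^{\wlen{v}\bmod\ord}$ agree \emph{modulo $\per$} even though $t$ is only known modulo $\ord$, which is exactly the content of the second half of \requation{psi} and the reason $\ord$ was defined as the multiplicative order of $\base$ in $\ZZ[\per]$. A secondary, purely cosmetic subtlety is the implicit lifting convention announced just before \rdefinition{pasc}: the components of the states are genuinely elements of $\ZZ[\per]$ and $\ZZ[\ord]$, so the equalities in the induction are equalities of residue classes, and the arithmetic $s+a\base^{t}$ is performed in $\ZZ[\per]$ with $t$ read as an element of $\ZZ[\ord]$ only for the purpose of exponentiation — again legitimate by \requation{psi}.
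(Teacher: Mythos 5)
Your proof is correct and follows essentially the same route as the paper: the paper establishes the very same invariant, by the same induction on $\wlen{u}$ using \requation{one-more-lett} and \requation{psi}, stated as the slightly more general \rlemma{pasc-delt-exte} (the action of a word on an \emph{arbitrary} state $(s,t)$, not just the run from $(0,0)$), then specialises it in \rcorollary{pasc-corr} and reads off acceptance from $R\times\ZZ[\ord]$ exactly as you do. The extra generality in the paper's lemma is there only because it is reused later in the transition-monoid analysis; for the proposition itself your specialised invariant suffices.
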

\rproposition{pasc-corr} is a direct consequence of the
\rcorollary{pasc-corr} of the next lemma, which characterises the
paths in~$\pascal$.

\begin{lem}\llemma{pasc-delt-exte}
  Let~$u$ be a word in \Abs*.
  We write~$h=\val{u}\mod\per$ and~$k=\wlen{u}\mod\ord$.
  Then, for every state~$(s,t)$ of~$\pascal$,
  \begin{equation*}
    (s,t)\cdot u=(s+h\xmd \base^t,\, t+k)
  \end{equation*}
\end{lem}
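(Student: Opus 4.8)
The plan is to prove the formula by induction on $\wlen{u}$, iterating the transition function given in \requation*{pasc-tran} and reducing modular exponents via \requation*{psi}. For the base case $u=\epsilon$ one has $\val{\epsilon}=0$ and $\wlen{\epsilon}=0$, hence $h=0$ and $k=0$, so the claim reduces to the identity $(s,t)\cdot\epsilon=(s,t)$.

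For the inductive step, write $u=v\,a$ with $v\in\Abs$ and $a\in\Ab$, and put $h'=\val{v}\mod\per$ and $k'=\wlen{v}\mod\ord$. The induction hypothesis applied to $v$ gives $(s,t)\cdot v=(s+h'\base^t,\,t+k')$, and reading the last letter $a$ through \requation*{pasc-tran} yields
\[
  (s,t)\cdot(v\,a)\;=\;\bigl(s+h'\base^t+a\,\base^{\,t+k'},\ t+k'+1\bigr).
\]
It then remains to identify this with $(s+h\base^t,\,t+k)$, where $h=\val{v\,a}\mod\per$ and $k=\wlen{v\,a}\mod\ord$. The second coordinates agree because $\wlen{v\,a}=\wlen{v}+1$ forces $t+k'+1=t+k$ as elements of $\ZZ[\ord]$. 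For the first coordinates, \requation*{psi} gives $\base^{\,k'}\equiv\base^{\,\wlen{v}}\ [\per]$, hence $a\,\base^{\,t+k'}\equiv a\,\base^t\base^{\,\wlen{v}}\ [\per]$, so the first coordinate is congruent modulo $\per$ to $\base^t\bigl(h'+a\,\base^{\,\wlen{v}}\bigr)$; and since $h'\equiv\val{v}\ [\per]$, \requation*{eval-righ} gives $h'+a\,\base^{\,\wlen{v}}\equiv\val{v}+a\,\base^{\,\wlen{v}}=\val{v\,a}\equiv h\ [\per]$. Thus the first coordinate equals $s+h\base^t$ in $\ZZ$, which closes the induction. (One may note that the single-letter identity used here is precisely \requation*{one-more-lett}.)

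There is no genuine obstacle in this argument: the only point deserving attention is that the exponent $t$ of $\base^t$ is an element of $\ZZ[\ord]$ whereas $\base^t$ is evaluated in $\ZZ$, which is well defined exactly because of \requation*{psi} — the same fact that powers the reduction of $\base^{\,t+k'}$ above. Everything else is routine bookkeeping with Euclidean remainders, and \rcorollary{pasc-corr} together with \rproposition{pasc-corr} then follow by specialising $(s,t)$ to the initial state $(0,0)$ and reading off the acceptance condition.
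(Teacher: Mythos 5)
Your proof is correct and takes essentially the same route as the paper: induction on $\wlen{u}$, peeling off the last letter $u=v\,a$, applying the induction hypothesis and \requation*{pasc-tran}, then reducing the first coordinate modulo~$p$ via \requation*{psi} and \requation*{eval-righ} (equivalently, \requation*{one-more-lett}, as you note).
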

\begin{proof}
  Induction over the length of~$u$.
  The case~$\wlen{u}=0$ is trivial.
  Now we assume that~$u\neq\epsilon$.
  We write~$u=v\xmd a$ with~$a\in\Ab$ and~$v\in\Abs$.
  Moreover, we write~$h'=\val{v}\mod \per$ and~$k'=\wlen{v}\mod\ord~(=k\mo)$.
  We apply below induction hypothesis and \requation*{pasc-tran}.
  \begin{multline*}
    \big((s,t)\cdot v\xmd a\big) = (s+h'\xmd \base^t,\, t+k') \cdot a
    = (s+h'\xmd b^t+a\xmd b^{t+k'},\,t+k'+1) \\
    = (s+b^t(h'+ab^{k'}),\,t+k)
  \end{multline*}
  \requation{one-more-lett} yields the following and concludes the proof.
  \begin{equation*} \val{u}~=~\val{v\xmd a} ~\equiv~ \val{v}\mod \per + a\xmd \base^{\wlen{v}\mod \ord} \equiv~ h'+a\xmd \base^{k'}~~~[\per] \eqpnt
  \qedhere
  \end{equation*}
\end{proof}

\begin{cor}\lcorollary{pasc-corr}
  Let~$u$ be a word in \Abs*.
  The run of~$u$ in~$\pascal$ ends in the state~$\big(\val{u},\wlen{u}\big)$.
\end{cor}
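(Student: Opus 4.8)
The plan is to obtain the statement directly from \rlemma{pasc-delt-exte}, specialised at the initial state of~$\pascal$. Recall that, by definition of a run, the run of~$u$ in~$\pascal$ is the path labelled by~$u$ starting at the initial state, which is~$(0,0)$ (see \rdefinition{pasc}); its destination is therefore the state~$(0,0)\cdot u$.

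First I would apply \rlemma{pasc-delt-exte} with~$(s,t)=(0,0)$. Writing, as in the lemma, $h=\val{u}\mod\per$ and $k=\wlen{u}\mod\ord$, the lemma gives $(0,0)\cdot u=(0+h\,\base^{0},\,0+k)=(h,k)$. It then remains only to recognise~$(h,k)$ as the state~$\big(\val{u},\wlen{u}\big)$: this is exactly the convention fixed just before \rdefinition{pasc}, namely that an integer appearing where an element of a quotient~$\ZZ[n]$ is expected stands for its class modulo~$n$. Since the states of~$\pascal$ live in~$\ZZxZZ$, the pair~$\big(\val{u},\wlen{u}\big)$ denotes precisely~$(h,k)$, whence the run of~$u$ ends in~$\big(\val{u},\wlen{u}\big)$.

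There is no real obstacle here: the corollary is a one-line consequence of the lemma, the only point to keep track of being the identification of integers with their residue classes. In particular, no separate induction is needed, the inductive work having already been done in the proof of \rlemma{pasc-delt-exte}.
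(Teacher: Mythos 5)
Your proof is correct and follows exactly the route the paper intends: the paper gives no explicit proof for this corollary, presenting it as an immediate specialisation of \rlemma{pasc-delt-exte} at the initial state~$(0,0)$, which is precisely what you do. Your remark about the integer-to-residue-class convention (fixed just before \rdefinition{pasc}) is the right point to flag to make the ``one-line consequence'' fully rigorous.
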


%
Much like~$\AcS[(\EpR)]$, every Pascal automaton~$\pascal$ (and indeed each
quotient of~$\pascal$) is a group automaton, as stated next.
\begin{lem}\llemma{pasc-grou}
  Every Pascal automaton is a group automaton.
\end{lem}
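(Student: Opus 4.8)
The plan is to check directly the two defining conditions of a group automaton from \rdefinition{grou-auto}: that every state of~$\pascal$ has a unique outgoing transition and a unique incoming transition labelled by each letter of~$\Ab$. Both follow at once from the explicit formula for~$\delta$ in \rdefinition{pasc}.

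The outgoing condition is immediate, since $\pascal$ is deterministic and~$\delta$ is total on~$(\ZZxZZ)\times\Ab$: from a state~$(s,t)$ the only transition labelled~$a$ goes to~$(s+a\base^t,\,t+1)$. For the incoming condition, I would fix a state~$(s,t)$ and a letter~$a$ and look for all states~$(s',t')$ with $\delta((s',t'),a)=(s,t)$, i.e.\ with $(s'+a\base^{t'},\,t'+1)=(s,t)$. The second coordinate forces $t'=t\mo$ in~$\ZZ[\ord]$, which is the unique solution because~$\ZZ[\ord]$ is a group under addition; then the first coordinate forces $s'=s-a\base^{t\mo}$ in~$\ZZ$. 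Hence $(s',t')=(s-a\base^{t\mo},\,t\mo)$ is the unique predecessor of~$(s,t)$ by~$a$, which gives the incoming condition and finishes the proof.

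Equivalently — and this is the way I would phrase it if I wanted to reuse it for quotients — one observes that for each letter~$a$ the map $(s,t)\mapsto(s+a\base^t,\,t+1)$ is a bijection of the finite set~$\ZZxZZ$, with inverse $(s,t)\mapsto(s-a\base^{t\mo},\,t\mo)$ exhibited above; a deterministic finite automaton all of whose letter-actions are bijections has a transition monoid that is a submonoid of a finite symmetric group generated by permutations, hence a group, so by the remark following \rdefinition{grou-auto} the automaton is a group automaton. There is no real obstacle in this argument; the only point that deserves a moment's care is that powers of~$\base$ are read modulo~$\ord$, so that $\base^{t\mo}$ is well defined for $t\mo\in\ZZ[\ord]$ — this is exactly the content of \requation{psi}, which holds because \per* and \base* are coprime.
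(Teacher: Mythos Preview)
Your argument is correct and follows essentially the same route as the paper's own proof: both verify the group-automaton condition by solving $\delta((s',t'),a)=(s,t)$ explicitly and exhibiting the unique predecessor $(s-a\base^{t\mo},\,t\mo)$. Your write-up is in fact slightly more careful than the paper's (which contains minor index typos in the predecessor formula), and your added remark about why $\base^{t\mo}$ is well defined via \requation{psi} is exactly the right justification.
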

\begin{proof}
  Let~$\pascal$ be a Pascal automaton,~$(h,k)$ a state
  of~$\pascal$, and~$a$ a letter in \Ab*.
  From \requation*{pasc-tran}, a state~$(s,t)$ is predecessor of~$(h,k)$
  by~$a$ if and only if~$s\equiv(h-a\xmd \base^{k})~[\per]$
  and~$t\equiv h\mo~[\ord]$; such a predecessor exists and is unique
  since~$\per$ is coprime with~$\base$.
\end{proof}

\begin{cor}\lcorollary{quot-pasc-grou}
  Every quotient of a Pascal automaton is a group automaton.
\end{cor}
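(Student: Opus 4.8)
The plan is to obtain this statement as an immediate consequence of two facts that are already in place: \rlemma{pasc-grou}, which says that every Pascal automaton is a group automaton, and \rproperty{quot-grou-auto}, which says that the class of group automata is stable under quotient. So I would argue as follows. Let $\Bc$ be a quotient of a Pascal automaton $\pascal$, witnessed by some automaton morphism $\phi$. By \rlemma{pasc-grou}, $\pascal$ is a group automaton. Applying \rproperty{quot-grou-auto} to $\phi$ then gives that $\Bc$ is a group automaton, which is exactly the claim.

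A couple of points worth spelling out so the chain is airtight. First, the automata considered throughout the paper are deterministic and complete, so the ``unique outgoing transition by each letter'' half of \rdefinition{grou-auto} is automatic; the real content of being a group automaton is the ``unique incoming transition by each letter'' condition, and that is precisely what \rproperty{quot-grou-auto} is designed to transport along a morphism. Second, \rproperty{quot-grou-auto} is stated for an arbitrary quotient of an arbitrary group automaton, with no hypothesis specific to Pascal automata, so it applies here verbatim once \rlemma{pasc-grou} has supplied the needed instance of a group automaton.

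There is essentially no obstacle: the corollary is a pure instantiation, and all the work has been done in \rlemma{pasc-grou} (where the uniqueness of the $a$-predecessor of a state $(h,k)$ is read off directly from \requation*{pasc-tran} using $\gcd(\per,\base)=1$) and in the general \rproperty{quot-grou-auto}. If one preferred a self-contained argument one could instead lift the predecessor-uniqueness computation of \rlemma{pasc-grou} through $\phi$ by hand, but citing \rproperty{quot-grou-auto} is cleaner and avoids repeating that computation.
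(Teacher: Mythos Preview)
Your proposal is correct and matches the paper's intended argument: the corollary is placed immediately after \rlemma{pasc-grou} precisely so that it follows by combining that lemma with \rproperty{quot-grou-auto}, and the paper gives no further proof. One small inaccuracy in your commentary: the paper does not assume all automata are complete, only deterministic and finite; but Pascal automata are complete by construction, so this does not affect the argument.
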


%
The remainder of Section~\thesubsection{} is dedicated to devising an algorithm
to decide the following problem.

\begin{prob}[Quotient of a Pascal automaton]\lproblem{pasc-quot}
  Given as input an automaton~$\Ac$, is~$\Ac$ the quotient of some Pascal automaton?
\end{prob}

Note that~\rproblem{pasc-quot} is more general than \rproblem{eq-can}.
They become identical if we add in~\rproblem{pasc-quot} the extra
assumption that~$\Ac$ is minimal
%

\subsubsection{Transition monoids of Pascal automata}
\lsection{pasc-synt-mono}

For a fixed period~$\per$, and a variable remainder set~$R$, the Pascal
automata~$\pascal$ are isomorphic, aside from the final-state set.
In particular, their transition monoids are isomorphic as well.
We denote this monoid by \Gp* in the following;
it is indeed a group from \rlemma{pasc-grou}.
Let us now study the structure of this group.
We recall that~$\ord$ denotes the smallest positive integer such that~$b^\psi$
is congruent to~$1$ modulo~$p$, and that~$\ZZxZZ$ is the state set
of~$\pascal$.

\begin{prop}\lproposition{gp-isom-semi}
  The group~$\Gp$ is isomorphic to the semidirect
  product~$\ZZ[\per]\rtimes\ZZ[\ord]$.
\end{prop}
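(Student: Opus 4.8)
The plan is to exhibit an explicit isomorphism between the transition monoid $\Gp$ of a Pascal automaton with period $\per$ and the semidirect product $\ZZ[\per]\rtimes\ZZ[\ord]$, where the action of $\ZZ[\ord]$ on $\ZZ[\per]$ is multiplication by $\base$: the element $k\in\ZZ[\ord]$ acts on $\ZZ[\per]$ by $h\mapsto \base^{k}h$ (this is well-defined since $\base^{\ord}\equiv 1\ [\per]$ by \requation{psi}). The key observation is that, by \rcorollary{pasc-corr} and \rlemma{pasc-delt-exte}, the function $f_u$ induced by a word $u$ on the states of $\pascal$ depends only on the pair $(h,k)$ with $h=\val{u}\mod\per$ and $k=\wlen{u}\mod\ord$, and acts by $(s,t)\cdot u=(s+h\base^{t},\,t+k)$. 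So I would first define the candidate map $\Phi:\ZZ[\per]\rtimes\ZZ[\ord]\to\Gp$ sending $(h,k)$ to the transformation $(s,t)\mapsto(s+h\base^{t},t+k)$, and check it is well-defined.

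First I would verify that $\Phi$ is a monoid homomorphism. Composing the transformation for $(h_1,k_1)$ followed by that for $(h_2,k_2)$ sends $(s,t)$ to $(s+h_1\base^{t}+h_2\base^{t+k_1},\,t+k_1+k_2)=(s+(h_1+\base^{k_1}h_2)\base^{t},\,t+k_1+k_2)$, which is exactly the transformation attached to $(h_1+\base^{k_1}h_2,\,k_1+k_2)$ — and this is precisely the product in the semidirect product $\ZZ[\per]\rtimes\ZZ[\ord]$ with the action described above. (One must fix the multiplication convention so that composition in $\Gp$, written left-to-right, matches the group law; this is just a matter of bookkeeping.) Next I would check surjectivity: by \rlemma{pasc-delt-exte} every $f_u\in\Gp$ equals $\Phi(h,k)$ for $h=\val{u}\mod\per$, $k=\wlen{u}\mod\ord$, and conversely for any target pair $(h,k)$ one can realise it by a suitable word (e.g.\ take a word of length $k$ — or $k+\ord$ if one needs positive length — adjusting one digit to hit the right value modulo $\per$, using that $\base$ is invertible mod $\per$), so $\Phi$ is onto.

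Finally I would check injectivity. If $\Phi(h,k)=\Phi(h',k')$, then applying both transformations to the state $(0,0)$ gives $(h,k)=(h',k')$ directly from the formula; hence $\Phi$ is a bijection, and being a homomorphism it is a group isomorphism. This gives $\Gp\cong\ZZ[\per]\rtimes\ZZ[\ord]$. I expect no serious obstacle here: the only thing requiring care is getting the direction of the semidirect-product action and of functional composition consistent, and confirming that the action of $k\in\ZZ[\ord]$ by multiplication-by-$\base^{k}$ on $\ZZ[\per]$ is well-defined — which is exactly the content of \requation{psi}. Everything else is the routine computation already carried out in \rlemma{pasc-delt-exte}.
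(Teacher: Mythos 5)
Your proof is correct and takes essentially the same route as the paper, with a presentational difference worth noting. The paper first identifies $\Gp$ with the concrete group $(\ZZ[\per]\times\ZZ[\ord],\gx)$ via the map $\tauu\mapsto(\val{u},\wlen{u})$, and then establishes the semidirect product structure \emph{internally}: it verifies that $H=\ZZ[\per]\times\{0\}$ is a normal subgroup, that $K=\{0\}\times\ZZ[\ord]$ is a subgroup, that $H\cap K$ is trivial, and that $H\gx K$ is the whole group. You instead take the \emph{external} view: you fix in advance the action of $\ZZ[\ord]$ on $\ZZ[\per]$ to be multiplication by $\base^k$, form the abstract semidirect product with that action, and exhibit an explicit isomorphism $\Phi$ to $\Gp$, verifying the homomorphism property by directly computing the composite of two transformations. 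Both arguments hinge on the same composition formula, which comes from \rlemma{pasc-delt-exte}, and your injectivity check (evaluate at $(0,0)$) and surjectivity check (every $\tauu$ is $\Phi(\val{u}\mod\per,\wlen{u}\mod\ord)$, and conversely every $(h,k)$ is realised by some word) are both sound. What your formulation buys is that the action of $\ZZ[\ord]$ on $\ZZ[\per]$ is made fully explicit, whereas the paper's internal characterisation leaves the conjugation action implicit; the paper's version, in turn, sidesteps the bookkeeping about the order of composition and the direction of the semidirect-product multiplication that you rightly flag as the only delicate point.
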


The proof of \rproposition{gp-isom-semi} requires additional definitions
and properties.
By definition of transition monoid,
\Gp* is the set of the permutations of~$\ZZxZZ$ (the state-set of~$\pascal$)
induced by words.
For every~$u$ of \Abs*, the permutation induced by~$u$, denoted by~$\tauu$,
is defined below.
\begin{equation}\lequation{defi-tauu}
  \begin{array}{rccl}
    \tauu: & \ZZxZZ & \longrightarrow & \ZZxZZ\\
           & (s,t) & \longmapsto & (s,t)\cdot u
  \end{array}
\end{equation}
The next property follows directly from \rlemma{pasc-delt-exte}.
\begin{pty}
  For every words~$u,v\in\Abs$, the permutations~$\tauu$ and~$\tauu[v]$
  are equal if and only if both~~$\val{u}\equiv\val{v}~[\per]$
  and~~$\wlen{u}\equiv\wlen{v}~[\ord]$.
\end{pty}

Hence, the group \Gp* is isomorphic to the group~$(\ZZxZZ,\gx)$; the operation~$\gx$ is defined by
\begin{equation}\lequation{defi-gx}
  (s,t) \gx (h,k) = (s+h\xmd \base^t,\, t+k) \eqpntvrg
\end{equation}
and the following function realises the isomorphism.
\begin{equation}\lequation{iso-gx-zzzz}
  \begin{array}{rccl}
    g:  & \Gp & \longrightarrow & \ZZxZZ\\
        & \tauu & \longmapsto & \bp{\val{u},\wlen{u}}=\tauu\bpp{0,0}
  \end{array}
\end{equation}
We may rephrase the same fact by linking the transition function of~$\pascal$
(\cf~\rlemma{pasc-delt-exte}) to the~$\gx$ operation.
\begin{equation}
  \forall u\in\Abs \quantvrg \forall (s,t)\in\Gp \quantsp
    \bp{(s,t)\cdot u}= (s,t) \gx (\val{u},\wlen{u}) \eqpnt
\end{equation}
\medskip
The next properties conclude the proof of \rproposition{gp-isom-semi}.
We recall that a subgroup~$H$ of a group~$G$ is \emph{normal}
if every~$x$ in~$G$ is such that~$x\xmd H\xmd x^{\mo} \subseteq H$.

\begin{properties}\hfill
  \begin{subthm}
    \item The set~$H=\ZZ\times\set{0}$ is a normal subgroup of~$\Gp$.
    \item The set~$K=\set{0}\times\ZZ[\ord]$ is a subgroup of~$\Gp$.
    \item \lproperty{gn-inte-semi-prod}
    The group~$\Gp$ is the internal semi-direct product~$H\rtimes K$.
  \end{subthm}
\end{properties}
\begin{proof}\hfill
  \begin{enumerate}[(i)]
  \item
  Let~$(h,0)$ and~$(h',0)$ be two elements of~$H$.
  From \requation*{defi-gx}, their product~$(h,0) \gx (h',0) = (h+h',0)$ is
  indeed an element of~$H$.
  Thus,~$H$ is a subgroup of~$\ZZxZZ$.
  Let~$(s,t)$ be an element in~$\ZZxZZ$.
  It follows from \requation*{defi-gx}
  that the second component of its inverse, ${(s,t)}^{\mo}$, is necessarily~$-t$
  modulo~$\ord$.
  Hence, for every element~$(h,k)$ of~$\ZZxZZ$,
  the second component of~$\bp{(s,t)\gx (h,k) \gx {(s,t)}^{\mo}}$ is equal to~$k$.
  The case~$k=0$ yields that~$H$ is normal.

  \smallskip

  \item Shown similarly from \requation*{defi-gx}.

  \smallskip

  \item
  Every element~$(h,k)$ of~$\ZZxZZ$ may be factorised
  as~$(h,0)\gx(0,k)$, hence~$H\gx K = \ZZxZZ$.
  Since moreover~$H\cap K$ contains only the neutral
  element $(0,0)$, $\ZZxZZ=H\rtimes K$.
  \qedhere
\end{enumerate}
\end{proof}

%
%
In the following, we identify \Gp* with~$\ZZxZZ$; we may then write
the \emph{permutation}~$(s,t)\in\Gp$.
(It is in fact the permutation~$\tauu$, where~$u$ is any word that
satisfies~$\val{u}\equiv s~[\per]$ and~$\wlen{u}\equiv t~[\ord]$.)

\medskip

Since it is a transition monoid, \Gp* is generated by the permutations induced
by the letters of \Ab*.
On the other hand, it is isomorphic to~$\ZZxrZZ$ hence is obviously generated by
the elements~$(0,1)$ and~$(1,0)$.
The former is the permutation induced by the digit~$0$ while the latter
is not induced by a letter, but rather by the word~$1\xmd 0^{\ord\mo}$.
We define a new letter~$g$ whose action on~$\pascal$ is defined as
the one of~$1\xmd 0^{\ord\mo}$:
\begin{equation}\lequation{defi-g-pasc}
  \forall (s,t)\in \underbrace{\ZZxZZ}_{=\,\Gp} \quantsp
  (s,t)\pathx{g}[\Ac]\underbrace{(s+\base^t,\,t)}_{\scriptstyle=\,(s,t)\,\gx\,(1,0)}
  \eqpnt
\end{equation}
The next statement follows from \requation{defi-gx}.

\begin{pty}\lproperty{pasc-a-from-g}
  For every letter~$a$ of \Ab*,
  the actions of~$a$ and of~$g^a\xmd 0$ are equal.
\end{pty}

Thus, the letter~$g$ allows to simplify~$\pascal$ into an automaton
over the alphabet~$\set{0,g}$ without losing information.
This `equivalent' automaton, denoted by~$\pascalp$, is obtained by adding
the letter~$g$ (which acts as the word~$1\xmd 0^{\ord\mo})$
and then deleting every letter~$a\in \Ab$,~$a\neq 0$.

\begin{figure}[ht!]
  \centering
  \includegraphics[scale=\AutScale]{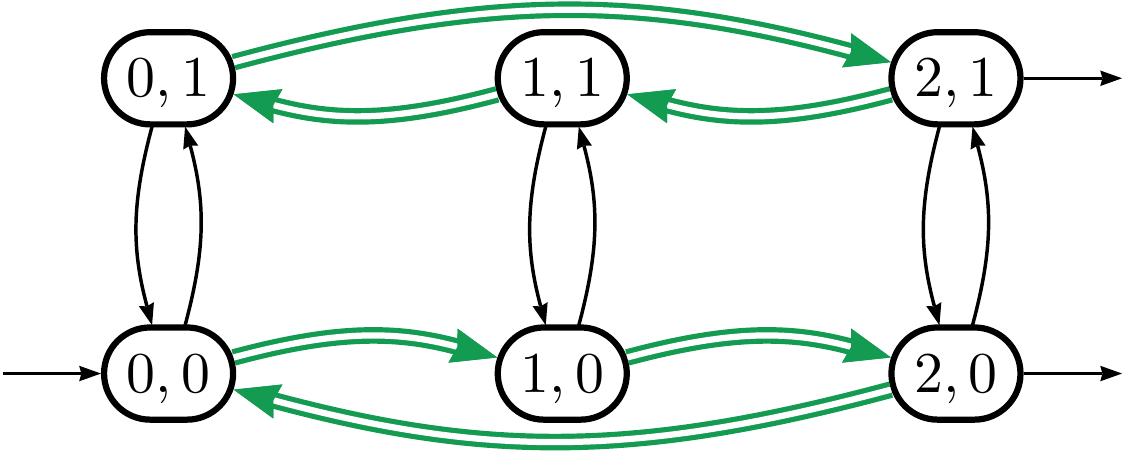}
  \caption{The simplified Pascal automaton~$\pascalp[\set{2}][3]$ in base 2}
  \lfigure{aut_2__2_3_relab}
\end{figure}

\begin{exa}
  Figures \rfigure*{aut_2__2_3_relab} and \rfigure*{aut_2__0_7_relab} show
  the automata~$\pascalp[\set{2}][3]$
  and~$\pascalp[\set{6}][7]$ respectively.
  %
  Once again, labels are omitted; transitions labelled by the digit~$0$ are
  drawn with a simple line while transitions labelled by~$g$ are drawn with a double line.

  \pagebreak[2]

  The structure of the transition monoid as a semidirect
  product is visible in \rfigure{aut_2__2_3_relab}.
  First,~$0$ induces a permutation within each column and~$g$ induces
  a permutation within each row.
  Second, the action of~$0$ is the same in each column while
  the action of~$g$ depends on the line.
  A similar observation can be made about \rfigure{aut_2__0_7_relab} by
  replacing columns and rows by spokes and concentric circles.
\end{exa}

\begin{figure}[ht!]
  \centering
  \includegraphics[scale=\AutScale]{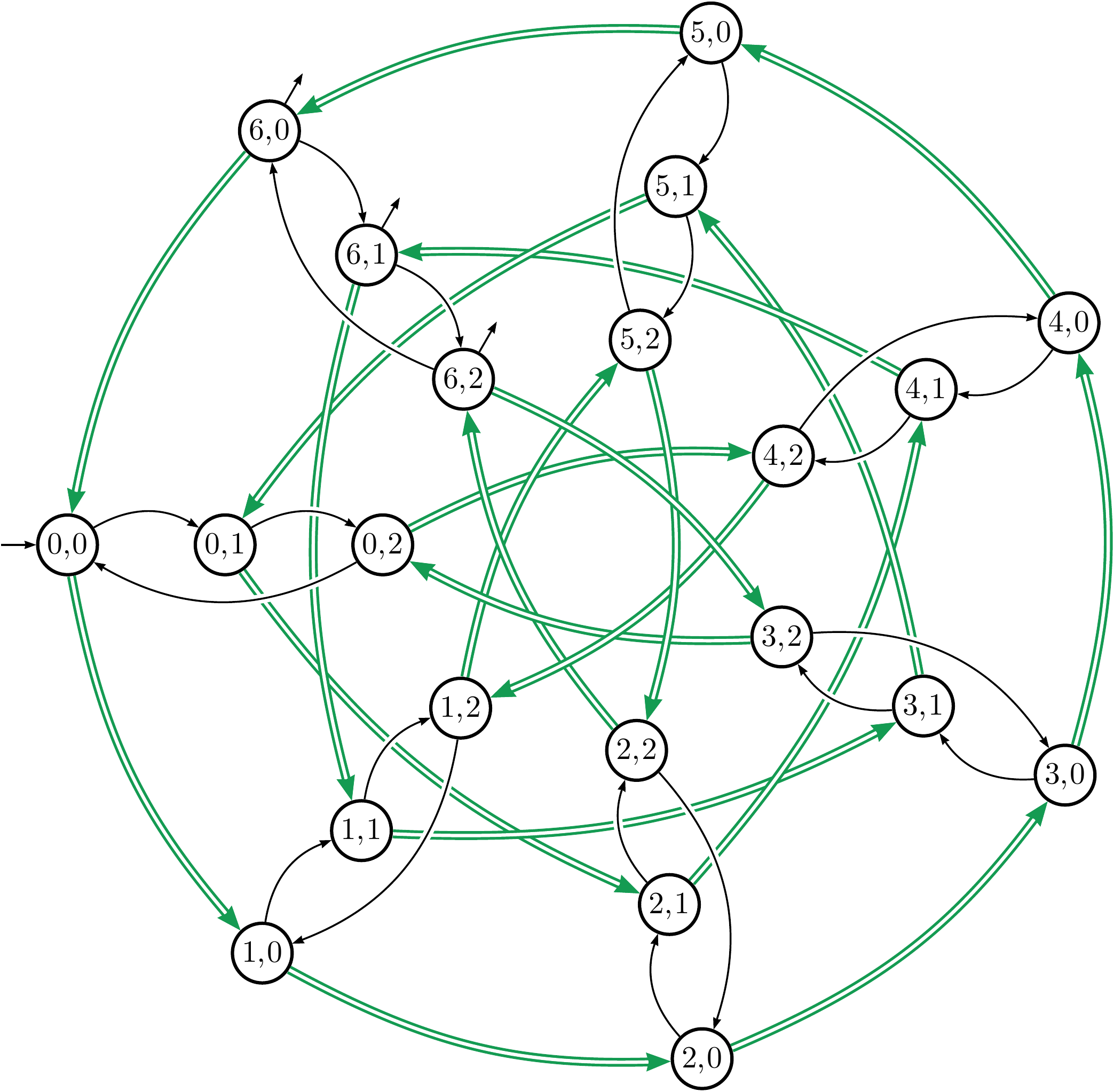}
  \caption{The simplified Pascal automaton~$\pascalp[\set{6}][7]$ in base 2}
  \lfigure{aut_2__0_7_relab}
\end{figure}

\begin{rem}
  The element~$(0, \ord\mo)$ is, in \Gp*, the inverse of~$(0,1)$.
  In \rsection{pasc-deci-algo}, we will allow to take transitions
  backward;
  the action of~$g$ is then identical to the one of the word~$1\xmd 0^{\mo}$.
  This word has the advantage to be shorter, and to be independent of~$\ord$
  (hence independent of~$\per$)
\end{rem}

\subsubsection{Properties of a quotient of Pascal automaton}
\lsection{pasc-quot-prop}

%
%
%
Here, we assume that~$\Ac$ denotes a strict
quotient of~$\pascal$, and that~$\phi$ denotes the automaton
morphism~$\pascal\rightarrow \Ac$.
Note that~$\Ac$ is a group automaton (\rcorollary{quot-pasc-grou}).
As we did for~$\pascal$ in the previous \rsection{pasc-synt-mono}, we add in~$\Ac$
transitions labelled by a new letter~$g$ whose action is the same as the
one of~$1\xmd 0^{\mo}$:
\begin{equation}\lequation{gg}
  s\pathx{g}[\Ac] s' ~\iff~ s\cdot 1 = s' \cdot 0 \eqpnt
\end{equation}
Since~$\Ac$ is a quotient of~$\pascal$, the next property follows
from \rproperty{pasc-a-from-g}.

\begin{pty}\lproperty{pasc-quot-a-from-g}
  For every letter~$a$ of \Ab*,
  the action of~$a$ in~$\Ac$ is the same as the one of the word~$g^a\xmd 0$.
\end{pty}

In the following, we give a way to compute the parameter~$(p,R)$ of the
Pascal automaton by observing the structure of~$\Ac$.  It is stated as
\rproposition{pasc-quot-read-para} after two preliminary lemmas.

\begin{lem}\llemma{phi-eq-s0-00} 
  For every~$(s,0)$ of~$\Gp$ distinct from~$(0,0)$,~$\phi\big((s,0)\big)\neq\phi\big((0,0)\big)$.
\end{lem}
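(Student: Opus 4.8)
The plan is to argue by contradiction: suppose $\phi\big((s,0)\big)=\phi\big((0,0)\big)$ for some $s\neq 0$ in $\ZZ$, and derive that the period $\per$ of $\EpR$ is not minimal, contradicting the assumption (made at the start of \rsection{pasc-defi}) that $\EpR$ is written canonically.

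First I would feed an arbitrary word $u\in\Abs$ to the two $\phi$-equivalent states. Since $\phi$ is an automaton morphism we get $\phi\big((s,0)\cdot u\big)=\phi\big((0,0)\cdot u\big)$; and \rlemma{pasc-delt-exte}, applied with $h=\val{u}\mod\per$ and $k=\wlen{u}\mod\ord$ and using $\base^0=1$, gives $(s,0)\cdot u=(s+h,\,k)$ and $(0,0)\cdot u=(h,\,k)$. Hence $\phi\big((s+h,k)\big)=\phi\big((h,k)\big)$. The next step is to observe that $(h,k)$ runs over all of $\ZZxZZ$ as $u$ runs over $\Abs$: given $k$, pick a length $n\equiv k~[\ord]$ large enough that $\base^n\geq\per$, so that the words of length $n$ already realise every residue modulo $\per$ as their value. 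Therefore $\phi\big((s+h,k)\big)=\phi\big((h,k)\big)$ holds for \emph{every} $(h,k)\in\ZZxZZ$.

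Then I would bring in \requation*{auto-morp-final}, which here reads $\phi^{\mo}(F_\Ac)=F_{\pascal}=R\times\ZZ[\ord]$: thus $(s+h,k)$ is final in $\pascal$ if and only if $(h,k)$ is, and since $k$ is arbitrary this amounts to $h\in R \iff h+s\in R$ in $\ZZ$, i.e. $R=R+s$ as subsets of $\ZZ$. Iterating this equality shows that $R$ is invariant under the cyclic subgroup $\langle s\rangle=\langle d\rangle$ of $\ZZ$, where $d=\gcd(s,\per)$; since $s\not\equiv 0~[\per]$ we have $0<d<\per$. Taking $R'=R\cap\otop[d]$, one then checks $R'+d\,\N=\EpR$, so $\EpR$ admits a strictly smaller period $d<\per$ --- contradicting its canonicity, which completes the proof.

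I expect the only genuinely substantive step to be the last one: realising that a nontrivial translation symmetry of the remainder set collapses the period and so clashes with the canonical form. The middle ingredients --- surjectivity of $u\mapsto(\val{u}\mod\per,\,\wlen{u}\mod\ord)$ onto $\ZZxZZ$, and the transfer of finality through $\phi$ --- are routine, and I would dispatch them briefly; the point to treat with a little care is that $R=R+s$ is an honest equality of finite sets, which legitimises the iteration that yields invariance under all of $\langle s\rangle$.
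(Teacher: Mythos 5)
Your proposal is correct and takes essentially the same route as the paper: translate $\phi$-equivalence of $(s,0)$ and $(0,0)$ into the statement that $R$ is invariant under translation by $s$ in $\ZZ$, then observe this contradicts the minimality of the period $\per$. The paper reaches the invariance more directly by following the $g$-transitions (\requation*{defi-g-pasc}), so that the computation stays within the $k=0$ slice, whereas you go through arbitrary words and \rlemma{pasc-delt-exte}; that is a little more than necessary but harmless. Your final step is spelled out more carefully than the paper's one-line ``$s$ is a period strictly smaller than~$p$'': you explicitly extract the smaller period $d=\gcd(s,\per)$ via $\langle s\rangle$-invariance, which is what the paper's terse phrase is tacitly relying on.
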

\begin{proof}
  %
%
  %
  It follows from \requation*{defi-g-pasc}, which defines the transition
  labelled by~$g$ in~$\pascal$, that
  \begin{equation}\lequation{phi-eq-s0-00-i} \forall h \in\N \quantsp 
    (s,0)\pathx{g^h}[\pascal] (s+h,0) \quad\quad\text{and}\quad\quad
    (0,0)\pathx{g^h}[\pascal] (h,0)
    \eqpnt\end{equation}
  For the sake of contradiction, let us assume
  that~$\phi\big((s,0)\big)=\phi\big((0,0)\big)$;
  since~$\phi$ is an automaton morphism it follows from \requation*{phi-eq-s0-00-i}  that 
  \begin{equation*} \forall h \in \ZZ \quantsp \phi\big((s+h,0)\big)=\phi\big((h,0)\big) \eqpnt\end{equation*}
  %

  Since an automaton morphism preserves final states,
  for every~${h\in \ZZ}$,~$(h,0)$ is final if and only if~$(h+s,0)$ is final.
  From the \rdefinition{pasc} of Pascal automata (\pdefinition{pasc}), a state is final if and only if its
  first component belongs to~$R$.
  Hence,
  \begin{equation*} \forall h \in \ZZ \quantsp  h\in R \iff h+s\in R\eqpnt\end{equation*}
  In other words,~$s$ is a period of \EpR* strictly smaller than~$p$, a
  contradiction.
\end{proof}

\begin{lem}\llemma{eq-phi-st-ht}
  Let~$(s,t)$ and~$(h,k)$ be two distinct elements of~$\Gp$.
  If~$t=k$,
  then ${\phi\big((s,t)\big) \neq \phi\big((h,k)\big)}$.
\end{lem}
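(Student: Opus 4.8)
The plan is to derive this lemma from the already-proved \rlemma{phi-eq-s0-00}, which is precisely its special case $t=k=0$. The mechanism is that $\pascal$ is a group automaton and $\phi$ is an automaton morphism, so reading a well-chosen word from the two states $(s,t)$ and $(h,k)$ carries them both into the fibre $\ZZ\times\set{0}$ while keeping them distinct, and there \rlemma{phi-eq-s0-00} forbids them from being $\phi$-equivalent. I argue by contraposition: assuming $\phi\big((s,t)\big)=\phi\big((h,k)\big)$, I show $(s,t)=(h,k)$.

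Since $t=k$ and the states are distinct, their first components differ, that is $s\not\equiv h~[\per]$. Because $\per$ is coprime with $\base$, the element $\base^t$ is invertible in $\ZZ$; let $r$ be the unique residue modulo $\per$ with $r\,\base^t\equiv -s~[\per]$. I then build a word $u\in\Abs$ with $\val{u}\equiv r~[\per]$ and $\wlen{u}\equiv -t~[\ord]$ as follows: take any word $w$ whose value is congruent to $r$ modulo $\per$ (e.g.\ the $\base$-representation of the least nonnegative such integer), then append to $w$ a suitable number (at most $\ord\mo$) of trailing $0$'s so that the resulting word $u$ has length congruent to $-t$ modulo $\ord$. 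By \requation*{eval-righ}, appending $0$'s does not change the value, so $u$ has both required properties.

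By \rlemma{pasc-delt-exte}, reading $u$ gives $(s,t)\ap u=(s+r\,\base^t,\,t+\wlen{u})=(0,0)$ and, with the same word, $(h,t)\ap u=(h+r\,\base^t,\,0)=(h-s,\,0)$. Using the morphism identity $\phi(q\ap u)=\phi(q)\ap u$ (a consequence of \requation*{auto-morp-trans} by induction on $u$) for both states, together with the assumption $\phi\big((s,t)\big)=\phi\big((h,t)\big)$, yields $\phi\big((0,0)\big)=\phi\big((h-s,0)\big)$. As $h-s\not\equiv 0~[\per]$, this contradicts \rlemma{phi-eq-s0-00} (applied with first component $h-s$), so in fact $s\equiv h~[\per]$ and the two states coincide.

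The only point requiring a little care is the construction of $u$: one must be able to prescribe its value modulo $\per$ and its length modulo $\ord$ independently, which is exactly what right-padding with $0$'s provides (it moves the length through every residue class modulo $\ord$ while fixing the value). Everything else is a routine unfolding of \rlemma{pasc-delt-exte} and of the definition of an automaton morphism; in particular, no use of the auxiliary letter $g$ or of backward transitions is needed here, and the hypothesis $t=k$ is used essentially — it is what lets a single word send both states into the fibre over $0$.
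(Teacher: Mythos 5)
Your proof is correct and follows essentially the same route as the paper's: apply \rlemma{pasc-delt-exte} with a common word that brings both states into the fibre $\ZZ\times\{0\}$, and then invoke \rlemma{phi-eq-s0-00}. The only difference is cosmetic: the paper uses the auxiliary letter $g$ to write the word as $0^{\ord-t}g^{-s}$ (the paper's $g^s$ appears to be a sign slip), whereas you build a genuine word over $\Ab$ directly from \requation*{eval-righ}; both choices realise the same element of $\Gp$.
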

\begin{proof}
  Let~$(s,t)$ and~$(h,k)$ be two distinct states of~$\pascal$
  such that~$t=k$.
  We write~$u=0^{\ord-t}\xmd g^s$.
  This word labels the two following paths:
  \begin{equation*}
    (s,t) \pathx{u} (0,0) \qquad\text{and}\qquad (h,k)\pathx{u} (h-s,\, 0) \eqpnt
  \end{equation*}
  Since~$(s,t)$ and~$(h,k)$ are distinct, we necessarily have that~$(h-s)\neq 0$.
  It follows from the previous equation, that if~$\phi\big((s,t)\big)$ and
  $\phi\big((h,k)\big)$ were equal, so would be~$\phi\big((0,0)\big)$ and
  $\phi\big((h-s,0)\big)$, a contradiction to~\rlemma{phi-eq-s0-00} above. 
\end{proof}

\begin{prop}\lproposition{pasc-quot-read-para}
  Let~$\Ac$ be the quotient of a Pascal automaton~$\smash{\pascal}$.
  \begin{subthm}
    \item The circuits induced by the letter~$g$ in~$\Ac$ are all of
    length~$\per$.
    \item The word~$g^r$ is accepted by~$\Ac$ if and only if~$r$ belongs
    to~$R$.
  \end{subthm}
\end{prop}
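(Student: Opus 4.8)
The plan is to pull both statements back to the Pascal automaton along the quotient morphism $\phi\colon\pascal\rightarrow\Ac$, using the explicit form of~$\pascal$ and of its $g$-transitions; I may assume $\phi$ is a strict quotient, since if it is an isomorphism then $\Ac$ is a copy of~$\pascal$ and the two claims reduce to the computations below. The preliminaries are: the initial state of~$\Ac$ is $\phi\bpp{0,0}$ (by \requation*{auto-morp-init}, as $\pascal$ starts at $(0,0)$); a state of~$\pascal$ is final iff its $\phi$-image is final in~$\Ac$ (by \requation*{auto-morp-final}); and $\phi$ commutes with the auxiliary letter~$g$. This last point holds because $g$ is realised, in both automata, by the single word $w=1\xmd0^{\ord\mo}$ over~$\Ab$: this is \requation*{defi-g-pasc} for~$\pascal$, and for~$\Ac$ it follows from \requation*{gg} using that $\Ac$ is a group automaton (\rcorollary{quot-pasc-grou}) and that $0^{\ord}$ acts trivially on~$\Ac$ (it does on~$\pascal$ and $\phi$ is onto); being a morphism for all words over~$\Ab$, $\phi$ then satisfies $\phi(q\cdot g)=\phi(q)\cdot g$ for each state $q$ of~$\pascal$.

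For~(i), I would first describe the $g$-orbits in~$\pascal$: by \requation*{defi-g-pasc}, $g$ fixes the second component and $(s,t)\cdot g^{h}=(s+h\xmd b^{t},t)$, so since $\gcd(b,\per)=1$ makes $b^{t}$ invertible modulo~$\per$, the $g$-orbit of $(s,t)$ is the whole ``row'' $\ZZ\times\set{t}$ and has exactly $\per$ elements. Now let $q$ be any state of~$\Ac$ and pick $(s,t)$ with $\phi\bpp{s,t}=q$. Because $\phi$ commutes with~$g$ and $g$ induces a permutation of the state set (on both sides), $\phi$ maps the $g$-orbit of $(s,t)$ \emph{onto} the $g$-orbit of~$q$; and \rlemma{eq-phi-st-ht} says that $\phi$ is injective on $\ZZ\times\set{t}$, hence on that orbit. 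Therefore the $g$-orbit of~$q$ has exactly $\per$ elements, i.e.\ the $g$-circuit through~$q$ has length~$\per$. The one delicate point is precisely here: applying $\phi$ to the circuit $(0,0)\cdot g^{\per}=(0,0)$ of~$\pascal$ alone would only show that the $g$-circuits of~$\Ac$ have length \emph{dividing}~$\per$, and excluding the proper divisors is exactly what \rlemma{eq-phi-st-ht} --- ultimately \rlemma{phi-eq-s0-00} together with the minimality of the period~$\per$ --- is used for.

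For~(ii), I would compute the run of $g^{r}$ from the initial state: iterating \requation*{defi-g-pasc} from $(0,0)$ gives $(0,0)\cdot g^{r}=(r,0)$ in~$\pascal$ (first component read modulo~$\per$), and by \rdefinition{pasc} this state is final exactly when $r\in R$. Transferring through~$\phi$, the run of $g^{r}$ from the initial state $\phi\bpp{0,0}$ of~$\Ac$ ends at $\phi\bp{(0,0)\cdot g^{r}}=\phi\bpp{r,0}$, which is final in~$\Ac$ iff $(r,0)$ is final in~$\pascal$, that is iff $r\in R$; hence $g^{r}$ is accepted by~$\Ac$ if and only if $r\in R$. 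The remaining verifications are routine bookkeeping along the morphism.
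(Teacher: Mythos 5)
Your proof is correct and follows essentially the same route as the paper's: for (i), both arguments take the length-$p$ $g$-circuit in $\pascal$, observe that its $\phi$-image covers the $g$-circuit through any given state of $\Ac$, and then invoke \rlemma{eq-phi-st-ht} (ultimately \rlemma{phi-eq-s0-00} and the minimality of $p$) to rule out collapse to a proper divisor of $p$; for (ii), both compute the run of $g^r$ in $\pascal$, read off finality from the definition of $R\times\ZZ[\ord]$, and transfer across the quotient. (Incidentally, your explicit preliminary that $\phi$ commutes with $g$, and your computation that $g^r$ lands at $(r,0)$ rather than $(0,r)$, are slightly more careful than the paper's wording, but the substance is identical.)
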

\begin{proof}\hfill
\begin{enumerate}[(i),left=-1em]

  \item
  Let~$k$ be an element of~$\ZZ[\psi]$.
  The $g$-circuit of~$\pascal$ that contains the state~$(0,k)$ is
  \begin{equation*} (0,k)
  \pathx{g}[\pascal] (p^k, k)
  \pathx{g}[\pascal] (2\xmd p^k, k)
  \pathx{g}[\pascal] \cdots
  \pathx{g}[\pascal] ((n\mo)\xmd p^k, k)
  \pathx{g}[\pascal] (0,k)
  \eqpnt
  \end{equation*}
  The image of this circuit by~$\phi$ is:
  \begin{equation*}  \phi\big((0,k)\big)
  \pathx{g}[\pascal] \phi\big((p^k, k)\big)
  \pathx{g}[\pascal] \cdots
  \pathx{g}[\pascal]  \phi\big(((n\mo)\xmd p^k, k)\big)
  \pathx{g}[\pascal]  \phi\big((0,k)\big)
  \eqpnt
  \end{equation*}
  Since~$\phi$ is not necessarily injective, this last circuit might not be
  simple.
  In this case it would hold~$\phi((ip^k,k))=\phi((jp^k,k))$ for
  some distinct~$i,j\in\ZZ$,
  a contradiction to \rlemma{eq-phi-st-ht} above.
  Since every $g$-circuit of~$\Ac$ is necessarily
  the image of a~$g$-circuit of~$\pascalp$, item (i) holds.

  \smallskip

  \item
  The run of the word~$g^r$ ends in~$\pascalp$ the state~$(0,r)$
  which by definition is a final state if and only if~$r$ belongs to~$R$.
  Since~$\Ac$ is a quotient of~$\pascalp$, they accept the same language.
  Thus,~$g^r$  is accepted
  by~$\Ac$ if and only if it is accepted by~$\pascalp$, concluding the
  proof.
  \qedhere
\end{enumerate}
\end{proof}

\noindent
Next, we give a method to characterise the automaton
morphism~$\phi:\pascal \rightarrow \Ac$ with data observable in~$\Ac$.
Indeed, the morphism is entirely determined by the class of $\phi$-equivalence
of the state~$(0,0)$ of~$\pascal$ and in particular by the element~$(h,k)$ of
this class such that~$k$ is the smallest but still positive.
This $\phi$-equivalence class is characterised by the following lemma;
it is a consequence of the definition of the letter~$g$ in~$\pascal$.

\begin{lem}\llemma{equi-0-gsot}
  Let~$(s,t)$ be in~$\Gp$.
  The run of the word~$g^s\xmd0^t$ in~$\Ac$ reaches the initial states
  if and only if~$\phi((s,t))=\phi((0,0))$.
\end{lem}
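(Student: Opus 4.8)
The plan is to reduce the whole statement to one explicit computation of a run in the Pascal automaton $\pascal$, and then transport that run along the morphism $\phi$. The first thing I would check is that $\phi$ is still an automaton morphism once the auxiliary letter $g$ has been adjoined to both automata. By \requation{defi-g-pasc} the $g$-transitions of $\pascal$ are those induced by the word $1\xmd 0^{\ord\mo}$; since the action of $0$ has order $\ord$ on $\pascal$ (it sends $(s,t)$ to $(s,t\po)$), this coincides with the action of $1\xmd 0^{\mo}$, which is exactly how $g$ is defined in $\Ac$ by \requation{gg}. Because $\pascal$ and $\Ac$ are group automata (\rlemma{pasc-grou} and \rcorollary{quot-pasc-grou}), the action of $0$ is a bijection on the states of each, so $\phi$ commutes with $0^{\mo}$ precisely because it commutes with $0$; hence $\phi$ also commutes with the $g$-transitions, i.e.\ $\phi$ is a morphism over the enlarged alphabet $\Ab\cup\set{0,g}$.

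Next I would compute the run of $g^s\xmd 0^t$ in $\pascal$ starting from its initial state $(0,0)$. By \requation{defi-g-pasc}, reading $g$ from a state $(j,0)$ yields $(j+\base^0,\,0)=(j\po,\,0)$, so the prefix $g^s$ leads from $(0,0)$ to $(s,0)$. Then, by \requation{pasc-tran}, reading $0$ from a state $(s,t')$ yields $(s+0\cdot\base^{t'},\,t'\po)=(s,\,t'\po)$, so the suffix $0^t$ leads from $(s,0)$ to $(s,t)$. Hence the run of $g^s\xmd 0^t$ in $\pascal$ ends in the state $(s,t)$.

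Finally, since $\phi$ is a morphism over $\Ab\cup\set{0,g}$ and $\phi\big((0,0)\big)$ is the initial state of $\Ac$ (morphism axiom \requation{auto-morp-init}), the run of $g^s\xmd 0^t$ in $\Ac$ starts at $\phi\big((0,0)\big)$ and, by applying $\phi$ to the run just computed, ends at $\phi\big((s,t)\big)$. Therefore this run ends in the initial state $\phi\big((0,0)\big)$ of $\Ac$ if and only if $\phi\big((s,t)\big)=\phi\big((0,0)\big)$, which is exactly the claim.

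I do not expect a genuine obstacle here; the only delicate point — and the reason I would settle it first — is the bookkeeping ensuring that $\phi$ still commutes with transitions after introducing the letter $g$. Once that is in place, the lemma is a direct unfolding of \requation{defi-g-pasc} and \requation{pasc-tran}.
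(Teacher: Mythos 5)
Your proof is correct and is precisely the elaboration the paper leaves implicit: the paper only remarks that the lemma ``is a consequence of the definition of the letter~$g$ in~$\pascal$'' and gives no written proof. You organise the argument in the natural way — verify that $\phi$ remains a transition-commuting morphism after adjoining~$g$, compute the run of~$g^s\xmd 0^t$ in~$\pascal$ (ending at~$(s,t)$ by \requation*{defi-g-pasc} and \requation*{pasc-tran}), and push it forward by~$\phi$. The one point you rightly flag, that $\phi$ commutes with~$g$, is exactly where the content lies: since both $\pascal$ and $\Ac$ are group automata, $\phi$ commutes with $0^{\mo}$ because it commutes with~$0$, hence with~$g=1\xmd 0^{\mo}$; alternatively one can verify directly that $\phi\big((s,t)\big)\cdot 1 = \phi\big((s+\base^t,t)\big)\cdot 0$ and invoke \requation*{gg}. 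Either way the argument is sound and faithful to the paper's intent.
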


Let~$(h,k)$ be an element of~$\Gp$.
We denote by~$\lefthk$ the permutation of~$\Gp$ induced by the multiplication
 by~$(h,k)$ \textbf{on the left} (whereas~$\tauu$ corresponds to the multiplication
 by~$(\val{u},\wlen{u})$ on the right):
\begin{equation}\lequation{defi-lefthk}
  \forall (s,t) \in \Gp \quantsp
  \lefthk\big((s,t)\big)=(h,k)\gx (s,t)= (h+s\xmd \base^k,\, k+t)
  \eqpnt
\end{equation}
We moreover write~$\lefthksub$ the permutation resulting
from the projection of~$\lefthk$ to its first component,~$\ZZ$:
\begin{equation}\lequation{defi-lefthksub}
  \forall s \in \ZZ\quantsp
  \lefthksub (s)= h+s\xmd \base^k
  \eqpnt
\end{equation}
In the following we will always consider the permutations~$\lefthk$
and~$\lefthksub$ parametrised by a special element~$(h,k)$, called by abuse
of language \emph{the smallest state~$\phi$-equivalent to~$(0,0)$}, and
defined as the unique%
\footnote{
      Uniqueness is a consequence of
      \rlemma{eq-phi-st-ht}.
    }
element satisfying the two following conditions:
\begin{itemize}
  \item $\phi((h,k))=\phi((0,0))$;
  \item every element~$(s,t)\in\Gp$ such that~$(s,t)\neq (0,0)$
  and~$\phi((s,t))=\phi((0,0))$ necessarily meets~$k<t$.
\end{itemize}
%
%
%
The next lemma follows from definitions.

\begin{lem}\llemma{lefthk-stable}
  Every~$\phi$-equivalence class is stable by the permutation~$\lefthk$
  (in~$\Gp$).
\end{lem}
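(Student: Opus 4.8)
The plan is to prove the slightly stronger statement that $\phi\circ\lefthk=\phi$ as maps on $\Gp$ (viewed, as in the text, as the state set of $\pascal$ via the fixed identification); the stability of every $\phi$-equivalence class then follows immediately, since $\lefthk$ is a bijection of the finite set $\Gp$ and hence must send each fibre of $\phi$ onto itself.

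The crux is the remark that left multiplication by $(h,k)$ is realised by \emph{reading a word}. Concretely, for any word $u$ over $\Ab$ and any state $(s,t)$ of $\pascal$ one has $(s,t)\cdot u=(s,t)\gx(\val{u},\wlen{u})$, which is just \rlemma{pasc-delt-exte} rewritten with the convention that an integer denotes its residue class. So, given an element $(s,t)\in\Gp$, I would pick a word $u$ over $\Ab$ with $\val{u}\equiv s~[\per]$ and $\wlen{u}\equiv t~[\ord]$ (such a $u$ exists: take a base-$b$ representation of a representative of $s$ and pad it with enough trailing $0$'s). Then, on the one hand $(0,0)\cdot u=(s,t)$ by \rcorollary{pasc-corr}, and on the other hand $(h,k)\cdot u=(h,k)\gx(s,t)=\lefthk((s,t))$ by \requation*{defi-lefthk}. (Alternatively, one could use the word $g^s\xmd 0^t$ directly, as in \rlemma{equi-0-gsot}; this needs only the routine observation that $\phi$ commutes with the $g$-transitions, because $\Ac$ is a quotient of $\pascal$ and $0^{\ord}$ acts as the identity in both automata.)

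Given this, the verification is a one-line computation. Using that $\phi$ is an automaton morphism — so it commutes with reading $u$ and sends the initial state $(0,0)$ of $\pascal$ to the initial state of $\Ac$ — and that $(h,k)$ is by definition $\phi$-equivalent to $(0,0)$, I would write
\[
  \phi\big(\lefthk((s,t))\big)
  = \phi\big((h,k)\cdot u\big)
  = \phi\big((h,k)\big)\cdot u
  = \phi\big((0,0)\big)\cdot u
  = \phi\big((0,0)\cdot u\big)
  = \phi\big((s,t)\big)\eqpnt
\]
Hence $\phi\circ\lefthk=\phi$, and the lemma follows at once. I do not anticipate a real obstacle: as the paper indicates, the statement "follows from definitions", and the only points needing a word of justification are the existence of a word $u$ realising an arbitrary state of $\pascal$ from $(0,0)$ (equivalently, that $\pascal$ is reachable) and, if one goes the $g$-route, that $\phi$ respects $g$-transitions.
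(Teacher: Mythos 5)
Your proof is correct and takes essentially the same route as the paper: pick a word $u$ with $(\val{u},\wlen{u})=(s,t)$, then push $\phi$ through the chain $\lefthk((s,t))=(h,k)\gx(s,t)=(h,k)\cdot u$ and use that $\phi$ is a morphism with $\phi((h,k))=\phi((0,0))$. The only cosmetic difference is that you spell out the reachability of $\pascal$ and observe explicitly that $\lefthk$ being a bijection upgrades $\phi\circ\lefthk=\phi$ to stability of each fibre, both of which the paper leaves implicit.
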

\begin{proof}Let~$(s,t)$ be a state of~$\pascalp$ and~$u$
  a word such that~$(\val{u},\wlen{u})=(s,t)$.
  \begin{align*}
    \phi\big(\lefthk((s,t))\big)
        &= \phi\big((h,k) \gx (s,t)\big) \\
        &= \phi((h,k)) \ap u \\
        &= \phi((0,0)) \ap u \\
        &= \phi\big((0,0)\gx (s,t)\big) \\
        &= \phi((s,t))
    \tag*{\qedhere}
  \end{align*}
\end{proof}

\begin{rem}
  In~\cite{Mars16}, a statement stronger than \rlemma{lefthk-stable} is
  shown: the $\phi$-equivalence classes are in fact the orbits of~$\lefthk$.
\end{rem}


\subsubsection{Construction of the quotient}
\lsection{pasc-deci-Ahk}

We keep here the settings of \rsection{pasc-quot-prop}.
Namely,~$p$ denotes a positive integer coprime with~$b$,~$\Ac$ denotes
a strict quotient of~$\pascal$,~$\phi$ denotes the automaton
morphism~$\pascal\rightarrow \Ac$ and~$(h,k)$ denotes the smallest
state of~$\pascal$ that is~$\phi$-equivalent to~$(0,0)$.
The purpose of this section is to show that, given as input~$p,R,h$ and~$k$,
one can build$\Ac$ in linear time (with respect to the size of~$\Ac$).

\begin{defi}\ldefinition{Ahk}
  We denote by~$\Ahk$ the automaton
  \begin{equation*} \Ahk= \aut{\set{0,g},\, \Qhk,\, \dhk,\, (0,0),\, \Fhk } \eqvrg\end{equation*}
  where the state set is~$\Qhk=\ZZ\times\ZZ[k]$ (mind that the second operand of
  the Cartesian product is~$\ZZ[\underline{k}]$ and not~$\ZZ[\underline{\psi}]$);
  the final-state set is~$\Qhk=R\times\ZZ[k]$ (idem); the outgoing transitions
  of every state~$(s,t)\in\Q$ are defined as follows.
  {\arraycolsep=0pt\begin{equation*}
  \begin{array}{@{}r@{}l@{}}
    (s,t)\cdot 0 ~={~}&
      {
      \begin{cases}
          \; (s, t+1) & \text{if~~}  t < (k\mo) \\[.8ex]
          \;  \lefthk^{~~~\,\mo}\big((s,t+1)\big) = \displaystyle\left(\frac{s-h}{\base^{k}},\,0\right) & \text{if~~}t = (k\mo) \\
      \end{cases} }
      \\[5ex]
    (s,t)\cdot g ~={~}& (s+\base^{t},t)
  \end{array}
\end{equation*}}
\end{defi}

In the remainder of Section~\thesubsubsection{}, we show that the
automaton~$\Ahk$ is isomorphic to~$\Ac$ (\rtheorem{Ahk=A}).
The proof of this statement needs preliminary results.

\begin{lem}\llemma{phi-succ-pasc-ahk}
  Let~$(s,t)$ be an element of~$\Qhk$
  (hence both a state of~$\Ahk$ and of~$\pascal$).
  Let~$x$ be a letter of~$\set{0,g}$.
  We let~$(s',t')$ and~$(s'',t'')$ denote the successors
  of~$(s,t)$ by~$x$, respectively in~$\Ahk$ and in~$\pascal$.
  Then, as states of~$\pascal$,~$(s',t')$ and~$(s'',t'')$ are~$\phi$-equivalent.
\end{lem}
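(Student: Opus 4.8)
The plan is to prove the statement by splitting on the letter $x\in\set{0,g}$ and, in the case $x=0$, on the value of the second component $t$ of the state $(s,t)\in\Qhk$. In each situation I would first write down explicitly the successor $(s',t')$ in $\Ahk$ from \rdefinition{Ahk}, and the successor $(s'',t'')$ in $\pascal$ from \requation{pasc-tran} together with the definition of the $g$-action \requation{defi-g-pasc}, and then compare the two states.

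First I would dispatch the easy cases, in which the two successors turn out to be literally the same state of $\pascal$. If $x=g$, then $\Ahk$ sends $(s,t)$ to $(s+\base^{t},t)$, which is exactly the $g$-action on $\pascal$ prescribed by \requation{defi-g-pasc}. If $x=0$ and $t<k\mo$, then $\Ahk$ sends $(s,t)$ to $(s,t+1)$, while \requation{pasc-tran} with digit $0$ also sends it to $(s,t+1)$. In both cases $(s',t')=(s'',t'')$, hence these states are trivially $\phi$-equivalent; one checks along the way that the second component still lies in $\set{0,\ldots,k\mo}$, so that $(s',t')$ is genuinely a state of $\pascal$ under the inclusion $\Qhk\subseteq\ZZxZZ$.

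Then I would treat the one substantive case, $x=0$ with $t=k\mo$. Here \requation{pasc-tran} gives $(s'',t'')=(s,k)$, whereas \rdefinition{Ahk} gives $(s',t')=\lefthk^{\mo}\big((s,k)\big)=\big(\tfrac{s-h}{\base^{k}},0\big)$, whose second component $0$ again places it inside $\Qhk$. So the claim reduces to the single fact that $(s,k)$ and $\lefthk^{\mo}\big((s,k)\big)$ are $\phi$-equivalent, which is exactly where \rlemma{lefthk-stable} enters: every $\phi$-equivalence class is stable under the permutation $\lefthk$ of the finite group $\Gp$, hence --- being a bijection of a finite set --- stable under $\lefthk^{\mo}$ as well, so that $\lefthk^{\mo}\big((s,k)\big)$ lies in the same class as $(s,k)$.

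I do not expect a real obstacle: all the content sits in \rlemma{lefthk-stable}, and the rest is bookkeeping with the definitions. The two points needing a little care are that $\Qhk=\ZZ\times\ZZ[k]$ is really a subset of the state set $\ZZxZZ$ of $\pascal$ (\ie that $k\leq\ord\mo$, which holds because $(h,k)$ is by construction the smallest state $\phi$-equivalent to $(0,0)$ and $(0,\ord)=(0,0)$), and that $k\geq 1$, so that the residue $0$ used in the last case is a legitimate state of $\Ahk$ --- this follows from \rlemma{phi-eq-s0-00}, which forbids any nonzero state $(s,0)$ from being $\phi$-equivalent to $(0,0)$.
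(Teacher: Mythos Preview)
Your proposal is correct and follows essentially the same route as the paper: identify that the two successors coincide except in the case $x=0$, $t=k\mo$, and in that one case invoke \rlemma{lefthk-stable}. Your write-up is in fact slightly more careful than the paper's, which writes $\lefthk\big((s'',t'')\big)=(s',t')$ where it should read $\lefthk\big((s',t')\big)=(s'',t'')$ (equivalently $(s',t')=\lefthk^{\mo}\big((s'',t'')\big)$, as you have it); either direction suffices once \rlemma{lefthk-stable} is available.
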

\begin{proof}
  From the definitions of~$\Ahk$ and~$\pascal$, the only case where~$(s',t')$
  and~$(s'',t'')$ are not equal happens when~$a=0$ and~$t=k\mo$.
  In this case however, we have~$\lefthk((s'',t''))=(s',t')$.
  Applying \rlemma{lefthk-stable} concludes the proof.
\end{proof}

\begin{lem}\llemma{phi-eq-cap-qhk}
  Every~$\phi$-equivalence class contains exactly one state of~$\Qhk$.
\end{lem}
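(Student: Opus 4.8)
The plan is to show that the $\phi$-equivalence classes are exactly the right cosets of the class~$H$ of the state~$(0,0)$, and then simply to read off which second coordinates can occur in such a coset.

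\emph{Step 1 (coset structure).} Set $H=\{(s,t)\in\Gp : \phi((s,t))=\phi((0,0))\}$. First I would check that $H$ is a subgroup of~$\Gp$: if $(s,t),(s',t')\in H$ and $u$ is any word with $(\val{u}\bmod\per,\,\wlen{u}\bmod\ord)=(s',t')$, then $(s,t)\gx(s',t')=(s,t)\cdot u$, so, since $\phi$ is an automaton morphism, $\phi\big((s,t)\gx(s',t')\big)=\phi((s,t))\cdot u=\phi((0,0))\cdot u=\phi\big((0,0)\cdot u\big)=\phi((s',t'))=\phi((0,0))$. The same computation shows more generally that for any $(s_0,t_0)\in\Gp$ one has $\phi((s,t))=\phi((s_0,t_0))$ if and only if $(s,t)\gx(s_0,t_0)^{\mo}\in H$; here the forward implication uses in addition that the action of~$u$ on the states of~$\Ac$ is \emph{injective}, which holds because $\Ac$ is a group automaton (\rcorollary{quot-pasc-grou}). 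Consequently every $\phi$-equivalence class has the form $H\gx(s_0,t_0)$.

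\emph{Step 2 (second coordinates of~$H$).} Projecting $\Gp\cong\ZZxZZ$ onto its second component is a group morphism, so the set~$L$ of second coordinates of the elements of~$H$ is a subgroup of~$\ZZ[\ord]$; by \rlemma{eq-phi-st-ht} each value in~$L$ is the second coordinate of exactly one state of~$H$, and by \rlemma{phi-eq-s0-00} the only state of~$H$ with second coordinate~$0$ is~$(0,0)$. By its very definition $k$ is the least positive second coordinate of a state $\phi$-equivalent to~$(0,0)$, that is, the least positive element of~$L$; as $L$ is a subgroup of the cyclic group~$\ZZ[\ord]$ with least positive element~$k$, this forces $L=\langle k\rangle$ (the subgroup generated by~$k$) and in particular $k\mid\ord$.

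\emph{Step 3 (conclusion).} Let $C=H\gx(s_0,t_0)$ be an arbitrary class. By \requation{defi-gx}, the second coordinate of $(a,c)\gx(s_0,t_0)$ equals $c+t_0$, so the set of second coordinates occurring in~$C$ is precisely the coset $t_0+\langle k\rangle$ of~$\langle k\rangle$ in~$\ZZ[\ord]$. Since $k\mid\ord$, this coset contains exactly one element~$r$ with $0\leq r<k$; moreover $C$ does contain a state with second coordinate~$r$, and by \rlemma{eq-phi-st-ht} it contains at most one state with any fixed second coordinate. As the state set $\Qhk=\ZZ\times\ZZ[k]$ of~$\Ahk$ (\rdefinition{Ahk}) imposes no condition on the first coordinate, this says exactly that $C$ meets~$\Qhk$ in exactly one state.

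The main obstacle is Step~1: one must be careful to invoke injectivity of the right action \emph{on the states of~$\Ac$} — which is where the group-automaton property (\rcorollary{quot-pasc-grou}) enters — and not merely on~$\pascal$, where it is obvious. Note that \rlemma{lefthk-stable} only gives stability of each class under~$\lefthk$, not that a class is a single $\lefthk$-orbit, so it does not by itself pin down the second-coordinate coset; the coset computation above is what does.
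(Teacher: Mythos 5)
Your proof is correct and reaches the goal by a more explicitly group-theoretic route than the paper's. The paper argues existence by descent: take the state of~$C$ with the least second coordinate, and if it is~$\geq k$, apply~${\lefthk}^{\!\mo}$ (via \rlemma{lefthk-stable}) to decrease that coordinate strictly, a contradiction; uniqueness is argued by producing, from two states in~$\Qhk\cap C$, a state $\phi$-equivalent to~$(0,0)$ with second coordinate strictly between~$0$ and~$k$, contradicting the minimality of~$k$. You instead establish outright that the $\phi$-classes are exactly the right cosets of the stabiliser~$H$ of~$(0,0)$, identify the second-coordinate projection of~$H$ as the subgroup~$\langle k\rangle\subseteq\ZZ[\ord]$, and read off that each class's second coordinates form a coset of~$\langle k\rangle$ meeting~$\{0,\dots,k\mo\}$ exactly once. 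This is a few lines longer than the paper's argument, but it makes explicit the group structure underlying the paper's two minimality arguments, and gives~$k\divides\ord$ as a useful by-product. One minor remark: the forward implication of your coset characterisation does not in fact require injectivity of the right action on the states of~$\Ac$; taking~$u$ with~$(\val{u},\wlen{u})=(s_0,t_0)^{\mo}$, the morphism property alone gives $\phi\big((s,t)\gx(s_0,t_0)^{\mo}\big)=\phi((s,t))\cdot u=\phi((s_0,t_0))\cdot u=\phi\big((s_0,t_0)\cdot u\big)=\phi((0,0))$, so your appeal to \rcorollary{quot-pasc-grou} there is harmless but unnecessary.
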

\begin{proof}
  Existence.
  We denote by~$C$ any~$\phi$-equivalence class and~$(s,t)$ its smallest
  element (when ordered by second component); \rlemma{eq-phi-st-ht}
  ensures that~$(s,t)$ is well defined.
  If~$t\geq k$, then~${\lefthk}^{\!\mo}((s,t))$ is equal to~$(s',\,t-k)$ for
  some~$s'$ and it holds that~$0\leq t-k<t$.
  From \rlemma{lefthk-stable},~$(s',\,t-k)$ is
  moreover~$\phi$-equivalent to~$(s,t)$, a contradiction to the choice
  of~$(s,t)$.
  Hence~$t<k$ and~$(s,t)\in\Qhk$.

  \smallskip

  Uniqueness. Ab Absurdo.
  Let~$(s,t)$ and~$(s',t')$ two distinct and~$\phi$-equivalent states of~$\pascal$
  such that~$0\leq t,\,t' <k$.
  From \rlemma{eq-phi-st-ht},~$t$ and~$t'$ are not equal; we assume without loss
  of generality that~$t<t'$, hence it holds that~$0< t'-t<k$.
  The state~$(s',t')\gx {(s,t)}^{\mo}$ is $\phi$-equivalent to~$(0,0)$
  and equal to~$(s'',\,t'-t)$ for some~$s''$, a contradiction to the definition of~$(h,k)$
  as the smallest state~$\phi$-equivalent to~$(0,0)$.
\end{proof}

Now, we establish that $\Ahk$ is isomorphic to~$\Ac$.

\begin{thm}\ltheorem{Ahk=A}
  Let~$\pascal$ be a Pascal automaton and~$\Ac$ a non-trivial
  quotient of~$\pascal$.
  We write~$\phi$ the automaton morphism~$\pascal\rightarrow\Ac$.
  Among the states $\phi$-equivalent but not equal to~$(0,0)$,
  we denote by~$(h,k)$ the state with the smallest second component.
  Then, the automaton~$\Ac$ is isomorphic to~$\Ahk$.
\end{thm}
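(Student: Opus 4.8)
The plan is to exhibit the isomorphism directly: it will be the restriction of the morphism~$\phi\colon\pascal\rightarrow\Ac$ to the state set of~$\Ahk$. First I would record that~$1\leq k<\ord$. Indeed~$k\neq0$, for otherwise~$(h,0)$ would be a state of~$\pascal$ distinct from~$(0,0)$ and yet~$\phi$-equivalent to it, contradicting \rlemma{phi-eq-s0-00}; and~$k<\ord$ simply because the second component of a state of~$\pascal$ lies in~$\ZZ[\ord]=\set{0,\ldots,\ord\mo}$. Consequently the state set~$\Qhk=\ZZ\times\ZZ[k]$ of~$\Ahk$ is a subset of the state set~$\ZZxZZ$ of~$\pascal$, and one may consider
\begin{equation*}
  \theta=\phi_{|\Qhk}\ \colon\ \Qhk\longrightarrow Q_{\Ac}\eqpnt
\end{equation*}

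Then I would check, in four short steps, that~$\theta$ is an isomorphism of automata over~$\set{0,g}$. (i)~$\theta$ is a bijection: by \rlemma{phi-eq-cap-qhk} every~$\phi$-equivalence class meets~$\Qhk$ in exactly one state, which is precisely injectivity of~$\theta$, and surjectivity follows from that of~$\phi$. (ii)~$\theta$ preserves the initial state, since~$\phi\bpp{0,0}$ is the initial state of~$\Ac$ because~$\phi$ is a morphism, and~$(0,0)$ is the initial state of~$\Ahk$. (iii)~$\theta$ preserves the final/non-final status: a state~$(s,t)\in\Qhk$ is final in~$\Ahk$ iff~$s\in R$, which by \rdefinition{pasc} is exactly the condition for~$(s,t)$ to be final in~$\pascal$, and the latter is equivalent to~$\phi\bpp{s,t}$ being final in~$\Ac$ since~$\phi$ is a morphism. (iv)~$\theta$ commutes with the transitions labelled by~$0$ and by~$g$: given~$(s,t)\in\Qhk$ and~$x\in\set{0,g}$, let~$(s',t')$ and~$(s'',t'')$ be the successors of~$(s,t)$ by~$x$ in~$\Ahk$ and in~$\pascal$ respectively; \rlemma{phi-succ-pasc-ahk} gives~$\phi\bpp{s',t'}=\phi\bpp{s'',t''}$, whence
\begin{equation*}
  \theta\bp{(s,t)\cdot x}=\phi\bpp{s',t'}=\phi\bpp{s'',t''}=\phi\bpp{s,t}\cdot x=\theta\bpp{s,t}\cdot x\eqpnt
\end{equation*}
Finally, since the action of every letter~$a\in\Ab$ in~$\Ac$ is that of~$g^a\xmd0$ (Property~\rproperty*{pasc-quot-a-from-g}), and~$\Ahk$ is completed over~$\Ab$ by the same rule, the bijection~$\theta$ is in fact an isomorphism~$\Ahk\rightarrow\Ac$.

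I do not expect a genuine obstacle here: the content of the statement has essentially been front-loaded into Lemmas~\rlemma*{phi-eq-cap-qhk} and~\rlemma*{phi-succ-pasc-ahk}. The only point that needs attention is step~(iv) in the single case where the definition of~$\Ahk$ departs from that of~$\pascal$, namely the transition by~$0$ from a state whose second component is~$k\mo$, which in~$\Ahk$ is post-composed with~$\lefthk^{\mo}$; this is exactly the case handled by \rlemma{phi-succ-pasc-ahk} through the fact (\rlemma*{lefthk-stable}) that~$\lefthk$ stabilises~$\phi$-equivalence classes. A secondary bookkeeping point is the passage between the alphabets~$\set{0,g}$ and~$\Ab$, which is settled by Property~\rproperty*{pasc-quot-a-from-g}.
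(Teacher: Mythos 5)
Your proof is correct and is essentially the paper's proof written in the opposite direction: where you take~$\theta=\phi_{|\Qhk}\colon\Qhk\rightarrow Q_\Ac$, the paper constructs its inverse~$\xi\colon Q_\Ac\rightarrow\Qhk$ sending~$q$ to the unique element of~$\phi^{\mo}(q)\cap\Qhk$, and both arguments rest on exactly \rlemma{phi-eq-cap-qhk} and \rlemma{phi-succ-pasc-ahk}. One micro-nit: surjectivity of~$\theta$ really uses the \emph{existence} half of \rlemma{phi-eq-cap-qhk} (each fibre of~$\phi$ meets~$\Qhk$), not merely surjectivity of~$\phi$, but since you cite that lemma in the same breath this is a phrasing slip rather than a gap.
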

\begin{proof}
  We define the function~$\xi$.
  \begin{equation*}\begin{array}{lccl}
    \xi: & Q_\Ac &\longrightarrow& \Qhk \\
         & q     &\longmapsto&     \text{the unique state of~}\phi^{\mo}(q)\cap \Qhk
  \end{array}\end{equation*}
  \rlemma{phi-eq-cap-qhk} yields that~$\xi$ is well defined.
  Since the inverse images by~$\phi$ of states of~$\Ac$ are disjoint,~$\xi$ is injective.
  It is also surjective since every state~$(s,t)$ of~$\Qhk$ is the image  by~$\xi$ of~$\phi((s,t))$.
  It remains to show that~$\xi$ is an automaton morphism~$\Ac\rightarrow \Ahk$.
  The state~$(0,0)$ is necessarily mapped by~$\phi$ to~$i_\Ac$, the initial state of~$\Ac$,
  and belongs to~$\Qhk$ hence~$\xi(i_\Ac)=(0,0)$ which is the initial state of~$\Ahk$.
  Similarly,~$\phi$ preserves the final and non-final status of states hence so does~$\xi$.
  Finally,  let~$q\pathx{a} q'$ be a transition of~$\Ac$
  and let us show that~$\xi(q) \pathx{a} \xi (q')$ in~$\Ahk$.
  We denote by~$(s',t')$ and~$(s'',t'')$ the
  successors of~$\xi(q)$ by~$x$ in~$\Ahk$ and~$\pascal$, respectively.
  Since~$\xi(q)$ belongs to~$\phi^{\mo}(q)$ and since~$\phi$ is a
  morphism~$(s'',t'')$ belongs to~$\phi^{\mo}(q')$.
  Then, \rlemma{phi-succ-pasc-ahk} implies that~$(s',t')$ belongs
  to~$\phi^{\mo}(q')$ as well.
  Since~$(s',t')$ also belongs to~$\Qhk$, it holds that~$\xi(q')=(s',t')$.
\end{proof}

\subsubsection{Decision algorithm}
\lsection{pasc-deci-algo}

Let~$\Ac=\aut{Q,\Ab,\delta,i,T}$ be an automaton fixed in the following.
We will describe here an algorithm to decide whether~$\Ac$ is the quotient
of a Pascal automaton.

%
\paragraph{\trianglebullet~Step 0 (Requirements)}%
Every quotient of a Pascal automaton is necessarily a group automaton
(\rcorollary{quot-pasc-grou}) and necessarily accepts by value.
It may be verified in linear time whether~$\Ac$ satisfies these two conditions.
If it does not, reject~$\Ac$.
Moreover, we allow to take transitions (labelled by~0)
backwards; computing these transitions may be done in one traversal of~$\Ac$.
\paragraph{\trianglebullet~Step 1 (Simplification)}%
Let~$B$ be the alphabet~$\set{0,g}$.
Let us compute an automaton~$\Ac'$ over~$B$.
First, the automaton~${\Ac=\aut{Q,\Ab,\delta,i,F}}$, whose alphabet is~$\Ab$,
is transformed in the automaton~${\Bc=\aut{Q,\Ab\cup B, \delta', i, F}}$,
by adding transitions labelled by~$g$: the transition~$s\pathx{g} s'$ is added in~$\Bc$
if and only if~$s\pathx{1\xmd0^{\mo}} s'$ exists in~$\Ac$.
Second, we ensure that no information is lost in the simplification process.
From \rproperty{pasc-quot-a-from-g}, if the automaton~$\Ac$ is the quotient
of a Pascal automaton, the following equation necessarily
holds (if it does not, reject~$\Ac$): %
\begin{equation}\lequation{veri-tran-cons}
  \forall s\in Q \quantvrg \forall a \in\Ab \quantsp s\cdot a =s\cdot (g^a\xmd 0)~~~\text{in automaton }\Bc
\end{equation}
Verifying that this equation is satisfied requires to run one test for every
letter~$a$ and every state~$s$, that is one test for each transition of~$\Ac$.
It is then sufficient that each test is executed in constant time in
order for the general verification of \requation*{veri-tran-cons} to be run
in linear time.
Keeping intermediary results allows to comply to this condition.
Third, we delete from~$\Bc$ the transitions labelled by digits other
than~$0$ or~$g$ and denote the result by~$\Ac'$.

\begin{figure}[p]
  \centering%
  \includegraphics[scale=\AutScale]{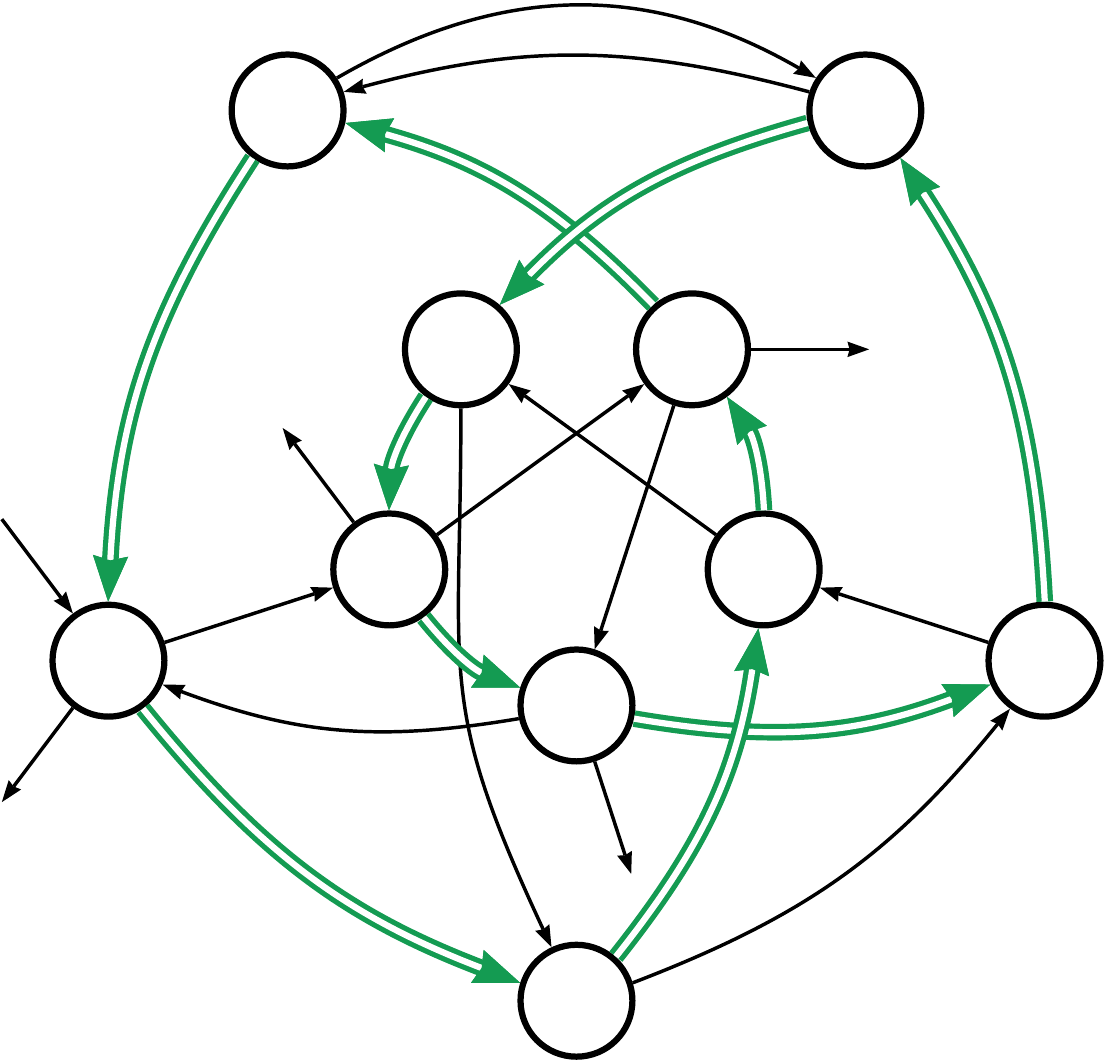}
  \caption{The simplified automaton~${\Ac_3\!}'$}
  \lfigure{pascquot_exam_0}
\end{figure}

\begin{figure}[p]

  \hspace*{-.05\linewidth}\begin{minipage}{.55\linewidth}%
    \centering
    \includegraphics[scale=\AutScale]{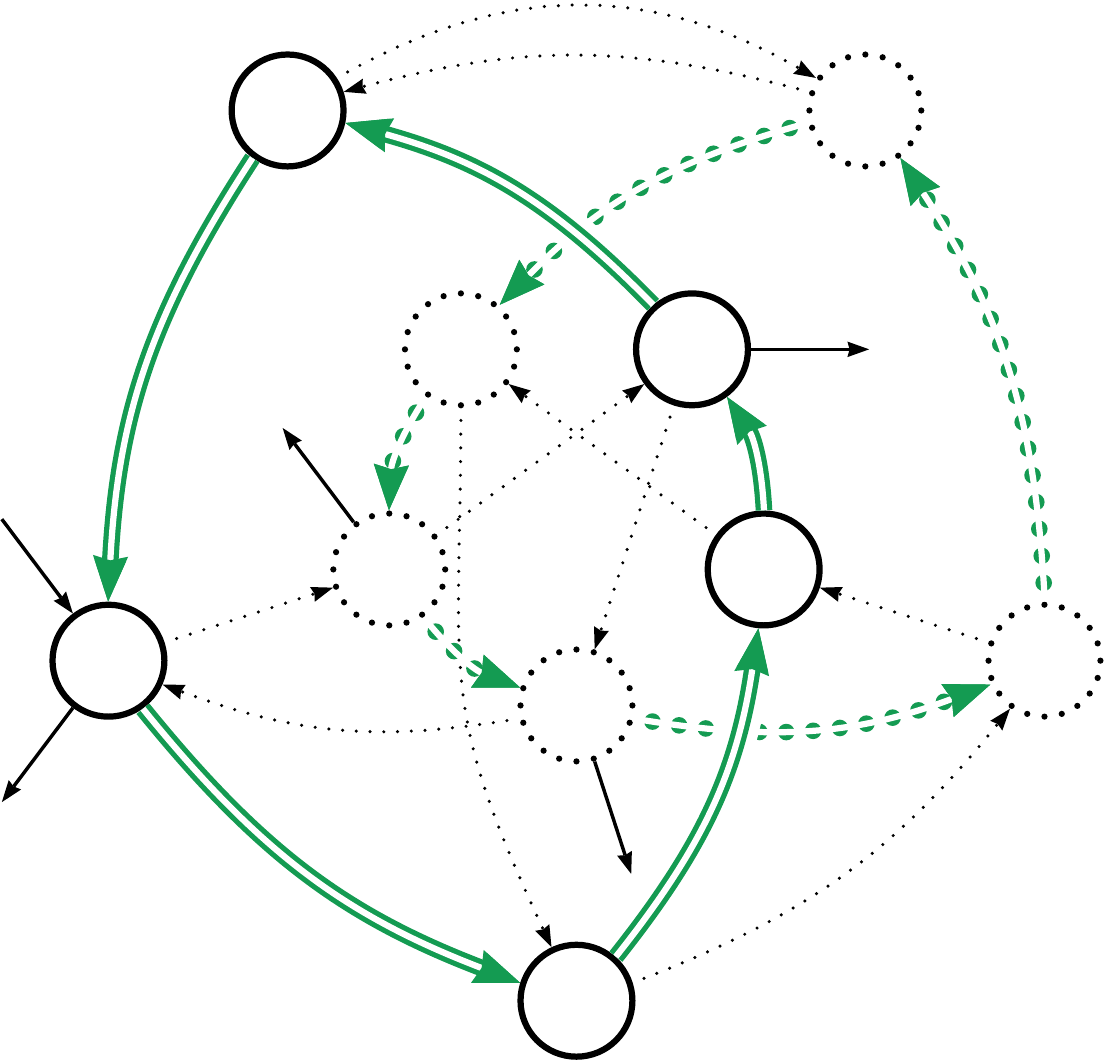}%
    \captionof{figure}{The~$g$-circuit in~${\Ac_3\!}'$ containing the initial state}
    \lfigure{pascquot_exam_1}
  \end{minipage}\hspace*{-.05\linewidth}\hfill%
  \begin{minipage}{.45\linewidth}
    \includegraphics[scale=\AutScale]{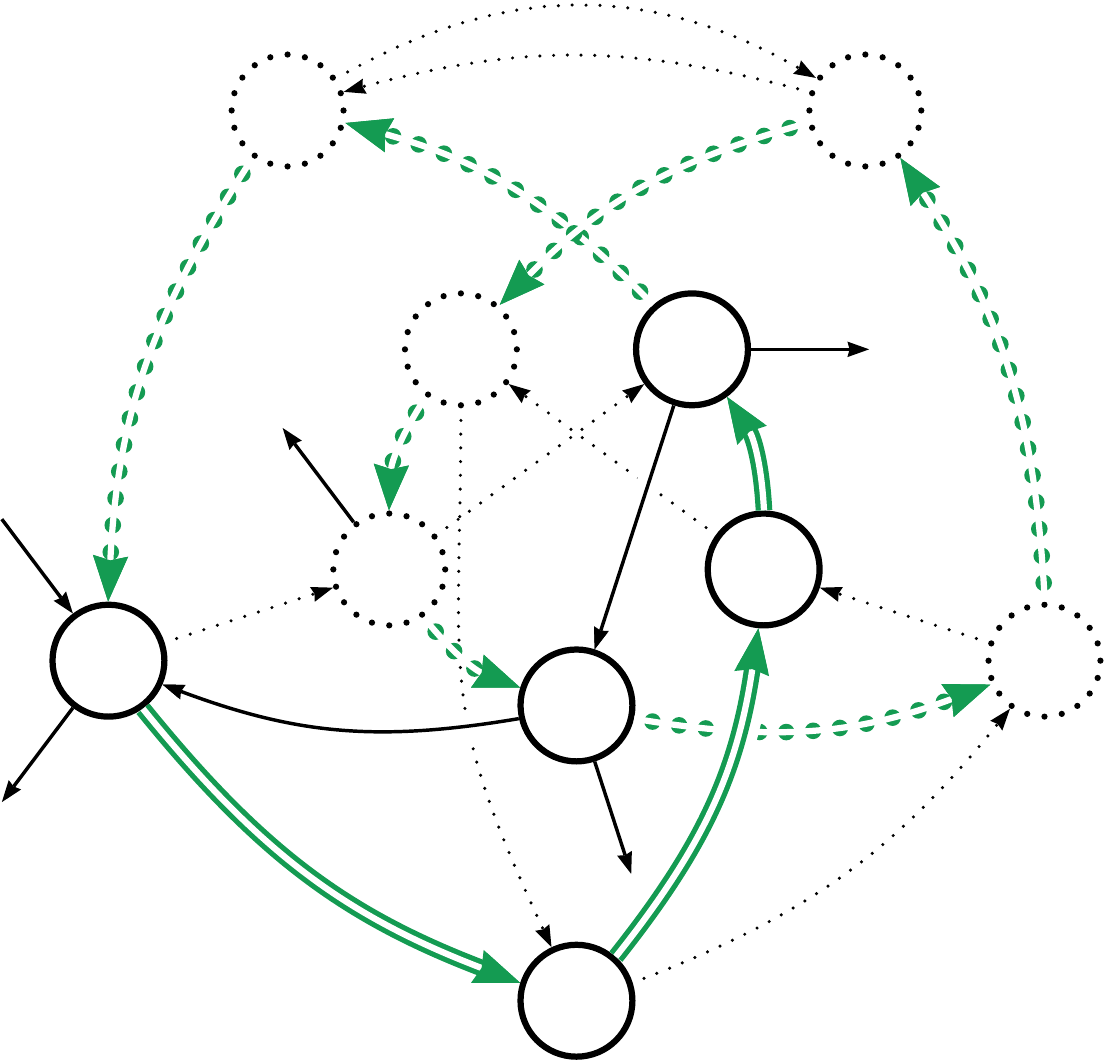}
    \captionof{figure}{The smallest mixed circuit in~${\Ac_3\!}'$}
    \lfigure{pascquot_exam_2}
  \end{minipage}\
  \hfill

\end{figure}

\begin{table}[p] \centering
    \tabcolsep=1ex%
    \begin{tabular}{|lrcl|} 
    \hline 
      \Tstrut{Base}    & $\base$ &=& 3 \\
      Period & $\per$ &=&5 \\
      Remainder set & $R$ &=&$\set{0,3}$ \\
      Parameter of the quotient  & $(h,k)$ &=&$(3,2)$ \\
      \Bstrut{Order of~$\base$ in the group~$(\ZZ[\per],\times)$} & $\psi$ &= & 4 \Bstrut{} \\
    \hline 
    \end{tabular}
    \caption{Summary of the parameters}
  \ltable{pascquot_param}
\end{table}

\begin{table}[p]\centering
    \begin{tabular}{r|rcl|l} 
      \cline{2-4}
      \Tstrut(i) & $(s,0)\cdot 0$ &= & $(s,1)$ &\\
      \Bstrut(ii) & $(s,1)\cdot 0$ &= & $(4\xmd s - 2,1)$ & \;{=} $(\frac{s-h}{p^k},0)$ \\
      \cline{2-4}
      \Tstrut(iii) & $(s,0)\cdot g$ &= & $(s+1,0)$ & \;{=} $(s+p^0,0)$ \\
      \Bstrut(iv) & $(s,1)\cdot g$ &= & $(s+3,1)$ & \;{=} $(s+p^1,1)$ \\
      \cline{2-4}
    \end{tabular}
  \caption{Transition function of~$\Ahk$}
  \ltable{tran-tab}
\end{table}

\begin{runex}
We consider an automaton~$\Ac_3$ over the alphabet~$A_3=\set{0,1,2}$, hence
the base in~$b=3$.
\rfigure{pascquot_exam_0} shows the simplified
automaton~${\Ac_3\!}'$.
(We did not include a representation of~$\Ac_3$ because it has 30 transitions.)
\end{runex}

\paragraph{\trianglebullet~Step 2 (Analysis)}
For the whole step 2, we assume that~$\Ac'$
is the quotient of a Pascal automaton~$\pascalp$, in order to
compute~$\per$,~$R$,~$\phi$ and~$(h,k)$.
(If it is not the case, these parameters have no meaning and~$\Ac$ will be
rejected during Step~3.)
We first use \rproposition{pasc-quot-read-para} to compute~$\per$ and~$R$:
\begin{itemize}
  \item $\per$ is the length of the~$g$-circuit containing the initial state;
  \item $R$ is the set of the exponents~$r$ such that~$g^{r}$ is accepted
  by~$\Ac'$.
\end{itemize}
The order~$\ord$ of~$\per$ in~$(\ZZ,{\times})$ is computed in the usual way.
The parameter~$(h,k)$ of the quotient is computed thanks to \rlemma{equi-0-gsot}:
we look for the \emph{mixed circuit}~$g^s\xmd 0^t$ with the smallest positive~$t$;
then we write~$(h,k)=(s,t)$.

\begin{runex}
\rfigure{pascquot_exam_1} highlights the~$g$-circuit containing the
initial state.
It has length 5 (as have all other~$g$-circuits), hence~$\per=5$ and final
states are at index~0 and~3, hence~$R=\set{0,3}$.
\rfigure{pascquot_exam_2} shows the mixed circuit with
the smallest number of~$0$'s (and in this case it is the only one).
Since it is labelled by the word~$g^3\xmd 0^2$, the parameter of the
quotient is~$(h,k)=(3,2)$.
\rtable{pascquot_param}  sums up all relevant parameters.
\end{runex}

\begin{figure}[p]%
  \vspace{-1em}
  \centering
  \begin{subfigure}[t]{0.41\textwidth}
    \centering
    \includegraphics[scale=\AutScale]{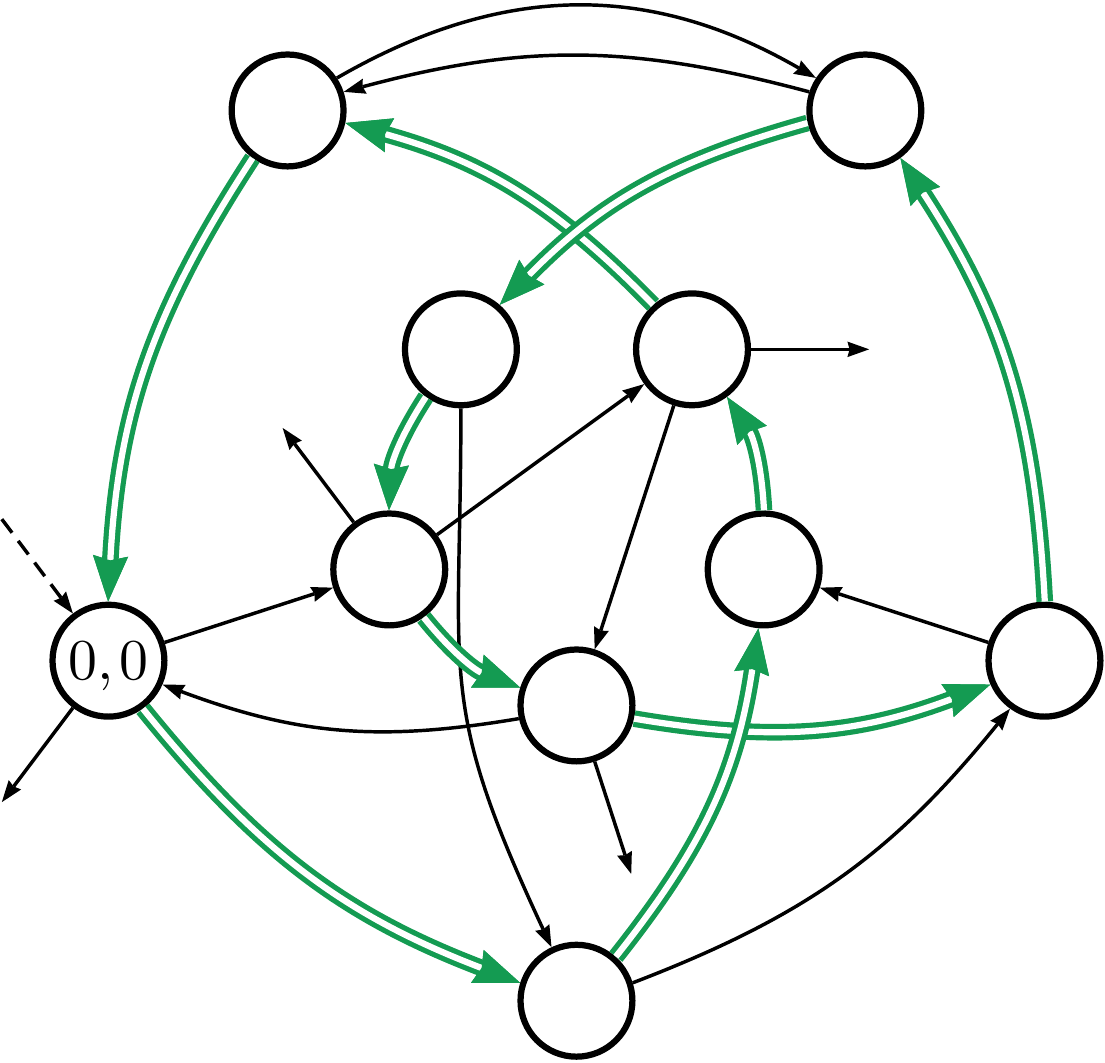}
    \caption{The initial state is coloured by~$(0,0)$}
  \end{subfigure}\hspace{15mm}
  \begin{subfigure}[t]{0.41\textwidth}
    \centering
    \includegraphics[scale=\AutScale]{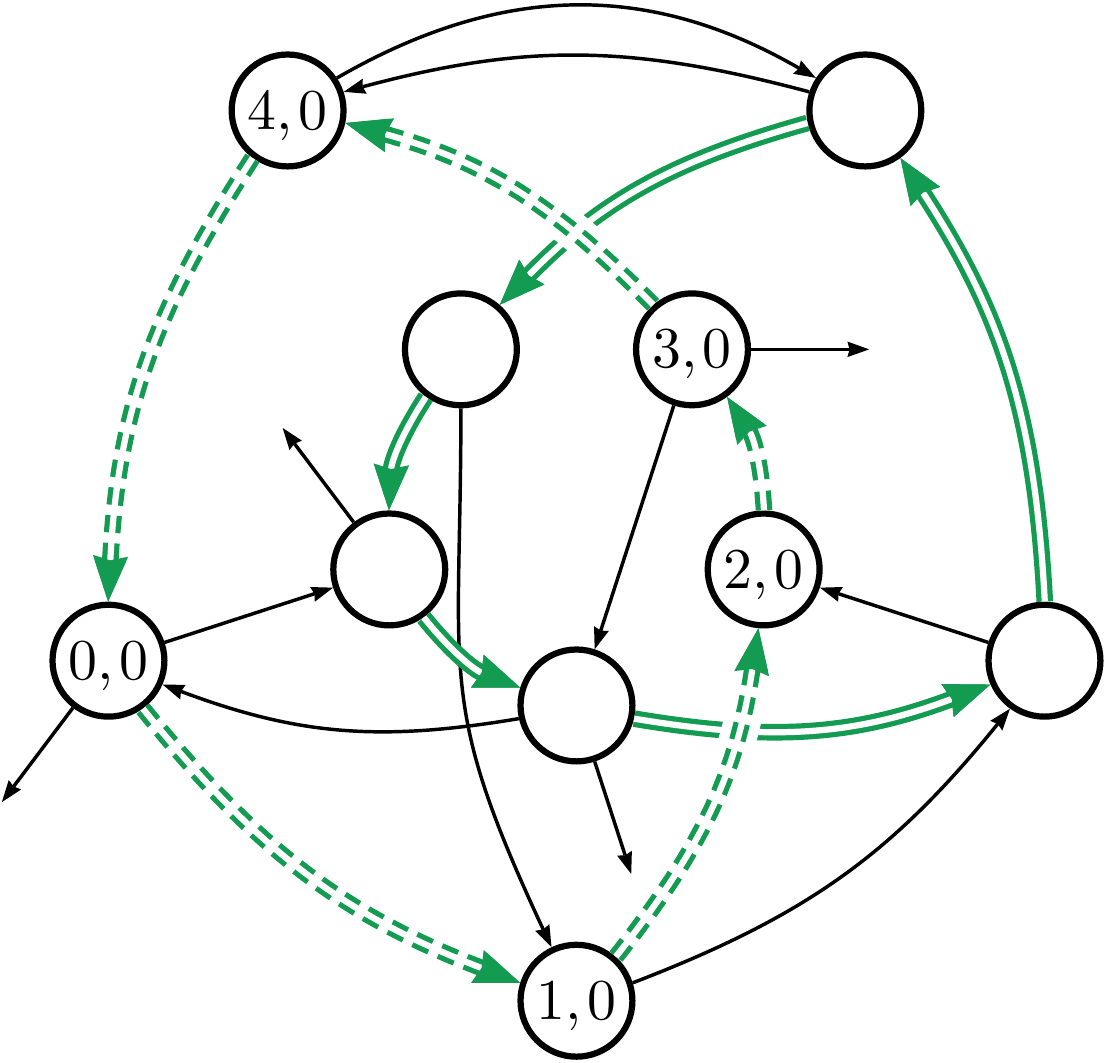}
    \caption{Applying rule (iii) of \rtable{tran-tab}: \\$(s,0)\pathx{g} (s+1,0)$}
  \end{subfigure}

  \vspace{0.5em}
  \begin{subfigure}[t]{0.41\textwidth}
    \centering
    \includegraphics[scale=\AutScale]{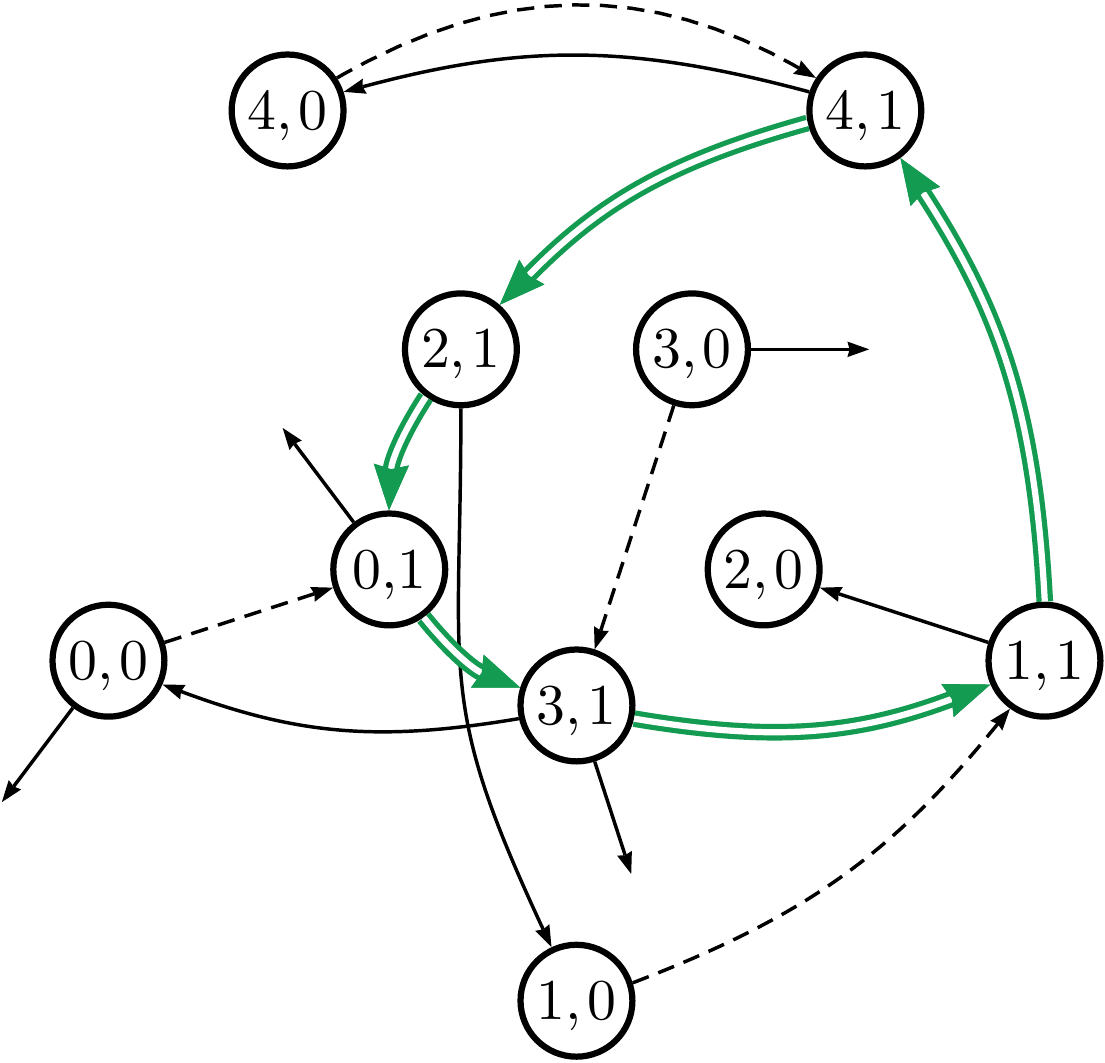}
    \caption{Applying rule (i) of \rtable{tran-tab}: \\$(s,0)\pathx{g} (s,1)$}
  \end{subfigure}\hspace{15mm}
  \begin{subfigure}[t]{0.41\textwidth}
    \centering
    \includegraphics[scale=\AutScale]{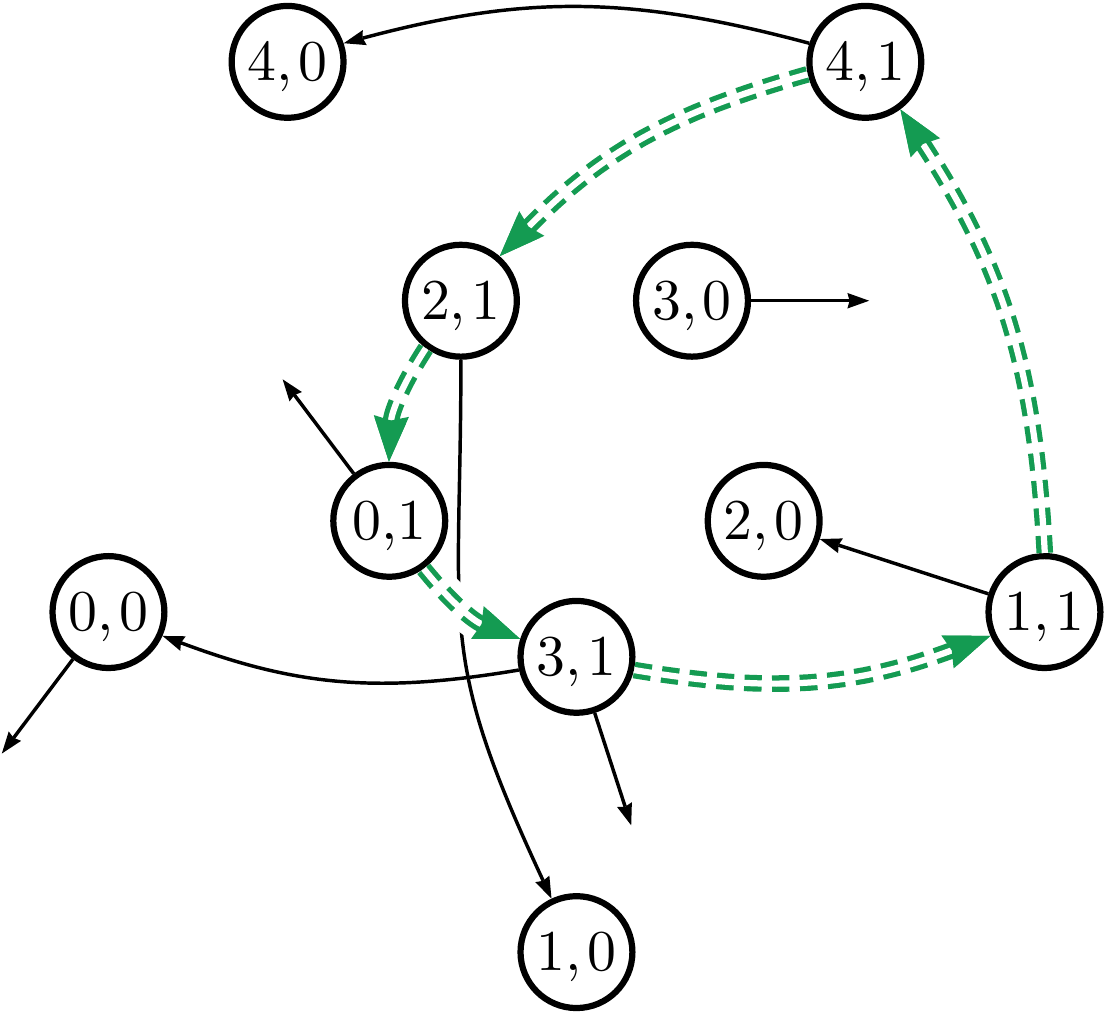}
    \caption{Applying rule (iv) of \rtable{tran-tab}: \\$(s,1)\pathx{g} (s+3,1)$}
  \end{subfigure}

  \vspace{0.8em}
  \begin{subfigure}[t]{0.41\textwidth}
    \centering
    \includegraphics[scale=\AutScale]{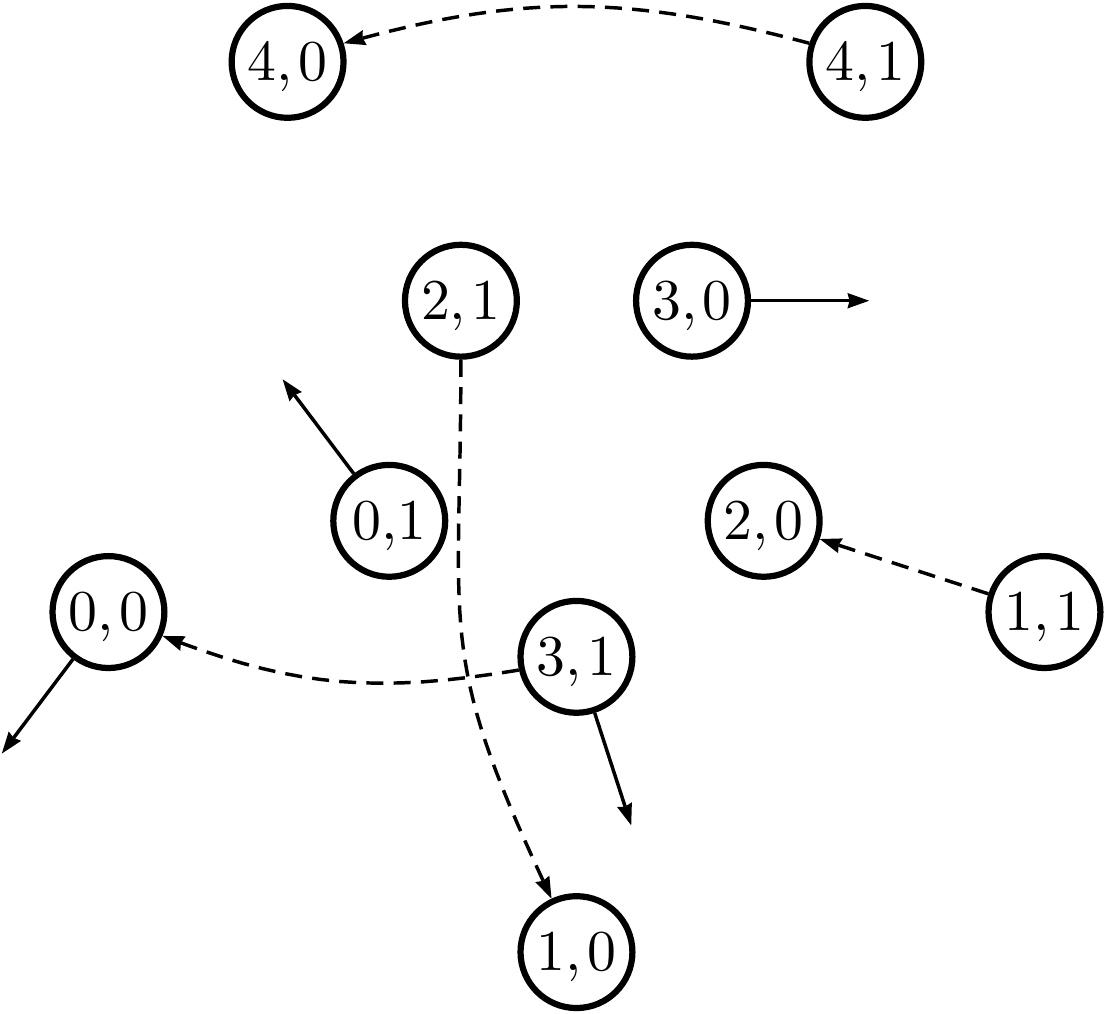}
    \caption{Applying rule (ii) of \rtable{tran-tab}: \\$(s,1)\pathx{0} (4s-2,0)$}
  \end{subfigure}\hspace{15mm}
  \begin{subfigure}[t]{0.41\textwidth}
    \centering
    \includegraphics[scale=\AutScale]{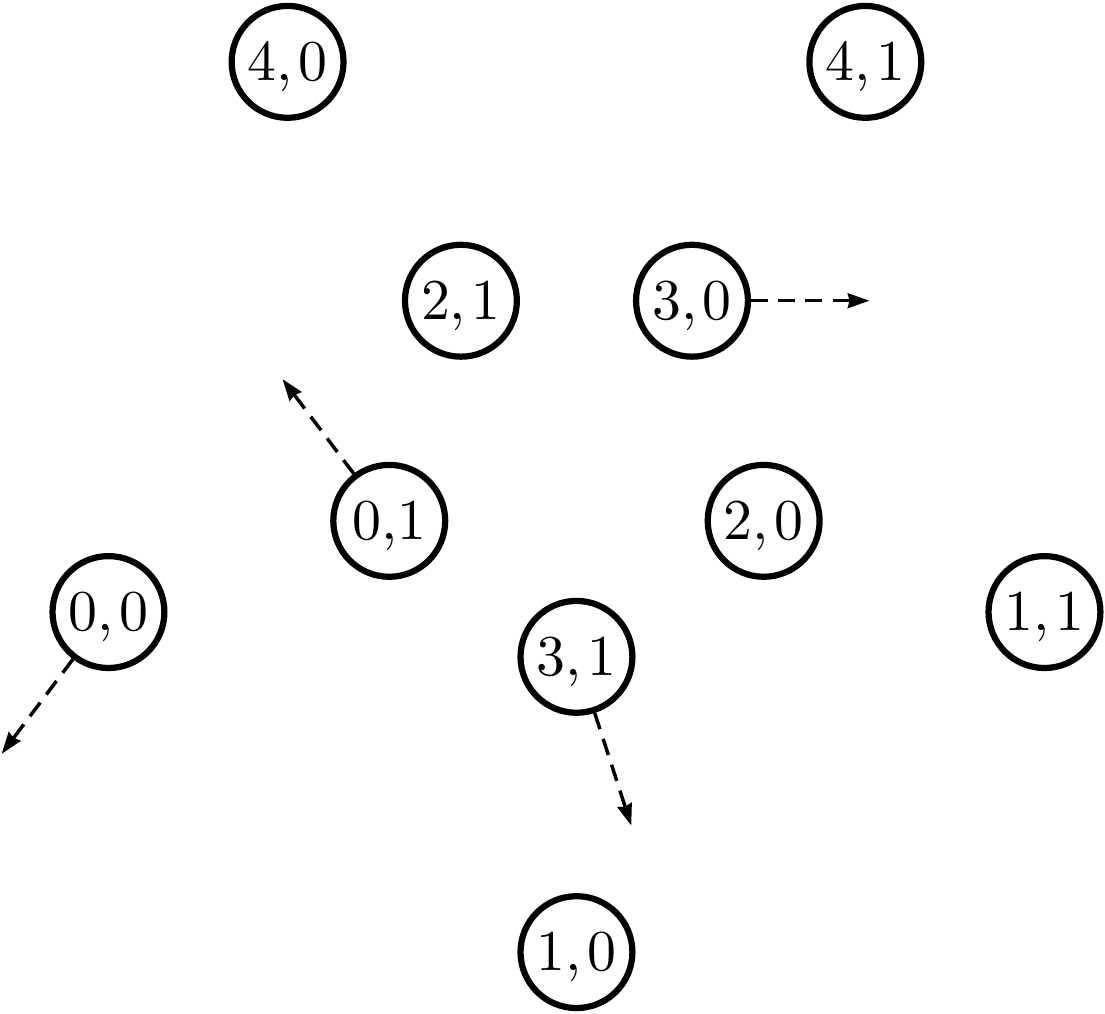}
    \caption{Verifying that~$(s,t)$ is final if and only if $s\in R=\set{0,3}$}
  \end{subfigure}
  \caption{Verifications}\lfigure{verif}
\end{figure}

\paragraph{\trianglebullet~Step 3 (Verifications)}
From \rtheorem{Ahk=A}, if~$\Ac$ is the quotient of a Pascal automaton,
it is isomorphic to~$\Ahk$.
Hence, build~$\Ahk$ using~\rdefinition{Ahk} and tests isomorphism
to~$\Ac'$ with a simple traversal.
%



\begin{runex}
  \rtable{tran-tab} gives the transition function of the
  automaton~$\Ac_{(3,2)}$ (\cf~\rdefinition{Ahk}).
  \rfigure{verif} shows the verification process, the isomorphism is built
  by visiting each transition of~$\Ac'$ and colouring the visited state by
  the corresponding state of~$\Ac_{(3,2)}$.
\end{runex}

\subsection{Linear algorithm to solve Problem~\ref{pb.line-comp}}
\lsection{line-comp-gene}

\begin{thm}\ltheorem{up-line}
  Let~$\Ac$ be a minimal automaton with~$n$ transitions.
  It can be decided in time~$\bigo{b\xmd n}$ whether~$\Ac$ satisfies \upconds.
\end{thm}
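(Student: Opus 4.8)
The plan is to turn the structural characterisation of \rtheorem{UP-stru-char} into a bounded collection of purely local inspections of $\Ac$ and of its component graph, each of which runs in time linear in the portion of the automaton it touches. First I would compute $\cond{\Ac}$, together with the strongly connected component of each state, in time $\bigo{b\xmd n}$ using \rtheorem{tarj}. Condition \ref{up.succ-0} is then checked in one pass over the states --- it merely asks that $q$ and $q\cdot 0$ always share their final/non-final status --- while \ref{up.min} holds by hypothesis, a minimal automaton being in particular complete. A further pass over the transitions classifies every non-trivial component as being of type two (a simple circuit labelled only by $0$) or of type one, and records, for each component, its list of descendants in $\cond{\Ac}$; trivial components are not constrained by \upconds and need no separate treatment.

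For condition \ref{up.type1} I would process each type-one component $D$ separately. The key observation is that if $\Ac$ belongs to~\UP then $D$ is necessarily \emph{terminal} in $\cond{\Ac}$: by \rlemma{atom-eq-scc}, a \UPatom automaton is a group automaton, hence each state of $D$ has an internal outgoing transition for every letter, and determinism of $\Ac$ then leaves no room for a transition from $D$ to any other component. Accordingly, the algorithm rejects $\Ac$ as soon as some type-one component has a descendant. Otherwise, fix any $q$ in $D$ --- legitimate by \rproperty{atom-chan-init} --- and form the automaton $\Bc_D=\aut{\Ab,D,q,\delta_{|D},F\cap D}$. Since $D$ is terminal, the residual language in $\Bc_D$ of any state of $D$ coincides with its residual language in $\Ac$; as $\Ac$ is minimal, $\Bc_D$ is therefore reachable, complete and reduced, hence itself minimal, so the linear-time decision procedure of \rsection{line-pasc} for \rproblem{eq-can} applies directly to it, at cost $\bigo{b\xmd|D|}$. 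Summing over the pairwise disjoint type-one components keeps the total within $\bigo{b\xmd n}$.

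For conditions \ref{up.type2} and \ref{up.type2-bis} I would process each type-two component $C$. As $C$ is a $0$-circuit, every transition carrying a positive digit leaves $C$; the algorithm gathers the set of components reached this way and checks, as \ref{up.type2} requires, that it is a singleton $\{D\}$ with $D$ of type one --- hence already certified \UPatom, in particular a group automaton. For the embedding demanded by \ref{up.type2-bis}, observe that completeness forces the shape of any embedding function $f:C\cup D\rightarrow D$: it must be the identity on $D$, and a state $c$ of $C$ must satisfy $f(c)\cdot a=c\cdot a$ for every positive digit $a$ (since then $c\cdot a\in D$) as well as $f(c\cdot 0)=f(c)\cdot 0$. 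Since $D$ is a group automaton, the first requirement for the single digit $1$ already pins down the only possible value $f(c)=(c\cdot 1)\cdot 1^{\mo}$, computed inside $D$ through the backward $1$-transitions (precomputed in one traversal); the algorithm then checks $f(c)\cdot a=c\cdot a$ for the remaining positive digits and $f(c\cdot 0)=f(c)\cdot 0$ around the whole circuit. This costs $\bigo{b\xmd|C|}$ per type-two component, hence $\bigo{b\xmd n}$ overall, and when every test succeeds the map $f$ is indeed an embedding function in the sense of \rdefinition{embe-func}.

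By \rtheorem{UP-stru-char}, the conjunction of the properties verified in the three previous steps is equivalent to $\Ac$ belonging to~\UP, and the individual running times add up to $\bigo{b\xmd n}$; this is exactly what \rproblem{line-comp} asks for. I expect the delicate point to be condition \ref{up.type1}: one must argue carefully that a type-one component of a member of~\UP is terminal and that, because the input is minimal, the sub-automaton it carries is again minimal --- so that the strongly-connected algorithm of \rsection{line-pasc}, which requires a minimal input and is the technical core of the construction, may be invoked, and invoked with an amortised cost that stays linear over the whole automaton.
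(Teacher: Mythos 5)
Your proof is correct and takes essentially the same approach as the paper's: compute $\cond{\Ac}$ via \rtheorem{tarj}, check \ref{up.succ-0} in a single pass, handle type-one components with the strongly connected procedure of \rsection{line-pasc}, inspect descendants for \ref{up.type2}, and for \ref{up.type2-bis} pin down the unique candidate embedding function through the digit~$1$ using the group-automaton structure of~$D$ and then verify it transition by transition. The one point on which you go beyond the paper's own proof is worth noting: the paper is content with invoking the \rsection{line-pasc} algorithm for \ref{up.type1}, but that algorithm certifies the \UPatom property only for a \emph{minimal} input, since \rproblem{pasc-quot} and \rproblem{eq-can} coincide only under the minimality hypothesis. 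You make explicit the chain of observations that justifies the invocation: a type-one component of an automaton in \UP must be terminal in $\cond{\Ac}$, because a group automaton has an internal outgoing transition by each letter from each state and determinism then forbids any escape; and terminality in a minimal $\Ac$ ensures that the restricted sub-automaton preserves residual languages, hence is reachable, complete and reduced, hence itself minimal. This fills in a detail the paper leaves implicit, and strengthens rather than changes the argument.
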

\begin{proof}
  A simple traversal is sufficient to check whether~$\Ac$ satisfies~\ref{up.succ-0}.
  Condition~\ref{up.min} is assumed to be satisfied by~$\Ac$.
  The verification of the other conditions require to compute the component graph
  of~$\Ac$; this can be done in time~$\bigo{b\xmd n}$ using
  classical algorithms (\rtheorem{tarj}).
  Verifying~\ref{up.type1} can be done in linear time thanks to the
  algorithm previously presented in \rsection{line-pasc}.
  Verifying~\ref{up.type2} requires a simple test for each of the
  affected \sccs.
  Finally, condition~\ref{up.type2-bis} can be verified in the following
  way.
  Let~$C$ be \ascc{} of type one and~$D$ the \scc of type two
  that descends from it.
  We then define the function~$f$ as follows; it is the only function that may
  realise an  embedding.
  Every state~$x$ in~$C$ is mapped to the unique state~$f(x)$ in~$D$ such
  that
  \begin{equation*}
    x \pathx{1} y \quad\quad\text{and}\quad\quad f(x) \pathx{1} y
  \end{equation*}
  (since~$D$ is a \UPatom automaton, it is a
  group automaton, hence~$y$ and~$f(x)$ are uniquely defined).
  Once~$f$ has been computed, checking whether~$f$ is an embedding function
  can be done in time~$\bigo{b\xmd n}$.
\end{proof}


\section{Conclusion and future work}

The main result of this article is stated again below.
It follows directly from Theorems~\rtheorem*{UP-stru-char}
and~\rtheorem*{up-line}, shown in Sections~\rsection*{UP}
and~\rsection*{line-comp-gene} respectively.

\begin{falsestatement}{\rtheorem{com-plx}}%
  \sttcomplx%
\end{falsestatement}

\begin{falsestatement}{\rcorollary{com-plxnonmin}}
  \sttcomplxnonmin%
\end{falsestatement}

These results almost close the complexity question raised by Honkala's problem, when
one writes representations LSDF\@.
Two improvements are natural: getting rid, in \rtheorem{com-plx},
either of the condition of minimality, or of the condition of determinism.
We are rather optimistic for a positive answer to the first one, by performing
some kind of \emph{partial} minimisation (which would run in linear time).
For instance, the algorithm given in \rsection{line-pasc} solves (a special case)
even if the input automaton is not quite minimal.
On the other hand, devising conditions similar to \upconds* for non-deterministic
automata seems to be much more difficult.
%


%
As for extensions, we are fairly confident that an approach similar to what we
do here can be used for non-standard numeration systems, or at least for a
family of U-systems to be identified.
It would also be interesting to find an equivalent of \upconds for automata
that accept rational subsets of~$\N^d$.
The same questions arise in the case where number representations are written with the
most significant digit first.
We are hopeful that some of them can be addressed by building
upon the recent work of Boigelot et al.~\cite{BoigEtAl17}.
%


\section*{Acknowledgments}
The author is very grateful to the reviewers for pointing out results that
simplified the proofs to a great extent.
The author also warmly thanks Jacques Sakarovitch for suggesting the subject of
this work, and for all the help he provided.
%
%
%

%


\nocite{BertRigo10-b}

\bibliographystyle{alpha}

\bibliography{bibliography,CANT,Alexandrie-abbrevs,Alexandrie-AC,Alexandrie-DF,Alexandrie-GL,Alexandrie-MR,Alexandrie-SZ}

\end{document}